\DeclareMathOperator*{\Bigcdot}{\scalerel*{\cdot}{\bigodot}}
\tikzset{cross/.style={cross out, draw=black, minimum size=2*(#1-\pgflinewidth), inner sep=0pt, outer sep=0pt},
%default radius will be 1pt. 
cross/.default={1pt}}
\newtheorem{claim}{Claim}[section]
\newtheorem{problem}{Problem}
\newcommand\tab[1][1cm]{\hspace*{#1}}
\renewcommand{\,}{\hspace{0.1cm}}
\newcommand{\Zd}{\ensuremath{\mathsf{PN}}}
\newcommand{\Zc}[1]{\textsf{{#1}-PN}}
\newcommand{\Zb}[2]{\textsf{{#1}-H{#2}PN}}
\newcommand{\Za}[1]{\textsf{H{#1}PN}}
\newcommand{\Ze}[2]{{{#1}-{#2}}}
\newcommand{\N}{\mathbb N}
\newcommand{\Arrow}[1]{\ce{->[#1]}}
\newcommand{\Rng}{\ensuremath\mathsf{Range}}
\newcommand{\term}{\textsc{Term}}
\newcommand{\cover}{\textsc{Cover}}
\newcommand{\reach}{\textsc{Reach}}
\newcommand{\dlf}{\textsc{DLFree}}
\newcommand{\Markings}[1]{\ensuremath{\mathsf{Markings}(#1)}}
\newcommand{\Deadlock}[1]{\ensuremath{\mathsf{Deadlock}(#1)}}
\newcommand{\NotFire}[2]{\ensuremath{\mathsf{NotFire}_{#1}(#2)}}
\newcommand{\cN}{\mathcal{N}}
\newcommand{\cM}{\mathcal{M}}
\title{On Petri Nets with Hierarchical Special Arcs}
\author[1]{S Akshay}
\affil[1]{Department of Computer Science and Engineering, IIT Bombay}
\author[1]{Supratik Chakraborty}
\author[2]{Ankush Das}
\affil[2]{Computer Science Department, Carnegie Mellon University}
\author[1]{\\Vishal Jagannath}
\author[1]{Sai Sandeep}
\authorrunning{S. Akshay, S. Chakraborty, A. Das, V. Jagannath, S. Sandeep}
\keywords{Petri Nets, Hierarchy, Reachability, Coverability, Termination, Positivity}
\begin{document}
\maketitle

\begin{abstract}
  We investigate the decidability of termination, reachability,
  coverability and deadlock-freeness of Petri nets endowed with a
  hierarchy on places, and with inhibitor arcs, reset arcs and
  transfer arcs that respect this hierarchy.  We also investigate what
  happens when we have a mix of these special arcs, some of which
  respect the hierarchy, while others do not. We settle the
  decidability status of the above four problems for all combinations
  of hierarchy, inhibitor, reset and transfer arcs, except the
  termination problem for two combinations.  For both these
  combinations, we show that the termination problem is as hard as
  deciding positivity for linear recurrent sequences --- a long-standing open problem.
\end{abstract}

\section{Introduction}
\label{sec:intro}
Petri nets are an important and versatile mathematical modeling
formalism for distributed and concurrent systems. Thanks to their
intuitive visual representation, precise execution semantics,
well-developed mathematical theory and availability of tools for
reasoning about them, Petri nets are used for modeling in varied contexts,
viz. computational, chemical, biological, workflow-related etc. Several extensions to Petri nets have been proposed in the literature
to augment their modeling power.  From a theoretical perspective,
these provide rich and interesting models of computation that warrant
investigation of their expressive powers, and decidability and/or
complexity of various decision problems.  From a practitioner's
perspective, they enable new classes of systems to be modeled and reasoned about.

In this paper, we focus on an important class of extensions proposed earlier for Petri nets, pertaining to the addition of three types of \emph{special} arcs, namely \emph{inhibitor}, \emph{reset} and \emph{transfer} arcs from places to transitions. We investigate how different combinations of these extensions affect the decidability of four key decision problems: reachability (whether a given marking can be reached), coverability, (whether a marking can be covered), termination, (whether the net has an infinite run) and deadlock-freeness (whether the net can get to a marking where no transition is fireable). To start with, an inhibitor-arc effectively models a zero test, and hence with two inhibitor arcs one can model two-counter machines, leading to undecidability of all of the above decision problems. However, Reinhardt~\cite{reinhardt} showed that if we
impose a \emph{hierarchy} among places with inhibitor arcs (a
single inhibitor arc being a sub-case), we recover decidability
of reachability. Recently, Bonnet~\cite{bonnet} gave a simplified proof of this using techniques of Leroux~\cite{leroux} and also showed that termination and coverability are decidable for Petri nets with hierarchical
inhibitor arcs. With reset arcs (which remove all tokens from a pre-place) and
transfer arcs (which transfer all tokens from a pre-place to a
post-place), reachability and deadlock-freeness are known to be
undecidable~\cite{transfer}, athough termination and coverability are
decidable~\cite{finkel}.

Our interest in this paper lies in asking what happens when
\emph{hierarchy} is introduced among all combinations of special arcs. Thus, we
specify a hierarchy, or total ordering, among the places and say that
the special arcs respect the hierarchy if whenever there is a special
arc from a place $p$ to a transition $t$, there are also special arcs
from every place lower than $p$ in the hierarchy to $t$. The study of
Petri nets extended with hierarchical and non-hierarchical special
arcs provides a generic framework that subsumes several existing
questions and throws up new ones. While some of these classes and
questions have been studied earlier, there are still several classes
where nothing seems to be known about the decidability of the above decision
problems.

Decidability
of reachability for Petri nets with hierarchical inhibitor arcs was
shown in ~\cite{reinhardt,Bonnet11}, while decidability of
termination, coverability and boundedness was shown
in~\cite{bonnet}. Further, in~\cite{AG11} it was shown that Petri nets
with hierarchical zero tests are equivalent to Petri nets with a stack
encoding restricted context-free languages. Finally a specific
subclass, namely Petri nets with a single inhibitor arc, has received a lot
of attention, with results showing decidability of boundededness and
termination~\cite{FS10}, place-boundedness~\cite{BFLZ10}, and LTL
model checking~\cite{BFJZ12}. However, in~\cite{BFJZ12}, the authors
remark that it would not be easy to extend their technique for the
last two problems to handle hierarchical arcs. To the best of our
knowledge, none of the earlier papers address the mixing of reset and
transfer arcs within the hierarchy of inhibitor arcs, leaving several
interesting questions unanswered.
Our primary goal in this paper is to comprehensively fill these gaps.
Before we delve into theoretical investigations of these models, we
present two examples that illustrate why these models are
interesting from a practical point of view too.

Our first example is a prioritized job-shop environment in which work
stations with possibly different resources are available for servicing
jobs.  Each job comes with a priority and with a requirement of the
count of resources it needs.  For simplicity, assume that all
resources are identical, and that there is at most one job with any
given priority.  A work station can service multiple jobs
simultaneously subject to availability of resources; however, a job
cannot be split across multiple work stations.  Additionally, we
require that a job with a lower priority must not be scheduled on any
work station as long as a job with higher priority is waiting to be
scheduled.  Once a job gets done, it can either terminate or generate
additional jobs with different priorities based on some rules.  An
example of such a rule could be that a job with prioriy $k$ and
resource requirement $m$ can only generate a new job with priority
$\le k$ and resource requirement $\le m$.
Given such a system, there are several interesting questions one might
ask. For example, can too many jobs (above a specified threshold) of
the lowest priority be left waiting for a work station?  Or, can
the system reach a deadlocked state from where no progress can be
made?  A possible approach to answering these questions is to model
the system as a Petri net with appropriate extensions, and reduce the
questions to decision problems (such as coverability or
deadlockfree-ness) for the corresponding nets.  Indeed, it can be shown
that the prioritized job-shop environment can be modeled as a Petri net
with hierarchical inhibitor arcs and additional transfer/reset arcs
that do not necessarily respect the hierarchy.

Our second example builds on work reported in the literature on
modeling integer programs with loops using Petri nets~\cite{AAGM12}.
Questions pertaining to termination of such programs can be reduced to
decision problems (termination or deadlockfree-ness) of the
corresponding Petri net model.  In Section~\ref{sec:skolem}, we
describe a new reduction of the termination question for integer
linear loop programs to the termination problem for Petri nets with
hierarchical inhibitor and transfer arcs. This is one of the main
technical contributions of this paper, and underlines the importance
of studying decision problems for these extensions of Petri nets.

Our other main contribution is a comprehensive investigation into
Petri nets extended with a mix of these special arcs, some of which
respect the hierarchy, while others do not. We settle the decidability
status of the four decisions problems for \emph{all} combinations of
hierarchy, inhibitor, reset and transfer arcs, except the termination
problem for two combinations.  For these cases, we show a reduction
from the positivity problem~\cite{positivity1,positivity2}, a
long-standing open problem on linear recurrences. We summarize these
results in Section~\ref{sec:problem}, after introducing appropriate
notations in Section~\ref{sec:prelim}. Interestingly, several of the
results use distinct constructions and proof techniques, as detailed
in Sections~\ref{sec:hirpn}--\ref{sec:transf}. For the sake of clarity and ease of reading, we provide proof sketches of most of the results in the main body of the article and defer the detailed formal proofs to the appendix.

\section{Preliminaries}
\label{sec:prelim}
We begin by recalling some key definitions and fixing notations.

\begin{definition}
A Petri net, denoted \Zd, is defined as $(P,T,F,M_0)$, where $P$ is a
set of \emph{places}, $T$ is a set of \emph{transitions},
$M_0:P\rightarrow\N$ is the \emph{initial marking} and $F:(P\times
T)\cup(T\times P)\rightarrow \N $ is the \emph{flow relation}.
\end{definition}
Consider a petri net $N=(P,T,F,M_0)$.  For every $x\in P\cup T$, we define $Pre(x)=\{y\in P\cup T \mid F(y,x)>0\}, Post(x)=\{y\in P\cup T \mid F(x,y)>0\}$. For every $t\in T$, we use the following terminology: every $p\in
Pre(t)$ is a \emph{pre-place} of $t$, every $q\in Post(t)$ is a
\emph{post-place} of $t$, every arc $(p, t)$ such that $F(p,t)>0$ is a
\emph{pre-arc} of $t$, and every arc $(t, p)$ such that $F(t,p)>0$ is a
\emph{post-arc} of $t$.

A \emph{marking} $M: P\rightarrow \mathbb{N}$ is a function from the
set of places to non-negative integers. We say that a transition $t$
is \emph{firable} at marking $M$, denoted by $M\xrightarrow{t}$, if
$\forall p\in Pre(t),M(p)\geq F(p,t)$. If $t$ is firable at $M_1$, we
say that firing $t$ gives the marking $M_2$, where $\forall p\in P,
M_2(p)=M_1(p)-F(p,t)+F(t,p)$.  This is also denoted as
$M_1\xrightarrow{t} M_2$. We define the sequence of transitions
$\rho=t_1t_2t_3...t_n$ to be a \emph{run} from marking $M_0$, if there
exist markings $M_1,M_2,...,M_n$, such that for all $i$, $t_i$ is
firable at $M_{i-1}$ and $M_{i-1}\xrightarrow{t_i}M_i$.
Finally, we abuse notation and use $\leq$ to denote the component-wise
ordering over markings.  Thus, $M_1\leq M_2$ iff $\forall p\in
P,M_1(p)\leq M_2(p)$. A detailed account on Petri nets can be found in~\cite{murata}.

We now define some classical decision problems in the study of Petri nets. 
\begin{definition} Given a Petri net $N = (P, T, F, M_0)$,
\begin{itemize}
    \item \textsf{Termination} (or {\term}): Does there exist an
      infinite run from marking $M_0$?
    \item \textsf{Reachability} (or {\reach}): Given a marking $M$, is there
      a run from $M_0$ which reaches $M$?
    \item \textsf{Coverability} (or {\cover}): Given a marking $M$, is there     a  marking $M'\geq M$ which is reachable from $M_0$?
    \item \textsf{Deadlock-freeness} (or {\dlf}):  Does there exist a marking $M$
      reachable from $M_0$, such that no transition is firable at $M$?
\end{itemize}
\end{definition}
Since Petri nets are well-structured transition systems (WSTS), the
decidability of coverability and termination for Petri nets follows
from the corresponding results for WSTS~\cite{finkel}.  The
decidability of reachability was shown in~\cite{kosaraju}.
Subsequently, there have been several alternative proofs of the same
result, viz.~\cite{leroux}. Finally, since deadlockfreeness reduces to
reachability in Petri nets~\cite{dfresult}, all the four decision
problems are decidable for Petri nets.  In the remainder of the paper,
we concern ourselves with these decision problems for Petri nets extended with the following special arcs:

\begin{itemize}
    \item An \textsf{Inhibitor arc} from place $p$ to
      transition $t$ signifies $t$ is firable only if $p$ has zero
      tokens.
    \item A \textsf{Reset arc} from place $p$ to transition
      $t$ signifies that $p$ contains zero tokens after $t$ fires.
    \item A \textsf{Transfer arc} from place $p_1$ through transition $t$ to place $p_2$ signifies that on firing transition $t$, all tokens from $p_1$ get transferred to $p_2$.
\end{itemize}

For Petri nets with special arcs, we redefine the flow relation as
$F:(P\times T)\cup (T\times P)\rightarrow
\mathbb{N}\cup\{I,R\}\cup\{S_p \mid p\in P\}$, where $F(p,t)=I$
(resp. $F(p,t)=R$) signifies the presence of an inhibitor arc
(resp. reset arc) from place $p$ to transition $t$. Similarly, if $F(p,t)=S_{p'}$, then there is a transfer arc from place
$p$ to place $p'$ through transition $t$.

\section{Problem statements and main results}
\label{sec:problem}
We now formally define the various extensions of Petri nets studied in
this paper.  We also briefly review the current status (with respect
to decidability) of the four decision problems for these extensions of
Petri nets, and summarize our contributions.

We use \Zd{} to denote standard Petri nets, and \Zc{I}, \Zc{R}, \Zc{T}
to denote Petri nets with inhibitor, reset and transfer arcs,
respectively.  The following definition subsumes several additional
extensions studied in this paper.

\begin{definition}
A Petri net with \emph{hierarchical special arcs} is defined to be a $5$-tuple $(P, T, F, \sqsubseteq,
M_0)$, where $P$ is a set of places, $T$ is a set of transitions,
$\sqsubseteq$ is a total ordering over $P$ encoding the hierarchy,
$M_0: P \rightarrow \N$ is the initial marking, and $F: (P \times T)
\cup (T \times P) ~\rightarrow~ \N \cup\{I,R\}\cup\{S_p \mid p\in P\}$
is a flow relation satisfying
\begin{itemize}
\item $\forall (t,p)\in T\times P,~~F(t,p)\in \N$, and 
\item $\forall(p,t)\in P\times T,~~F(p,t)\not\in\N \implies \left(\forall q\sqsubseteq p,~F(q,t)\not\in\N\right)$
\end{itemize}
\end{definition}

Thus, all arcs (or edges) from transitions to places are as in
standard Petri nets.  However, we may have special arcs from places to
transitions. These can be inhibitor arcs ($F(p, t) = I$), reset arcs
($F(p, t) = R$), or transfer arcs ($F(p,t) = S_{p'}$, where $p$ and
$p'$ are places in the Petri net).  Note that all special arcs respect
the hierarchy specified by $\sqsubseteq$.  In other words, if there is
a special arc from a place $p$ to a transition $t$, there must also be
special arcs from every place $p'$ to $t$, where $p' \sqsubseteq p$.

Depending on the subset of special arcs that are present, we can define
sub-classes of Petri nets with hierarchical special arcs as follows.
In the following, ${\Rng}(F)$ denotes the range of the flow relation $F$.
\begin{definition}
  \label{def:pn-hsa}
  The class of Petri nets with hierarchical special arcs, where
  ${\Rng}(F) \setminus \N$ is a subset of $\{I\}, \{T\}$ or $\{R\}$ is
  called \Za{I}, \Za{T} or \Za{R}
  respectively.  Similarly, it is called \Za{IT},
  \Za{IR} or \Za{TR} if ${\Rng}(F) \setminus \N$ is
  a subset of $\{I, T\}, \{I, R\}$ or $\{T, R\}$ respectively.
  Finally, if ${\Rng}(F) \setminus \N$ is a subset of $\{I, R,
  T\}$, we call the corresponding class of nets \Za{IRT}.
\end{definition}

We also study generalizations, in which
extra inhibitor, reset and/or transfer arcs that do not respect the
hierarchy specified by $\sqsubseteq$, are added to Petri nets with
hierarchical special arcs.

\begin{definition}
  Let $\cN$ be a class of Petri nets with hierarchical special
  arcs as in Definition~\ref{def:pn-hsa}, and let $\cM$ be a
  subset of $\{I, T, R\}$.  We use
  \Ze{$\cM$}{$\cN$} to denote the class of nets
  obtained by adding unrestricted special arcs of type $\cM$
  to an underlying net in the class $\cN$.
\end{definition}

For example, $\Zb{R}{I}$ is the class of Petri nets with hierarchical
inhibitor arcs extended with reset arcs that need not respect the
hierarchy. Clearly, if the special arcs in every net $N\in\cN$ are
from $\cM$, the class \Ze{$\cM$}{$\cN$} is simply the class of Petri
nets with unrestricted (no hierarchy) arcs of type $\mathcal{M}$.
Hence we avoid discussing such extensions in the remainder of the
paper.

As we show later, all four decision problems of interest to us are either undecidable or not known to be decidable for \Za{IRT}. A slightly constrained version of \Za{IRT}, however, turns out to be much better behaved, motivating the following definition. 
\begin{definition}
The sub-class \Za{IRcT} is defined to be  \Za{IRT}
with the added restriction that $\forall (p, t, p')\in P\times T\times
P, ~F(p,t)=S_{p'}\implies \left(F(p',t)\in
\N\right)$.
\end{definition}

\subsection{Status of decision problems and our contributions}
  Table~\ref{summary} summarizes the decidability status of the four
  decision problems for some classes of Petri net extensions.  A
  \ding{51} denotes decidability of the corresponding problem, while
  \ding{55} denotes undecidability of the problem.  The shaded cells
  present results (and corresponding citations) already known prior to
  the current work, while the unshaded cells show results (and
  corresponding theorems) arising from this paper.  Note that the
  table doesn't list all extensions of Petri nets that were defined
  above.  This has been done deliberately and carefully to improve
  readability.  Specifically, for every Petri net extension that is
  not represented in the table, e.g., \Zb{R}{IT}, the status of all
  four decision problems are inferable from others shown in the table. These are explicitly listed out in Appendix~\ref{sec:app-summary}, where we also depict the relative expressiveness of these classes.  Thus, our work comprehensively addresses the four decision problems for all classes of Petri net extensions considered above. 

\begin{table}[t]
\begin{tabular}{|c|c|c|c|c|}
    \hline
     & {\term} & {\cover} & {\reach} & {\dlf} \\ 
    \hline
    \Zd & \cellcolor{lightgray} \ding{51} \cite{finkel} & \cellcolor{lightgray} \ding{51} \cite{finkel} & \cellcolor{lightgray} \ding{51} \cite{mayr,leroux} & \cellcolor{lightgray} \ding{51} \cite{dfresult,hack}\\
    \hline
    \Zc{R/T} & \cellcolor{lightgray} \ding{51} \cite{finkel} & \cellcolor{lightgray} \ding{51} \cite{finkel} & \cellcolor{lightgray} \ding{55} \cite{transfer} & \cellcolor{lightgray} \ding{55} [Red. from~\cite{transfer}]\\
    \Zc{I} & \cellcolor{lightgray} \ding{55} \cite{inhibitor} & \cellcolor{lightgray} \ding{55} \cite{inhibitor} & \cellcolor{lightgray} \ding{55} \cite{inhibitor} & \cellcolor{lightgray} \ding{55} \cite{inhibitor}\\
    \hline
    \Za{I} & \cellcolor{lightgray} \ding{51} \cite{reinhardt,bonnet} & \cellcolor{lightgray} \ding{51} \cite{reinhardt,bonnet} & \cellcolor{lightgray} \ding{51} \cite{reinhardt,bonnet} & \ding{51} [Thm~\ref{thm:hirpn-dlf}]\\
    \Za{T} & \cellcolor{lightgray} \ding{51} \cite{finkel} & \cellcolor{lightgray} \ding{51} \cite{finkel} & \ding{55} [Thm~\ref{thm:htpn-reach-dlf}] & \ding{55} [Thm~\ref{thm:htpn-reach-dlf}]\\
    \Za{IR} & \ding{51} [Thm~\ref{thm:bisim-k}] & \ding{51} [Thm~\ref{thm:bisim-k}] & \ding{51} [Thm~\ref{thm:bisim-k}] & \ding{51} [Thm~\ref{thm:hirpn-dlf}]\\
    \Za{IT} & Positivity-Hard [Thm~\ref{thm:skolem-hard}] & \ding{55} [Cor. \ref{hitpncover}] & \ding{55} [Thm~\ref{thm:htpn-reach-dlf}] & \ding{55} [Thm~\ref{thm:htpn-reach-dlf}]\\
    \Za{IRcT} & \ding{51} [Thm~\ref{thm:bisim-k}] & \ding{51} [Thm~\ref{thm:bisim-k}]
 & \ding{51} [Thm~\ref{thm:bisim-k}] & \ding{51} [Thm~\ref{thm:hirpn-dlf}]\\
    \hline
    \Zb{R}{I} &\ding{51}[Thm~\ref{thm:rhipn-term}] & \ding{55}[Thm~\ref{cover-thm}] & \cellcolor{lightgray} \ding{55}[\cite{transfer}, App.~\ref{1I1RPN}] & \cellcolor{lightgray}\ding{55}[Red.frm~\cite{transfer},~\ref{1I1RPN}]\\
    \Zb{T}{I} & Positivity-Hard [Thm~\ref{thm:skolem-hard}] & \ding{55}[Thm~\ref{cover-thm}] & \cellcolor{lightgray} \ding{55}[\cite{transfer}, App.~\ref{1I1RPN}]  & \cellcolor{lightgray}\ding{55}[Red.frm~\cite{transfer},~\ref{1I1RPN}]\\
    \Zb{R}{IR} &\ding{51}[Thm~\ref{thm:rhipn-term}, Thm~\ref{thm:bisim-k}] & \ding{55}[Thm~\ref{cover-thm}] & \cellcolor{lightgray} \ding{55}[\cite{transfer}, App.~\ref{1I1RPN}]  & \cellcolor{lightgray}\ding{55}[Red.frm~\cite{transfer},~\ref{1I1RPN}]\\
    \hline
\end{tabular}
\captionof{table}{Summary of key results; results for all other extensions are subsumed by these results}\label{summary}
\end{table}

Interestingly, several of the results use distinct constructions and proof techniques. We now point out the salient features of our six main results.

\begin{itemize}
\item We \emph{include reset arcs in the hierarchy} of inhibitors in \Za{I} in Section~\ref{sec:hirpn}. In Theorem~\ref{thm:bisim-k}, we show that we can model reset arcs by inhibitors, while crucially preserving hierarchy. This immediately gives decidability of all problems except \dlf. As the reduction may introduce deadlocks, we need a different proof for \dlf, which we show in our second main and more technically involved result in Theorem~\ref{thm:hirpn-dlf}. Note that this immediately also proves decidability of deadlock-freeness for \Za{I}, which to the best of our knowledge was not known before.
\item We \emph{add reset arcs outside the hierarchy} of inhibitor arcs in Section~\ref{sec:rhipn}. Somewhat counter-intuitively, this class does not contain \Za{I} and is incomparable to it, since here all inhibitor arcs must follow the hierarchy, while in \Za{I} some of the inhibitor arcs can be replaced by resets. Using a new and surprisingly simple construction of an extended finite reachability tree (FRT) which keeps track of the hierarchical inhibitor information  and modifies the subsumption condition, in Theorem~\ref{thm:rhipn-term}, we show that termination is decidable. This result has many consequences. In particular, it implies an arguably simple proof for the very special case of a single inhibitor arc which was solved in~\cite{FS10} (using a different method of extending FRTs). In Theorem~\ref{cover-thm}, we use a two counter machine reduction to show that coverability is undecidable even with 2 reset arcs  and an inhibitor arc in the absence of hierarchy.
\item Finally, we \emph{consider transfer arcs in and outside the hierarchy} in Section~\ref{sec:transf}. 
In Theorem~\ref{thm:htpn-reach-dlf}, we show that, unlike for reset arcs, including transfer arcs in the hierarchy of inhibitors does not give us decidability. For both \Za{IT} and \Zb{T}{I}, while coverability, reachability and  deadlock-freeness are undecidable, we are unable to show such a result for termination. Instead, in Theorem~\ref{thm:skolem-hard}, we show that we can reduce a long-standing open problem on linear recurrence sequences to this problem. 
\end{itemize}

\section{Adding Reset Arcs with Hierarchy to \Za{I}}
\label{sec:hirpn}
In this section, we extend hierarchical inhibitor nets~\cite{reinhardt} with reset arcs respecting the hierarchy. Subsection~\ref{sec:HIRPN} presents a reduction from \Za{IR} to \Za{I} that settles the decidability of termination, coverability and reachability for \Za{IR}.  Unfortunately, this reduction does not work
for deadlock-freeness since it introduces new deadlocked markings (as explained later). We therefore present a separate reduction from deadlock-freeness to reachability for \Za{IR} in subsection~\ref{sec:deadlock}. These two results establish the decidability of all four decision problems for \Za{IR}.

\subsection{Reduction from \Za{IR} to \Za{I}}\label{sec:HIRPN}
In this subsection, we present a reduction from Inhibitor-Reset Petri Nets to Inhibitor Petri Nets, such that the hierarchy is preserved. This reduction holds for Termination, Coverability and Reachability. Thus, as a consequence of Reachability in \Za{I}, it follows that Reachability in \Za{IR} is also decidable. In particular, reachability in Petri Nets with 1 Reset Arc is decidable.

Let \Za{IR}$_k$ be the sub-class of Petri nets in \Za{IR} with at most
$k$ transitions having one or more reset pre-arcs.  We first show that
termination, reachability and coverability for \Za{IR}$_k$ can be
reduced to the corresponding problems for \Za{IR}$_{k-1}$, for all $k
> 0$.  This effectively reduces these problems for \Za{IR} to the
corresponding problems for \Za{IR}$_0$ (or \Za{I}), which are known to
be decidable~\cite{reinhardt,bonnet}. In the following, we use
$\Markings{N}$ to denote the set of all markings of a net $N$.

\begin{lemma} \label{lem:hirpnk-to-hirpnk-1}
  For every net $N$ in \Za{IR}$_k$, there is a net $N'$ in
  \Za{IR}$_{k-1}$ and a mapping $f : \Markings{N} \rightarrow
  \Markings{N'}$ that satisfy the following:
  \begin{itemize}
  \item For every $M_1, M_2 \in \Markings{N}$ such that $M_2$ is
    reachable from $M_1$ in $N$, the marking $f(M_2)$ is reachable from $f(M_1)$
    in $N'$.
  \item For every $M_1', M_2' \in \Markings{N'}$ such that
    $M_1' = f(M_1)$, $M_2' = f(M_2)$ and $M_2'$ is reachable
    from $M_1'$ in $N'$, the marking $M_2$ is reachable from $M_1$ in $N$. 
  \end{itemize}
\end{lemma}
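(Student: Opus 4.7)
The plan is to pick one reset-using transition $t^*$ of $N$ and replace it by a drain-then-fire gadget that uses only inhibitor arcs on original places; every other transition of $N$ is kept, possibly receiving new inhibitor arcs from a fresh control place, so that $N'$ contains exactly $k-1$ reset-using transitions and lies in \Za{IR}$_{k-1}$. Let $R$ and $I$ denote the reset and inhibitor pre-places of $t^*$ respectively. The mapping $f$ is $f(M)(p)=M(p)$ on original places and $f(M)(c)=0$ for the fresh control place $c$ introduced below.

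First, I would introduce a fresh control place $c$, placed suitably in $\sqsubseteq$ and initially unmarked, remove $t^*$, and replace it by three (families of) transitions. A transition $t_{\mathrm{enter}}$ consumes the normal pre-tokens of $t^*$ and produces one token in $c$; for each $p\in R$, a drain transition $t_{\mathrm{drain},p}$ consumes one token from $p$ with a self-loop on $c$ so as to be firable only while $c$ carries a token; and $t_{\mathrm{fire}}$ consumes the token of $c$, produces the post-tokens of $t^*$, and carries inhibitor pre-arcs from every place of $R\cup I$. An inhibitor arc from $c$ is added to every other transition of $N'$ to make the drain phase atomic.

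For the forward direction, a firing $M\xrightarrow{t^*}M'$ in $N$ is simulated in $N'$ by $t_{\mathrm{enter}}$, then $t_{\mathrm{drain},p}$ fired $M(p)$ times for each $p\in R$, then $t_{\mathrm{fire}}$; at every intermediate step $c=1$ holds, so the $c$-inhibitors block all non-gadget transitions and the end marking is exactly $f(M')$. Non-$t^*$ firings are imitated verbatim. For the backward direction, fix a run of $N'$ from $f(M_1)$ to $f(M_2)$; since $c$ is $0$ at both endpoints and toggled only by $t_{\mathrm{enter}}$ and $t_{\mathrm{fire}}$, the run partitions into alternating normal blocks (with $c=0$, only copies of $N$-transitions firing) and drain blocks bracketed by $t_{\mathrm{enter}}$ and $t_{\mathrm{fire}}$, during which only the $t_{\mathrm{drain},p}$'s can fire because the $c$-inhibitor disables everything else. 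Since the closing $t_{\mathrm{fire}}$ forces every $p\in R$ to be empty on exit, each drain block drains exactly the $R$-tokens present at its start and therefore implements a single firing of $t^*$ in $N$; stitching the blocks yields a run $M_1\xrightarrow{*}M_2$ in $N$.

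The principal technical obstacle is preserving the $\sqsubseteq$-hierarchy. The special-arc places of $t_{\mathrm{fire}}$ coincide with those of $t^*$ (with resets relabelled as inhibitors), so they form the same $\sqsubseteq$-downward-closed set; however, the new $c$-inhibitors on other transitions must fit into their existing downward-closed sets of special arcs, and no new reset arc may appear anywhere in $N'$. Placing $c$ at the bottom of $\sqsubseteq$ makes the $c$-inhibitors on the other transitions trivially hierarchical but then puts $c$ strictly below every place of $R\cup I$, so that the normal pre-arc from $c$ to $t_{\mathrm{fire}}$ clashes with the downward-closure rule; this tension is resolved by a small refinement of the gadget, for example splitting $t_{\mathrm{fire}}$ into a check-and-produce transition followed by a clear-the-control transition, or using a pair of auxiliary control places placed on opposite sides of $R\cup I$ in $\sqsubseteq$. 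Verifying that the refined gadget introduces no reset arc and that every transition still has a $\sqsubseteq$-downward-closed set of special-arc places is the main technical content of the proof; once this is in place, $N'\in\Za{IR}_{k-1}$ and the simulation arguments above yield the required bidirectional reachability equivalence.
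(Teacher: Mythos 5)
Your simulation skeleton (replace the chosen reset transition by an enter--drain--check gadget, simulate one firing of $t^*$ by $t_{\mathrm{enter}}(t_{\mathrm{drain}})^{*}t_{\mathrm{fire}}$, and recover runs of $N$ from runs of $N'$ by block decomposition) is the same as the paper's, where the gadget is $t^S(t^R)^{*}t^I$. The problem is that you have correctly located the one genuinely hard point of the lemma --- making the gadget respect $\sqsubseteq$ --- and then left it unresolved. Your mechanism for making the drain phase atomic is an inhibitor arc from the control place $c$ to every other transition; downward closure then forces $c$ to the bottom of $\sqsubseteq$, which, as you yourself observe, makes the normal pre-arc from $c$ to $t_{\mathrm{fire}}$ illegal (that transition carries inhibitor arcs from places strictly above $c$). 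Deferring this to ``a small refinement of the gadget, for example splitting $t_{\mathrm{fire}}$\dots or using a pair of auxiliary control places'' is not a proof: each of those sketches needs new gating and clearing transitions whose own special arcs must again be checked against $\sqsubseteq$, and without the hierarchy requirement the lemma is just the folklore reset-by-inhibitor simulation. The main technical content is exactly what you have postponed.

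The paper's construction dissolves the tension rather than refining your gadget: it never puts an inhibitor arc on a control place at all. It introduces two fresh places $p^*$ and $p_t^*$ placed at the \emph{top} of the order ($p \sqsubseteq' p_t^* \sqsubseteq' p^*$ for every original $p$), gates the rest of the net by a \emph{normal} pre-arc/post-arc pair on $p^*$ (a token in $p^*$ enables the rest of the net, rather than emptiness of $c$), and gates the drain transitions $t^R$ by a normal self-loop on $p_t^*$. Consequently no pre-existing transition acquires any new special arc, and the only new inhibitor arcs sit on $t^I$ and come from the places $p^I$ and $p^R$ that were already the special-arc places of $t^*$ in $N$ --- a set that is downward-closed by hypothesis; the normal pre-arc from $p_t^*$ to $t^I$ is harmless because $p_t^*$ lies \emph{above} all of these places, so downward closure imposes nothing on it. If you replace your inhibitor-based gating by this token-based gating with the control places at the top of the hierarchy, your forward and backward arguments go through essentially verbatim and the proof is complete.
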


\begin{figure}[t]
\begin{tikzpicture}[scale=0.75]
\hspace{0pt}\raisebox{-1em}
{
\draw (0,13.875) circle (0.3cm) node{\scriptsize $p^S$};
\draw (1,13.875) circle (0.3cm) node{\scriptsize $p^I$};
\draw (2,13.875) circle (0.3cm) node{\scriptsize $p^{R}$};
\draw (1,11.625) circle (0.3cm) node{\scriptsize $p$};
\draw [fill=black] (0.65,12.675) node[left]{\scriptsize $t$} rectangle (1.35,12.825);
\draw[-latex,thick] (0,13.575) -- (0.65,12.825);
\draw[-o,thick] (1,13.575) -- (1,12.825);
\draw[-latex,thick] (2,13.575) -- node[right]{\scriptsize Reset} (1.35,12.825);
\draw[-latex,thick] (1,12.675) -- (1,11.925);
\draw (2,12.75) rectangle node{\scriptsize \textit{Rest of Net}} (4,11.625);
}

\raisebox{3em}
{
\path[draw=black,solid,line width=2mm,fill=black,
preaction={-triangle 90,thick,draw,shorten >=-1mm}
] (4.6, 10.875) -- (6.3, 10.875);
}

\hspace{180pt}\raisebox{9.3em}
{
\draw (0,10.125) circle (0.3cm) node{\scriptsize $p^S$};
\draw (2,10.125) circle (0.3cm) node{\scriptsize $p^*$};
\draw (1,7.875) circle (0.3cm) node{\scriptsize $p_t^*$};
\draw [fill=black] (0.65,8.925) node[left]{\scriptsize $t^S$} rectangle (1.35,9.075);
\draw[-latex,thick] (0,9.825) -- (0.8,9.075);
\draw[-latex,thick] (1.9,9.825) -- (1.1,9.075);
\draw[-latex,thick] (1,8.925) -- (1,8.175);
\draw (3,9.375) rectangle node{\scriptsize \textit{Rest of Net}} (5,8.25);
\draw[-latex,thick,dotted] (2.4,10.125) to[out=-20, in=100] (3.8, 9.15);
\draw[-latex,thick,dotted] (3.2,9.15) to[out=170, in=-60] (2,9.825);

\draw (2,6.75) circle (0.3cm) node{\scriptsize $p^{R}$};
\draw (0,6.75) circle (0.3cm) node{\scriptsize $p^I$};
\draw (1,4.5) circle (0.3cm) node{\scriptsize $p$};
\draw [fill=black] (3.5,6.6375) node[below]{\scriptsize $t^R$} rectangle (3.7,7.1625);
\draw [-latex,thick] (2.3,6.75) -- (3.5,6.75);
\draw [-latex,thick] (1,7.575) .. controls (1.5,7.275) and (2.5,7.125) .. (3.5,7.05);
\draw [-latex,thick] (3.7,7.05) .. controls (4.5,7.5) and (2.5,7.725) .. (1.3,7.875);
\draw [fill=black] (1.35,5.7) node[below right]{\scriptsize $t^I$} rectangle (0.65,5.55);
\draw[-o,thick] (0,6.45) -- (0.65,5.7);
\draw[-latex,thick] (1,7.575) -- (1,5.7);
\draw[-o,thick] (2,6.45) -- (1.35,5.7);
\draw[-latex,thick] (1,5.55) -- (1,4.8);
\draw[-latex,thick] (0.8,5.55) .. controls (-2,3.375) and (-3,13.5) .. (2,10.425);
}
\end{tikzpicture}
\vspace{-11.5em}
\caption{Transformation from $N \in$ \Za{IR}$_k$ (left) to $N' \in$ \Za{IR}$_{k-1}$ (right)}
\label{fig:transform-hirpn}
\end{figure}
\noindent \emph{Proof sketch:}
To see how $N'$ is constructed, consider an arbitrary transition, say
$t$, in $N$ with one or more reset pre-arcs.  We replace $t$ by a
gadget in $N'$ with no reset arcs, as shown in
Figure~\ref{fig:transform-hirpn}.  The gadget has two new places
labeled $p^*$ and $p_t^*$, with every transition in ``Rest of Net''
having a simple pre-arc from and a post-arc to $p^*$, as shown by the
dotted arrows in Figure~\ref{fig:transform-hirpn}.  The gadget also
has a new transition $t^S$ with simple pre-arcs from $p^*$ and from
every place $p^S$ that has a simple arc to $t$ in $N$.  It also has a
new transition labeled $t^R$ for every reset arc from a place $p^R$ to
$t$ in $N$.  Thus, if there are $n$ reset pre-arcs of $t$ in $N$, the
gadget will have $n$ transitions $t^R_1, \ldots t^R_n$.  As shown in
Figure~\ref{fig:transform-hirpn}, each such $t^R_i$ has simple
pre-arcs from $p^R_i$ and $p_t^*$ and a post-arc to $p_t^*$.  Finally,
the gadget has a new transition labeled $t^I$ with a simple pre-arc
from $p_t^*$ and inhibitor pre-arcs from all places $p^I$ that have
inhibitor arcs to $t$ in $N$.

The ordering $\sqsubseteq'$ of places in $N'$ is obtained by extending
the ordering $\sqsubseteq$ of $N$ as follows: for each place $p$ in
$N$, we have $p \sqsubseteq' p_t^* \sqsubseteq' p^*$.  Clearly, $N'
\in \Za{IR}_{k-1}$, since it has one less transition (i.e. $t$) with
reset pre-arcs compared to $N$.  It is easy to check that if the reset
and inhibitor arcs in $N$ respect $\sqsubseteq$, then the reset and
inhibitor arcs in $N'$ respect $\sqsubseteq'$.

The mapping function $f: \Markings{N} \rightarrow \Markings{M'}$ is
defined as follows: for every place $p$ in $N$, $f(M)(p) = M(p)$ if
$p$ is in $N$; otherwise, $f(M)(p^*) = 1$ and $f(M)(p_t^*) = 0$. The initial marking of $N'$ is given by $f(M_0)$, where $M_0$ is the
initial marking of $N$.  Given a run in $N$, it is now easy to see
that every occurrence of $t$ in the run can be replaced by the
sequence $t^S(t^R)^*t^I$ (the $t^R$ transitions fire until the
corresponding place $p^R$ is emptied) and vice-versa.  Further details
of the construction are given in Appendix~\ref{app:hirpn_bisimulation},
where it is also shown that $N$ can reach $M_2$ from $M_1$ iff $N'$ can
reach $f(M_2)$ from $f(M_1)$.  

In fact, the above construction can be easily adapted for \Za{IRcT} as
well.  Specifically, if we have a transfer arc from place $p_x$ to
place $p_y$ through $t$, we add a new transtion $t_{x,y}^T$ with
simple pre-arcs from $p_t^*$ and $p_x$, and with simple post-arcs to
$p_t^*$ and $p_y$ to the gadget shown in
Figure~\ref{fig:transform-hirpn}.  Furthermore, we add an inhibitor
arc from $p_x$ to $t^I$, like the arc from $p^R$ to $t^I$ in
Figure~\ref{fig:transform-hirpn}.  This allows us to obtain a net in
\Za{IRcT} with at least one less transition with reset pre-arcs or
transfer arcs, such that the reachability guarantees in
Lemma~\ref{lem:hirpnk-to-hirpnk-1} hold.  This immediately gives the
following result.
\begin{theorem}\label{thm:bisim-k}
  Termination, reachability and coverability for \Za{IR} and \Za{IRcT} are decidable.
\end{theorem}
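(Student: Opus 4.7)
The plan is to iterate Lemma~\ref{lem:hirpnk-to-hirpnk-1}. Given any net $N \in \Za{IR}$, the transition set is finite so $N \in \Za{IR}_k$ for some $k \in \N$, and applying the lemma $k$ times yields a net $N^{(k)} \in \Za{IR}_0 = \Za{I}$ together with a composed mapping $F : \Markings{N} \to \Markings{N^{(k)}}$ such that $M_2$ is reachable from $M_1$ in $N$ iff $F(M_2)$ is reachable from $F(M_1)$ in $N^{(k)}$. Setting $F(M_0)$ as the initial marking of $N^{(k)}$ and invoking decidability of reachability for \Za{I}~\cite{reinhardt,bonnet} immediately decides reachability for \Za{IR}.

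For coverability and termination, the extra ingredient I would establish is a structural invariant of the gadget in Figure~\ref{fig:transform-hirpn}: every reachable marking $M'$ of $N'$ satisfies $M'(p^*) + \sum_t M'(p_t^*) = 1$, so each reachable $M'$ is either \emph{clean} --- of the form $M' = f(M)$ for a unique $M \in \Markings{N}$ --- or sits inside exactly one gadget for some transition $t$. While intermediate, only the gadget transitions $t^S, t^R_i, t^I$ can fire (the remaining transitions are blocked by their pre-arc from $p^*$), and these only consume tokens on places of $N$. Hence any intermediate reachable marking is component-wise dominated on the places of $N$ by the preceding clean marking in the run, a property that propagates through the $k$ composed reductions since at most one gadget can be in progress at a time across $N^{(k)}$.

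Using this, I would show $M$ is coverable in $N$ iff $F(M)$ is coverable in $N^{(k)}$. The forward direction is immediate from the lemma ($f$ is monotone on places of $N$ and fixed on auxiliary places). For the converse, given a reachable $M' \geq F(M)$ in $N^{(k)}$, pass to the preceding clean $F(M_{\mathrm{prev}})$, which still dominates $F(M)$ on places of $N$ (and agrees with it on auxiliary places), and pull back via the backward direction of the lemma to obtain a reachable $M_{\mathrm{prev}} \geq M$ in $N$. For termination, every infinite run in $N$ lifts to an infinite run in $N^{(k)}$ by replacing each firing of a reset-containing transition $t$ with the gadget macro $t^S (t^R_{i_1})^{c_1} \cdots (t^R_{i_n})^{c_n} t^I$; conversely, every infinite run of $N^{(k)}$ must fire either infinitely many non-gadget transitions (each a transition of $N$) or close infinitely many gadgets (since inside a gadget only finitely many $t^R$ firings are possible, so $t^I$ must eventually close the gadget), and in both cases this yields an infinite run of $N$. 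Decidability of coverability and termination for \Za{I}~\cite{bonnet} then closes the \Za{IR} case.

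For \Za{IRcT}, the same inductive argument applies using the gadget extension described in the excerpt: each controlled transfer arc $p_x \to p_y$ through $t$ is simulated by a new transition $t^T_{x,y}$ with pre-arcs from $p_t^*$ and $p_x$, post-arcs to $p_t^*$ and $p_y$, and an inhibitor arc from $p_x$ to $t^I$ forcing $p_x$ to be drained before the gadget exits. The controlled condition $F(p', t) \in \N$ whenever $F(p, t) = S_{p'}$ ensures that the transfer destination is not also subject to a special arc of $t$, so the extended hierarchy $\sqsubseteq'$ is well-defined. The hard part of the whole proof is establishing the structural invariant and using it to lift the lemma's reachability-only guarantee to coverability and termination; the rest is a routine induction on the number of transitions bearing reset or controlled-transfer pre-arcs.
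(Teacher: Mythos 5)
Your proposal is correct and takes essentially the same route as the paper: iterate the gadget reduction of Lemma~\ref{lem:hirpnk-to-hirpnk-1} down to \Za{IR}$_0=\Za{I}$ and invoke the known decidability results there, with the invariant $M'(p^*)+M'(p_t^*)=1$ (the paper's Lemma~\ref{sum_p_p_t}) used to lift the reachability-preserving map to coverability and termination --- your write-up in fact fills in the lifting arguments that the paper leaves implicit in its Lemma~\ref{reduction-k}. One minor imprecision: the constrained-transfer condition is needed not to make the extended hierarchy well-defined, but to keep $t^I$ firable after the transfer sub-gadget deposits tokens into the destination place (which must therefore not carry a special arc to $t$), as the paper notes after Figure~\ref{fig:transform-hirctpn}.
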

The proof follows by repeatedly applying
Lemma~\ref{lem:hirpnk-to-hirpnk-1} to reduce the decision problems to
those for \Za{IR}$_0$ (or \Za{I}), and from the decidability of these
problems for \Za{I}.

We also note that the construction doesn't lift to \Za{IT}, because we
might have a transition, which transfers tokens into a place, say $p$,
and also has an inhibitor arc from $p$. In above construction, we would
have a transition $t^R$ adding tokens into place $p$ and an inhibitor arc
from place $p$ to $t^I$. Now, after transition $t^R$ is fired, it has
added tokens into $p$ and hence $t^I$ is not firable.

%%%%%%%% *****************DEADLOCK-FREENESS *************

\subsection{Reducing Deadlock-freeness to Reachability in \Za{IR}}\label{sec:deadlock}
Note that the above reduction does not preserve  deadlock-freeness since during the reduction, we may introduce new deadlocked markings, in particular when transition $t^S$ fires and $p^I$ has tokens. Nevertheless, in this subsection, we show a different reduction from Deadlockfreeness in \Za{IR} to Reachability in \Za{IR}. Since reachability in \Za{IR} is decidable (as shown in Section \ref{sec:HIRPN}), this implies that Deadlockfreeness in \Za{IR} is decidable as well. The overall idea behind our reduction is to add transitions that check whether the net is deadlocked, and to put a token in a special place, say $p^*$, if this is indeed the case.  Note that for a net to be deadlocked, the firing of each of its transitions must be disabled. Intuitively, if $M$ denotes a marking of a net and if $T$ denotes the
set of transitions of the net, then $\Deadlock{M} = \bigwedge_{t_i\in
  T}\NotFire{i}{M}$, where $\Deadlock{M}$ is a predicate indicating if
the net is deadlocked in $M$, and $\NotFire{i}{M}$ is a formula
representing the enabled-ness of transition $t_i$ in $M$.

For a transition $t$ to be disabled, atleast one of its pre-places $p$
must fail the condition on that place for $t$ to fire. There are three
cases to consider here.
\begin{itemize}
	\item $F(p,t)\in \mathbb{N}$: For $t$ to be disabled, we must have $M(p) < F(p,t)$
	\item $F(p,t)=I$: For $t$ to be disabled, we must have $M(p) > 0$.
	\item $F(p,t)=R$: Place $p$ cannot disable $t$
\end{itemize}
Suppose we define $Exact_j(p) \equiv (M(p)=j)$ and $AtLeast(p) \equiv
(M(p)>0)$.  Clearly, $\NotFire{i}{M}= \bigvee_{(p,t)\in F} Check(p)$,
where $Check(p)= AtLeast(p)$ if $F(p,t) = I$, and $Check(p) =
\bigvee_{j < k}Exact_l(p)$ if $F(p,t) = k \in \N$.  The formula for
$\Deadlock{M}$ (in CNF above) can now be converted into DNF by
distributing conjunctions over disjunctions.  Given a \Za{IR} net, we
now transform the net, preserving hierarchy, so as to reduce checking
$\Deadlock{M}$ in DNF in the original net to a reachability problem in
the transformed \Za{IR} net.

Every conjunctive clause in the DNF of $\Deadlock{M}$ is a conjunction
of literals of the form $AtLeast(p)$ and $Exact_j(p)$. Let $S_C$ be
the set of all literals in a conjunctive clause C, and let $P$ be the
set of all places in the net. Define $B^C_i= \{p \in P \mid Exact_i(p)
\in S_C\}$ and $A^C= \{p\in P \mid AtLeast(p)\in S_C\} \setminus
\bigcup_{i\geq 1}B^C_i$.  We only need to consider conjunctive clauses
where the sets $B^C_i$ are pairwise disjoint (other clauses can never
be true). Similarly, we only need to consider conjunctive clauses
where $B^C_0$ and $A^C$ are disjoint. We add a transition for each
conjunctive clause that satisfies the above two properties. By
definition, $A^C$ and $B^C_i$ are disjoint for all $i\geq 1$. Thus,
the sets $A^C$ and $B^C_i (i\geq 0)$ are pairwise disjoint for every
conjunctive clause we consider.

\begin{figure}[t]
\centering
  \begin{tikzpicture}[scale=1.2]
\draw (0,5) circle (0.3cm) node{\scriptsize $p_2$};
\draw [fill=black] (-0.35,3.9) node[left]{\scriptsize $t_2$} rectangle (0.35,4.1);
\draw (0,3) circle (0.3cm) node{\scriptsize $p_{2*}$};
\draw [fill=black] (-0.35,1.9) node[left]{\scriptsize $t_C$} rectangle (4.35,2.1);
\draw [fill=black] (0.9,4.65) node[below]{\scriptsize $t_{2*}$} rectangle (1.1,5.35);
\draw [-o,thick] (-0.3,5) .. controls (-1.5,4) and (-1.5,3) .. (-0.2,2.1);
\draw [-latex,thick] (0.3,5) -- (0.9,5);
\draw [-latex,thick] (0,4.7) -- (0,4.1);
\draw [-latex,thick] (0,2.7) -- (0,2.1);
\draw [-latex,thick] (0,3.9) -- (0,3.3);
\draw [-latex,thick] (2,1.9) -- (2,1.3);
\draw (2,1) circle (0.3cm) node{\scriptsize $p_*$};

\draw (4,5) circle (0.3cm) node{\scriptsize $p_1$};
\draw [fill=black] (3.65,3.9) node[left]{\scriptsize $r_1$} rectangle (4.35,4.1);
\draw (4,3) circle (0.3cm) node{\scriptsize $p_{1*}$};
\draw [-o,thick] (3.7,5) .. controls (2.5,4) and (2.5,3) .. (3.8,2.1);
\draw [-latex,thick] (4,4.7) -- node[right]{\scriptsize i} (4,4.1);
\draw [-latex,thick] (4,2.7) -- (4,2.1);
\draw [-latex,thick] (4,3.9) -- (4,3.3);
\draw [dotted] (-1.5,0.5) rectangle (5.5,6);
\draw (6,1.2) circle (0.3cm) node{\scriptsize $p_C$};
\draw (6,3.2) circle (0.3cm) node{\scriptsize $p^{**}$};
\draw [fill=black] (5.65,2.1) rectangle (6.34,2.3) node[right]{\scriptsize $q_C$};
\draw [-latex,thick] (6,2.9) -- (6,2.3);
\draw [-latex,thick] (6,2.1) -- (6,1.5);
\draw [-latex,dotted] (5.7,1.1) -- (4.5,0.7);
\draw [-latex,dotted] (4.5,1.8) -- (5.7,1.3);

\draw (2,5) circle (0.3cm) node{\scriptsize $p_3$};
\draw [fill=black] (2.9,4.65) node[below]{\scriptsize $s_3$} rectangle (3.1,5.35);
\draw [-o,thick] (2,4.7) -- (2,2.1);
\draw [-latex,thick] (2.3,5) -- (2.9,5);
  \end{tikzpicture}
  
  \caption{Construction for deadlock-freeness}
  \label{fig:dlfpic}
\end{figure}
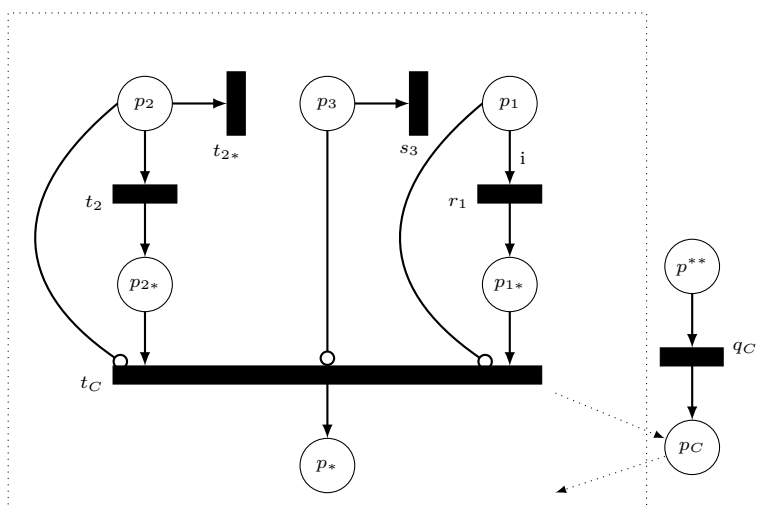

Given the original \Za{IR} net, for each conjunctive clause
considered, we perform the construction as shown in Figure~\ref{fig:dlfpic}. For every place $p_a\in A^C$, we add a construction as for
$p_2$ in Figure~\ref{fig:dlfpic}. For every place $p_i\in B^C_i$, we add a
construction as for $p_1$ in Figure~\ref{fig:dlfpic}. For all places $p\not\in
A^C\cup \bigcup_{i} B^C_i$, we add a construction as for $p_3$ in Figure~\ref{fig:dlfpic}. We call the transition $t_C$ in the figure as the "Check Transition", and refer to the set of transitions $r_i,s_i,t_i,t_{i*},t_C$ (excluding $q_C$) as \emph{transitions for   clause C}.  Note that for any $p_i\in P$, exactly one of
$r_i,s_i,t_i$ exist since the sets $A^C$ and the sets $B^C_i$ are all
pairwise disjoint.

Our construction also adds two new places, $p_C$ and $p^{**}$, and one
new transition $q_C$ such that
\begin{itemize}
	\item there is a pre-arc and a post-arc of weight 1 from
          $p^{**}$ to every transition in the original net.  Thus,
          transitions in original net can fire only $p^{**}$ has a
          token.
	\item there is a pre-arc of weight 1 from $p_c$ to every 
          transition for clause $C$ (within dotted box).
	\item there is a post-arc of weight 1 to $p_c$ from every
          transition for clause $C$ (within dotted box), except from
          $t_C$ to $p_C$.
\end{itemize}
Note that hierarchy is preserved in the transformed net, since the
only new transitions which have inhibitor/reset arcs are the
check transitions, which have inhibitor arcs from all places in the
original net.

Let $N$ be the original net in \Za{IR} with $P$ being its set of
places, and let $N'$ be the transformed net, also in \Za{IR}, obtained
above.  Define a mapping $f: \Markings{N} \rightarrow \Markings{N'}$
as follows: $f(M)(p)= M(p)$ if $p\in P$; $f(M)(p^{**}) = 1$ and
$f(M)(p) = 0$ in all other cases.  If $M_0$ is the initial marking in
$N$, define $M'_0 = f(M_0)$ to be the initial marking in $N'$.
\begin{claim}\label{claim:deadlock}
  The marking $M'_\star$ of $N'$, defined as $M'_\star(p) = 1$ if $p = p^*$ and
  $M'_\star(p) = 0$ otherwise, is reachable from $M'_0$ in $N'$ iff there exists a
  deadlocked marking reachable from $M_0$ in $N$.
\end{claim}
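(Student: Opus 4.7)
The plan is to analyze the structure of any run of $N'$ from $M'_0$ to $M'_\star$ for one direction and to exhibit an explicit run for the other. Before either direction, I would record three structural invariants of any run reaching $M'_\star$: (i) the only transitions producing a token in $p^*$ are the check transitions $t_C$, so exactly one $t_C$ fires along such a run; (ii) each $q_C$ consumes the unique token of $p^{**}$ without replacing it and every original transition has a self-loop on $p^{**}$, so some $q_C$ fires exactly once and no original transition of $N$ fires afterwards; (iii) a token enters $p_C$ only via $q_C$ and only gadget transitions of clause $C$ consume $p_C$, so between the firings of $q_C$ and $t_C$ only gadget transitions of $C$ are active.

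For the ``$\Rightarrow$'' direction, let $M$ be the marking on the original places at the instant $q_C$ fires in a run reaching $M'_\star$. Projecting the run's prefix onto original transitions yields a valid run of $N$ from $M_0$ to $M$, so $M$ is reachable in $N$. It then suffices to show that $M$ satisfies the clause $C$, for then $M \models \Deadlock{M}$ and $M$ is deadlocked. When $t_C$ eventually fires, its inhibitor arcs force every original place to be empty, and each $p_{a*}$ and $p_{i*}$ (for $p_a \in A^C$ and $p_i \in B^C_j$) must hold exactly one token. Extra firings of $t_a$ or $r_i$ would leave surplus tokens in $p_{a*}$ or $p_{i*}$ that no other transition of $N'$ can consume, contradicting $M'_\star$; hence $t_a$ and $r_i$ each fire exactly once. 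Combining this with the inhibitor constraints, and using that $A^C$ and the $B^C_j$'s are pairwise disjoint (so no other gadget transition touches $p_a$ or $p_i$), we get $M(p_a) \ge 1$ for each $p_a \in A^C$ and $M(p_i) = j$ for each $p_i \in B^C_j$. Thus $M$ satisfies every literal of $C$.

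For the ``$\Leftarrow$'' direction, let $M$ be a deadlocked marking reachable in $N$ and choose a conjunctive clause $C$ of the DNF of $\Deadlock{M}$ that is true at $M$; such a $C$ is among those for which $N'$ has a gadget, since the clauses discarded by the construction have inconsistent literal sets and are always false. Mimic the $N$-run reaching $M$ inside $N'$ (legal since $p^{**}$ carries its token throughout) to reach the $N'$-marking with $M$ on original places, $1$ on $p^{**}$, and $0$ elsewhere. Then fire $q_C$, followed by: $t_a$ once and $t_{a*}$ exactly $M(p_a)-1$ times for each $p_a \in A^C$; $r_i$ once for each $p_i \in B^C_j$; and $s_p$ exactly $M(p)$ times for each original place $p$ not in the support of $C$. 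Since $M$ satisfies $C$, every firing is enabled; afterwards all original places are empty, every required $p_{i*}$ holds one token, and $p_C$ holds one token. Firing $t_C$ then satisfies all inhibitors, consumes the $p_{i*}$ and $p_C$ tokens, and places one token in $p^*$, yielding $M'_\star$.

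The main obstacle is the precise counting in the forward direction. One must exploit that $M'_\star$ leaves every $p_{a*}$ and $p_{i*}$ empty to conclude that $t_a$ and $r_i$ each fire \emph{exactly} once, not merely at least once. Without this tight count, an Exact$_j$ literal would yield only $M(p_i) \ge j$ rather than the equality required by $C$, and one could not conclude $M \models C$; a symmetric care is needed so that extra firings of $t_{a*}$ or $s_p$ (which are harmless because they produce no tokens outside $p_C$) do not confuse the analysis of the gadget's essential firings.
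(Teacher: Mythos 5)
Your proposal is correct and follows essentially the same route as the paper's proof in Appendix~\ref{deadlock-appendix}: the run-structure invariants (only one $q_C$ fires, only clause-$C$ gadget transitions fire between $q_C$ and $t_C$, and $t_C$ is the last transition) are the paper's Lemmas~\ref{4.4}--\ref{4.7}, the explicit run in the ``$\Leftarrow$'' direction matches Lemma~\ref{4.3}, and your exact-firing-count argument for $t_a$ and $r_i$ is equivalent to the conservation invariants of Lemma~\ref{4.8} used in Lemma~\ref{4.10}. You correctly isolate the one delicate point --- that emptiness of $p_{a*}$ and $p_{i*}$ in $M'_\star$ forces $t_a$ and $r_i$ to fire exactly once, which is what turns $Exact_j$ into an equality rather than an inequality.
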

See Appendix \ref{deadlock-appendix} for proof of this claim.

\begin{theorem}
  \label{thm:hirpn-dlf}
Deadlock-freeness for \Za{IR} is decidable.
\end{theorem}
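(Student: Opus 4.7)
The plan is to combine the reduction developed in the preceding construction with the decidability of reachability for \Za{IR} already established in Theorem~\ref{thm:bisim-k}. Given $N \in$ \Za{IR} with initial marking $M_0$, I would first produce $N'$ by performing the gadget construction of Figure~\ref{fig:dlfpic} once for each admissible conjunctive clause of the DNF expansion of $\Deadlock{M}$. Since the DNF is determined syntactically by the set of transitions and pre-arcs of $N$ (independent of any particular marking), and since we retain only those clauses whose sets $A^C$ and $B^C_i$ are pairwise disjoint, the set of clauses to process is finite and effectively enumerable, so $N'$ is effectively constructible.

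Next, I would verify that $N' \in$ \Za{IR}, i.e.\ that the special arcs in $N'$ still respect a total ordering on places. The special arcs of $N$ are inherited unchanged, and by assumption they respect $\sqsubseteq$. The only newly added special arcs are the inhibitor arcs from every place of $N$ into each check transition $t_C$, and these trivially respect any extension $\sqsubseteq'$ of $\sqsubseteq$ whose minimal elements are the original places of $N$. The newly added control places ($p^{**}$, $p_C$, $p_{i*}$, $p^*$) sit above $P$ in $\sqsubseteq'$ and have no outgoing special arcs, so they impose no further obligation.

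With $N' \in$ \Za{IR} in hand, Claim~\ref{claim:deadlock} supplies the key equivalence: the designated marking $M'_\star$ is reachable from $M'_0$ in $N'$ iff some deadlocked marking is reachable from $M_0$ in $N$. By definition, $N$ fails deadlock-freeness iff the right-hand side holds, so deadlock-freeness of $N$ reduces effectively to non-reachability of $M'_\star$ in $N'$. Theorem~\ref{thm:bisim-k} asserts that reachability in \Za{IR} is decidable, and decidable problems are closed under complement, so the reduction yields a decision procedure for deadlock-freeness in \Za{IR}.

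The main obstacle, honestly, is not in this theorem but in Claim~\ref{claim:deadlock}: one has to check that the gating places $p^{**}$ and $p_C$ together serialise the check phase (so that once a check for a clause $C$ is initiated, no original-net transition can interleave, and once a check succeeds the net is quiescent aside from depositing the witness token), and that the composite firing sequence through the $t_{i*}$, $s_i$, $r_i$ gadgets followed by $t_C$ succeeds precisely when every literal of $C$ is satisfied at the current marking of $N$. The theorem itself then follows by a one-line appeal to Theorem~\ref{thm:bisim-k}.
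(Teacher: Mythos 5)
Your proposal follows essentially the same route as the paper: build the clause-by-clause check-transition gadget of Figure~\ref{fig:dlfpic}, observe that the construction stays within \Za{IR} because the only new special arcs are the inhibitor arcs from all original places into the check transitions, invoke Claim~\ref{claim:deadlock} to equate reachability of $M'_\star$ with reachability of a deadlocked marking, and conclude via the decidability of reachability for \Za{IR} from Theorem~\ref{thm:bisim-k}. This matches the paper's argument, including your correct identification that the real technical burden lives in Claim~\ref{claim:deadlock} rather than in the theorem itself.
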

This follows from the above reduction and from the decidability of
reachability for \Za{IR}.  A detailed proof can be found in
\ref{deadlock-appendix}.

\section{Adding Reset Arcs without Hierarchy}
\label{sec:rhipn}
The previous section dealt with extension of Petri nets where reset arcs were added within the hierarchy of the inhibitor arcs. This section discusses the decidability results when we add reset arcs outside the hierarchy of inhibitor arcs. It turns out that termination remains decidable for this extension of Petri nets too.

\subsection{Termination in \Zb{R}{I}}\label{term}
Our main idea here is to use a modified finite reachability tree (FRT) construction to provide an algorithm for termination in \Zb{R}{I}. The usual FRT construction (see for instance \cite{finkel}) for Petri nets does not work for Petri nets with even a single (hence hierarchical) inhibitor arc. 

In this section, we provide a construction that tackles termination in Petri nets  with hierarchical inhibitor arcs, even in the presence of additional reset arcs:
\begin{theorem}
\label{thm:rhipn-term}
Termination is decidable for \Zb{R}{I}.
\end{theorem}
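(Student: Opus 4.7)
The plan is to build a modified finite reachability tree (FRT) whose pruning rule is tailored to the coexistence of hierarchical inhibitor arcs and unrestricted reset arcs. Starting from the initial marking $M_0$, we expand every firable transition at each node, as in the standard FRT, but we enrich each node with a small bookkeeping component: the set of inhibitor places that are currently tested-zero on the path back to the root, meaning those that have been checked by an inhibitor arc and have not since been emptied by a reset. Because the hierarchy on inhibitor places is total and finite, this bookkeeping takes values in a finite set. The pruning rule at a node $v'$ with marking $M'$ is then that there exists an ancestor $v$ with marking $M$ carrying the same bookkeeping such that $M \leq M'$ and, for every inhibitor place $p$ that is tested-zero on the path from $v$ to $v'$ without being reset beforehand, $M(p) = M'(p)$. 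We declare non-termination exactly when the pruned pair is strict, i.e., $M \lneq M'$ on some coordinate.

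For correctness, the key observation is that when the refined subsumption holds with $M \lneq M'$, the transition sequence $\sigma$ leading from $v$ to $v'$ remains firable from $M'$: the extra tokens $M' - M$ reside on places whose values are irrelevant to the inhibitor tests actually performed in $\sigma$, either because those places are not tested or because any live test is pinned down by the equality constraint, while reset arcs in $\sigma$ wipe out the surplus at precisely the points where it could otherwise interfere. Iterating $\sigma$ from $M'$ then yields an infinite strictly increasing run, witnessing non-termination. Conversely, any infinite firing sequence in the original net must, by Dickson's lemma applied to the non-inhibitor coordinates together with the finiteness of the bookkeeping, eventually exhibit a pair of comparable states on its prefix, and such a pair is bound to appear as a pruned pair in the FRT.

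Finiteness of the FRT follows from the same well-quasi-order reasoning: non-inhibitor markings are well-quasi-ordered by Dickson, the bookkeeping ranges over a finite set, and equality is required only on the finite set of inhibitor coordinates fixed by the bookkeeping. Hence no branch of the tree can grow indefinitely without triggering the pruning rule, and the whole tree is finite by K\"onig's lemma; termination is then decided by a straightforward inspection of the finite tree.

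The step expected to be the main obstacle is the correctness of the pruning rule in the presence of reset arcs that do not respect the hierarchy. A reset arc may target an inhibitor place in the middle of $\sigma$, so the same place can be tested, reset, and tested again along $\sigma$; the bookkeeping must accurately identify exactly those test instants at which the value of the place at $M'$ must match that at $M$, and showing that the simple ``tested-zero since the last reset'' bookkeeping suffices requires a careful inductive case analysis of the interplay between the hierarchical tests and the unrestricted resets.
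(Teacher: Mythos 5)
Your overall strategy---a finite reachability tree whose subsumption condition demands $M\leq M'$ together with \emph{equality} on the inhibitor places actually tested along the segment (relaxed for places that are reset before being tested)---is essentially the paper's construction, and your soundness argument (a subsumed pair can be re-fired forever, with resets re-synchronizing the surplus) matches Lemma~\ref{2.1}. However, there is a genuine gap in your finiteness/completeness argument. You claim that the tree is finite, and that every infinite run exhibits a subsumed pair, ``by Dickson's lemma applied to the non-inhibitor coordinates together with the finiteness of the bookkeeping.'' This does not follow: your subsumption condition requires \emph{equality} on certain $\N$-valued coordinates, and equality on $\N$ is not a well-quasi-order, so Dickson's lemma on the remaining coordinates plus a finite bookkeeping component does not force two comparable nodes on an infinite branch (a single tested place whose value at the candidate comparison points is $1,2,3,\dots$ defeats the argument). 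Tellingly, nothing in your finiteness argument uses the \emph{hierarchy} of the inhibitor arcs beyond the finiteness of the place set, so the same reasoning would ``prove'' decidability of termination for \Zc{I}, which is undecidable. The missing idea (Lemma~\ref{2.0} in the paper) is to pick, on an infinite branch, the maximal-index transition $t$ fired infinitely often and compare only the markings occurring just before firings of $t$ after all higher-index transitions have stopped firing: because the inhibitor arcs respect the hierarchy, $t$ has inhibitor arcs from \emph{every} place of index at most $\mathrm{Index}(t)$, so all places on which equality is required are simultaneously zero at those points, the equalities hold trivially, and Dickson's lemma on the remaining coordinates yields the subsumed pair.

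A secondary, fixable error: you declare non-termination ``exactly when the pruned pair is strict,'' i.e.\ $M\lneq M'$. A pair with $M=M'$ and a nonempty connecting run is an exact cycle and already witnesses non-termination; under your rule such a net would be pruned without being flagged, and the algorithm would wrongly answer ``terminating.'' The paper's condition is $M\leq M'$ (not strict), and every subsumed leaf---strict or not---witnesses an infinite run.
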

Consider a \Zb{R}{I} net $(P,T,F,\sqsubseteq,M_0)$. We start by introducing a few definitions.

\begin{definition}
For any place $p\in P$, we define the \emph{index of the place $p$ (Index($p$))} as the number of places $q\in P$ such that $q\sqsubseteq p$.
The definition of Index over places induces an Index among transitions too: 
For any transition $t\in T$, its \emph{index} is defined as $Index(t)=\max_{F(p,t)=I} Index(p)$ By convention, if there is no such place, then $Index(t)=0$.
\end{definition}
Given markings $M_1$ and $M_2$ and $i\in \mathbb{N}$, we say that $M_1$ and $M_2$ are \emph{$i$-Compatible (denoted $Compat_i(M_1,M_2)$)}  if $\forall p\in P\, Index(p)\leq i\implies M_1(p)=M_2(p)$.
\begin{definition}
Consider a run $M_2\ce{ ->[\rho]}M_1$. Let $t^*=\arg\!\max_{t\in \rho}Index(t)$. We define $Subsume(M_2,M_1,\rho)=M_2\leq M_1\wedge \bigg( Compat_{Index(t^*)}(M_1,M_2) \bigg)$
\end{definition}
To understand this definition note that if $\rho$ can be fired at $M_2$ and reaches $M_1$ and if $Subsume(M_2, M_1,\rho)$ is true, then at $M_1$, $\rho$ can be fired again. Note that in classical Petri nets without inhibitor arcs, $Subsume(M_2,M_1,\rho)=M_2\leq M_1$, and hence this is the classical monotonicity condition. But in the presence of even a single inhibitor arc, this may differ.

Given \Zb{R}{I} $N=(P,T,F,\sqsubseteq,M_0)$, we define the \emph{Extended Reachability Tree $\mathit{ERT}(N)$} as a directed unordered tree where the nodes are labelled by markings $M:P\rightarrow\N$, rooted at $n_0:M_0$ (initial marking). If $M_1\ce{ ->[t]}M_2$ for some markings $M_1$ and $M_2$ and transition $t\in T$, then a node marked by $n':M_2$ is a child of the node $n:M_1$. Consider any node labelled $M_1$. If along the path from root $n_0:M_0$ to $n:M_1$, there is a marking $n':M_2$ ($n\neq n'$), such that the path from $n':M_2$ to $n:M_1$ corresponds to run $\rho$ and $Subsume(M_2,M_1,\rho)$ is true, then $M_1$ is made a leaf node (which we call a \emph{subsumed} leaf node). Note that leaf nodes in this tree are of two types: either leaf nodes caused by subsumption as above or leaf nodes due to deadlock, where no transition is fireable. 
\begin{lemma}\label{2.0}
For any \Zb{R}{I} $N=(P,T,F,\sqsubseteq,M_0)$, $\mathit{ERT}(N)$ is finite.
\end{lemma}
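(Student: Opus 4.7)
The plan is to prove the lemma by induction on $k^{\star} = \max_{t \in T} Index(t)$, the maximum inhibitor index of the net $N$. Since $\mathit{ERT}(N)$ is finitely branching (there are only finitely many transitions), K\"onig's lemma reduces the problem to ruling out any infinite path $\pi = M_0 \xrightarrow{t_1} M_1 \xrightarrow{t_2} \cdots$ in $\mathit{ERT}(N)$. I will assume such a path exists and derive a contradiction.

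For the base case $k^{\star} = 0$, the net has no inhibitor arcs, so every transition has index $0$ and $Compat_0$ is vacuous (since $Index(p) \ge 1$ for every place). Subsumption then collapses to $M_i \le M_j$, and Dickson's lemma immediately supplies such a pair along $\pi$, contradicting the assumption.

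For the inductive step, assume the result for nets with maximum inhibitor index at most $k-1$ and consider $N$ with maximum index $k$. I would split on whether transitions of index exactly $k$ fire finitely or infinitely often along $\pi$. If only finitely many do, then some suffix of $\pi$ uses only transitions of index $\le k-1$; this suffix is an infinite path in the ERT of the sub-net $N'$ obtained by discarding the index-$k$ transitions, and $N'$ belongs to \Zb{R}{I} with maximum inhibitor index at most $k-1$. By the induction hypothesis $\mathit{ERT}(N')$ is finite, so subsumption must occur along the suffix, and since the $Subsume$ predicate depends only on markings and the intervening run, the same witness gives a subsumed leaf in $\mathit{ERT}(N)$, a contradiction.

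If infinitely many index-$k$ transitions fire, pigeonhole yields a single transition $t^{\star}$ with $Index(t^{\star}) = k$ that fires at positions $i_1 < i_2 < \cdots$. The key observation is that because $t^{\star}$ has an inhibitor arc from some place of index $k$, the hierarchy condition forces every place $q$ with $Index(q) \le k$ to carry either an inhibitor or a reset arc to $t^{\star}$ (transfers being absent in \Zb{R}{I}). Consequently, immediately after $t^{\star}$ fires, the value at every such $q$ is exactly $F(t^{\star}, q)$: an inhibitor pre-place starts at $0$ and only receives the post-arc contribution, while a reset pre-place is zeroed and then receives the post-arc contribution. Hence all markings $M_{i_1}, M_{i_2}, \ldots$ agree on the low part (places of index $\le k$). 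Applying Dickson's lemma to their restrictions to the high part (places of index $> k$) produces $j_1 < j_2$ with $M_{i_{j_1}} \le M_{i_{j_2}}$ and exact equality on the low part, so $Compat_k(M_{i_{j_1}}, M_{i_{j_2}})$ holds. Since every transition in the intervening run has index at most $k$, this implies $Compat_{Index(t^{\star\star})}$ for the max-index transition $t^{\star\star}$ of that run, and subsumption holds, the final contradiction.

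The chief obstacle is the second case of the inductive step: although the intermediate runs between consecutive firings of $t^{\star}$ can freely perturb low-index places via post-arcs and non-hierarchical reset arcs, firing $t^{\star}$ itself \emph{canonicalises the entire low part} to the fixed vector $\bigl(F(t^{\star}, q)\bigr)_{Index(q) \le k}$, precisely because hierarchy forces every low-index place to be either inhibitor-pre or reset-pre of $t^{\star}$. This canonicalisation is what upgrades a routine Dickson argument on the high part into the demanding $Compat_k$-equality required by the modified subsumption predicate.
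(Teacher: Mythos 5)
Your proof is correct, and its engine is the same as the paper's: along a hypothetical infinite path one isolates a transition $t^{*}$ of maximal inhibitor index among those fired infinitely often, observes that the markings at its firing points carry a \emph{fixed} value on every place of index at most $Index(t^{*})$ (because the total order forces all such places to be special pre-places of $t^{*}$), and then applies Dickson's lemma to the remaining coordinates to manufacture a $Subsume$ pair, contradicting the path's survival in $\mathit{ERT}(N)$. The differences are in packaging. First, the paper dispenses with your outer induction on $\max_{t}Index(t)$: it directly takes $t$ to be the maximum-index transition among those fired \emph{infinitely often} and truncates the finite prefix containing all higher-index firings, which collapses your case~(a) (finitely many index-$k$ firings, handled by the induction hypothesis on the sub-net) and case~(b) into a single step; your induction is sound but is extra scaffolding. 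Second, the paper inspects the markings immediately \emph{before} each firing of $t$ and argues they are identically zero on the low-index places (each being an inhibitor pre-place of $t$), whereas you inspect the markings immediately \emph{after} and argue they equal the fixed vector $F(t^{*},\cdot)$ there. Your post-firing view is marginally more robust: it needs only that every low-index place carries \emph{some} special arc (inhibitor or reset) to $t^{*}$, which is all the hierarchy condition guarantees once resets are permitted inside the hierarchy, so your argument applies verbatim to \Zb{R}{IR}, where the paper's literal claim ``all low-index places are zero before $t$ fires'' would need exactly the adjustment you built in. For \Zb{R}{I} itself the two observations coincide, since below an inhibitor pre-place every arc is again an inhibitor arc, and both proofs go through.
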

\begin{proof}
Assume the contrary. By Konig's Lemma, there is an infinite path. Let the infinite path correspond to a run $\rho=M_0\Arrow{t_1}M_1\Arrow{t_2}M_2\dots \Arrow{t_i}M_i\dots$. 

Let $t\in T$ be the transition which has maximum index among the transitions which are fired infinitely often in run $\rho$. Thus all transitions having higher index than Index(t) fire only finitely many times. Let $b$ be chosen such that $\forall i\geq b\,Index(t_i)\leq Index(t)$ (i.e b is chosen after the last position where any transition with higher index than Index(t) fires). This exists by the definition of $t$. Since $t$ is fired infinitely often, the sequence $\{M_i|i>b\wedge t_{i+1}=t\}$ is an infinite sequence. As $\leq$ over markings is a well-quasi ordering, there exist two markings $M_i$ and $M_j$, such that both belong to the above sequence (i.e. $t_{i+1}=t_{j+1}=t$), $M_i\leq M_j$ and $i<j$. Now, since $t_{i+1}=t_{j+1}=t$, $$\forall p\in P,  Index(p)\leq Index(t)\implies M_i(p)=M_j(p)=0$$ for $t$ to fire at $M_i$ and $M_j$. Thus, $Compat_{Index(t)}(M_i,M_j)$ is true. Note that $t$ is the maximum index transition fired in the run from $M_i$ to $M_j$, since no higher index transition fires after position $b$ and $j>i>b$. Hence, $Subsume(M_i,M_j,\rho')$ is true, where $\rho'$ is the run from $M_i$ to $M_j$. But then, the path would end at $M_j$. Contradiction.
\end{proof}
Thus, we have shown that the $\mathit{ERT}$ is always finite. Next, we will show a crucial property of $Compat_i$, which will allow us to check for a non-terminating run.
\begin{lemma}\label{2.1}
Consider markings $M_1$ and $M_2$ such that $M_1\leq M_2$. Let $i\in \N$ be such that we have $Compat_i(M_1,M_2)$. Then for any run $\rho$ over the set of transitions $T_i=\{t|t\in T\wedge Index(t)\leq i\}$, if $M_1\Arrow{\rho}M'_1$, then $M_2\Arrow{\rho}M'_2$, where $M'_1\leq M'_2$ and $Compat_i(M'_1,M'_2)$.
\end{lemma}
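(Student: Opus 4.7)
The plan is to prove this by induction on the length of $\rho$. The base case where $\rho$ is empty is immediate, since $M_1'=M_1$ and $M_2'=M_2$ already satisfy all three conclusions by hypothesis. For the inductive step, write $\rho = \sigma t$ with $t \in T_i$, apply the induction hypothesis to $\sigma$ to obtain intermediate markings $N_1 \leq N_2$ with $Compat_i(N_1, N_2)$ reachable from $M_1$ and $M_2$ respectively, and then analyze what happens when $t$ fires.

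To show firability of $t$ at $N_2$, I would check each kind of pre-arc separately. Simple pre-arcs of $t$ are enabled at $N_2$ because $N_1(p) \geq F(p,t)$ and $N_1 \leq N_2$. For an inhibitor pre-arc from place $p$ to $t$, observe that since the inhibitor arcs respect $\sqsubseteq$, and $Index(t)$ is defined as the maximum index over its inhibitor pre-places, we have $Index(p) \leq Index(t) \leq i$; therefore $N_2(p) = N_1(p)$ by $i$-compatibility, and $N_1(p) = 0$ since $t$ fires at $N_1$, so the inhibitor constraint holds at $N_2$. Reset arcs never block firing, so they require no check. Hence $t$ is firable at $N_2$, producing some $M_2'$.

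Next, I would verify $M_1' \leq M_2'$ by case-splitting on places. If $p$ has no reset arc to $t$, then $M_j'(p) = N_j(p) - F(p,t) + F(t,p)$ for $j=1,2$ (treating inhibitor pre-arcs as contributing zero consumption), so $M_2'(p) - M_1'(p) = N_2(p) - N_1(p) \geq 0$. If $p$ has a reset arc to $t$, then $M_1'(p) = M_2'(p) = F(t,p)$. Either way $M_1'(p) \leq M_2'(p)$.

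Finally, for $Compat_i(M_1', M_2')$, consider any place $p$ with $Index(p) \leq i$. If $p$ has a reset arc to $t$, both markings become $F(t,p)$. If $p$ has an inhibitor arc to $t$, both $N_1(p)$ and $N_2(p)$ were $0$, and both become $F(t,p)$ after firing. Otherwise the change $-F(p,t)+F(t,p)$ is identical on both sides, and since $N_1(p) = N_2(p)$ by the inductive compatibility hypothesis, equality is preserved. The main subtlety in the argument is the use of the hierarchy condition to guarantee that every inhibitor pre-place of $t$ has index at most $Index(t) \leq i$, so that $i$-compatibility is exactly the information needed to transfer enabledness from $N_1$ to $N_2$; without hierarchy this step would fail.
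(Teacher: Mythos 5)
Your proof is correct and follows essentially the same route as the paper's: induction on the length of $\rho$, with firability at the larger marking established by a case split on the arc type (monotonicity for simple arcs, $Index(p)\leq Index(t)\leq i$ plus $i$-compatibility for inhibitor arcs, nothing to check for reset arcs), and then preservation of $\leq$ and $Compat_i$ by examining the effect of firing on each place. Your case analysis in the preservation step is in fact slightly more explicit than the paper's, but the underlying argument is the same.
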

\begin{proof}
We can prove this by induction. We first prove that $t$ is firable at $M_2$. If $F(p,t)\in \N$, then $M_2(p)\geq M_1(p)\geq F(p,t)$. If $F(p,t)=I$, i.e., it is an inhibitor arc, then $Index(p)\leq Index(t)\leq i$. But now, since $Compat_i(M_1,M_2)$ holds and $t$ is firable at $M_1$, we obtain $M_2(p)=M_1(p)=0$. Finally, if $F(p,t)=R$, i.e., it is a reset arc, then there is no condition on $M_2(p)$ for $t$ to be firable. Hence, $t$ is firable at $M_2$.

Now let $M_2\Arrow{t}M'_2$. Then, for all $p\in P$, $M'_2(p)=M_2(p)-F(p,t)+F(t,p)$ and $M'_1(p)=M_1(p)-F(p,t)+F(t,p)$. Since $F(t,p)$ is constant and $F(p,t)$ can depend only on number of tokens in place $p$ (so, if $M_1(p)$ and $M_2(p)$ were equal before firing, they remain equal now), we obtain that $Compat_i(M'_1,M'_2)$ and $M'_1\leq M'_2$.
\end{proof}
\begin{lemma}\label{2.2}
\Zb{R}{I} $N$ has a non-terminating run iff $\mathit{ERT}(N)$ has a subsumed leaf node.
\end{lemma}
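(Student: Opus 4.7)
The plan is to prove both directions separately, leaning on Lemma~\ref{2.0} (finiteness of $\mathit{ERT}(N)$) for one direction and on Lemma~\ref{2.1} (monotone firing under the $Compat_i$ invariant) for the other.

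For the forward direction I would trace an arbitrary non-terminating run $M_0\Arrow{t_1}M_1\Arrow{t_2}M_2\cdots$ inside $\mathit{ERT}(N)$ starting from the root. By definition of the tree, whenever the current node is not a leaf it has children for every fireable transition, so the trace produces a path in $\mathit{ERT}(N)$. Lemma~\ref{2.0} tells us this path must terminate at some leaf. That leaf cannot be a deadlock leaf because the traced run goes on firing $t_{i+1}$ at the corresponding marking; hence it must be a subsumed leaf, as required.

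For the reverse direction, assume $\mathit{ERT}(N)$ has a subsumed leaf: there are ancestor--descendant markings $M_2$ and $M_1$ on a root-to-leaf path, and a run $\rho$ with $M_2\Arrow{\rho}M_1$, such that $Subsume(M_2,M_1,\rho)$ holds. In particular, $M_2\le M_1$ and $Compat_{Index(t^*)}(M_1,M_2)$, where $t^*$ is the max-index transition in $\rho$; by definition of $t^*$ every transition in $\rho$ has index at most $Index(t^*)$. I would now invoke Lemma~\ref{2.1} with $M_1:=M_2$ as the smaller marking and $M_2:=M_1$ as the larger one, and $i:=Index(t^*)$, to obtain a firing $M_1\Arrow{\rho}M_1^{(1)}$ with $M_1\le M_1^{(1)}$ and $Compat_{Index(t^*)}(M_1,M_1^{(1)})$. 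This already lifts $\rho$ once past the subsumption point. To iterate, I would observe that $Compat_i$ is symmetric (it is defined via equality) and transitive, so combining $Compat_{Index(t^*)}(M_1,M_2)$ with $Compat_{Index(t^*)}(M_1,M_1^{(1)})$ yields $Compat_{Index(t^*)}(M_2,M_1^{(1)})$. Together with $M_2\le M_1\le M_1^{(1)}$, the hypotheses of Lemma~\ref{2.1} are re-established with $M_1^{(1)}$ in place of $M_1$, and an easy induction produces an infinite sequence $M_1\Arrow{\rho}M_1^{(1)}\Arrow{\rho}M_1^{(2)}\Arrow{\rho}\cdots$. Prepending the prefix from $M_0$ to $M_2$ (and then to $M_1$) then gives an infinite run of $N$.

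The main obstacle is bookkeeping around the $Compat_i$ invariant when iterating: one has to be careful that Lemma~\ref{2.1} is applied with the correct orientation (the hypothesis demands $M_1\le M_2$, whereas the subsumption condition is written as $M_2\le M_1$), and one has to justify transferring compatibility from the pair $(M_1,M_1^{(1)})$ produced by Lemma~\ref{2.1} back to the pair $(M_2,M_1^{(1)})$ that the next iteration needs. Everything else---tracing runs in the tree, the use of König/finiteness, and checking that $t^*$ bounds the indices appearing in $\rho$---is routine given the preceding lemmas.
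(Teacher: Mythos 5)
Your proposal is correct and follows essentially the same route as the paper: the forward direction uses finiteness of $\mathit{ERT}(N)$ (Lemma~\ref{2.0}) plus the observation that the terminal leaf cannot be a deadlock leaf, and the reverse direction pumps $\rho$ past the subsumption point by iterating Lemma~\ref{2.1}. The only (cosmetic) difference is that your transitivity detour is unnecessary: the conclusion of Lemma~\ref{2.1}, namely $M_1 \le M_1^{(1)}$ and $Compat_i(M_1, M_1^{(1)})$, already re-establishes the lemma's hypotheses for the freshly produced pair, which is exactly how the paper iterates.
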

\begin{proof}($\implies$) If \Zb{R}{I} $N$ has a non-terminating run, then $\mathit{ERT}(N)$ has a subsumed leaf node. Consider a non-terminating run. This run has a finite prefix in $\mathit{ERT}(N)$. This prefix ends in a leaf that is not a deadlock (as some transition is firable). Thus it is a subsumed leaf node.
  
\noindent($\Longleftarrow$) If $\mathit{ERT}(N)$ has a subsumed leaf node, then $N$ has a non-terminating run. To see this, consider any subsumed leaf node labelled by marking $M_2$. Let $M_1$ be the marking along the path $M_0$ to $M_2$, and $\rho$ be the run from $M_1$ to $M_2$, such that $Subsume(M_2,M_1,\rho)$ is true. Hence, we have $M_1\Arrow{\rho}M_2$. Take $t^*=argmax_{t\in \rho}Index(t)$ and $i=Index(t^*)$. Since $Subsume(M_1,M_2,\rho)$ is true, we have $M_1\leq M_2$ and $Compat_i(M_1,M_2)$ is true. We also have $\rho$ is a run over $T_i=\{t|t\in T\wedge Index(t)\leq i\}$(by definition of $i$).\\
Thus, by Lemma \ref{2.1}, we have $M_2\Arrow{\rho}M_3$, where $M_2\leq M_3$ and $Compat_i(M_2,M_3)$ is true. Thus, $\rho$ can again be fired at $M_3$ and so on, resulting in a non-terminating run.
\end{proof}
Finally, we observe that checking $Subsume(M_2,M_1,\rho)$ is also easily doable. Thus, for any \Zb{R}{I} net, one can draw its extended reachability tree and decide the termination problem using the ERT. This completes the proof of the theorem. We observe here that this construction cannot be immediately lifted to checking boundedness due to the presence of reset arcs. However, we can lift this to check for termination in HIRPN and R-HIRPN.
 
%Coverability

\subsection{Coverability in \Zb{R}{I}}\label{Cover}
While termination turned out to be decidable, reachability is undecidable for \Zb{R}{I} nets in general (since it subsumes reset Petri  nets). Indeed \cite{transfer} show that reachability is undecidable for Petri nets with 2 reset arcs. Using a similar strategy, in Appendix~\ref{1I1RPN}, we tighten the undecidability result to show that reachability in Petri nets with one inhibitor arc and one reset arc is undecidable. Further, we can modify the construction presented, to show that Deadlockfreeness in Petri nets with one reset arc and one inhibitor arc is undecidable too. 

Next we turn our attention to coverability problem and show the following result.
\begin{theorem}\label{cover-thm}
Coverability is undecidable for Petri nets with two reset/transfer arcs and an inhibitor arc.
\end{theorem}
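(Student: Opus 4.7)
The plan is to reduce the halting problem for two-counter Minsky machines to coverability, exploiting the fact that inhibitor arcs plus transfer arcs can together enforce correctness of zero-tests while coverability is used to witness halting. Given a Minsky machine $M$ with counters $c_1,c_2$ and state set $Q$, I would build a net $N$ with exactly two transfer arcs, one inhibitor arc, and a distinguished place $p_\text{halt}$ such that $N$ covers the marking $M(p_\text{halt})\geq 1$ iff $M$ halts.

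The construction uses a place $p_q$ for each $q\in Q$ carrying the control token, places $p_1,p_2$ storing the counter values, a ``trash'' place $p_\text{trash}$, and $p_\text{halt}$. Increment and decrement instructions translate to ordinary Petri net transitions that add or remove a single token from $p_i$ while moving the control token to the successor state. The interesting piece is the zero-test on $c_i$: I introduce a \emph{single shared} transition $t_i^0$ per counter that carries a transfer arc from $p_i$ to $p_\text{trash}$, giving two transfer arcs total. Every state $q$ that needs to zero-test $c_i$ is connected to $t_i^0$ through a routing gadget built only from ordinary places and arcs, which records the caller $q$ in an auxiliary place, lets $t_i^0$ fire, and then hands the control token back to the correct successor state. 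Finally, a unique halt transition consumes the control token from $p_{q_\text{halt}}$, is guarded by the inhibitor arc from $p_\text{trash}$, and deposits a token in $p_\text{halt}$. In total the net uses two transfer arcs and one inhibitor arc, matching the statement; the reset-only variant is obtained by pairing each $t_i^0$ with a dedicated ``witness'' transition that copies tokens into $p_\text{trash}$ before the reset fires.

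For correctness, in the forward direction, if $M$ halts I simulate the halting run faithfully: whenever a zero-test instruction is reached, $c_i=0$ truly holds, so firing $t_i^0$ transfers zero tokens into $p_\text{trash}$ and leaves it empty; since $p_\text{trash}$ has no consuming arcs, the inhibitor guard on the halt transition is satisfied and $p_\text{halt}$ gets covered. In the backward direction, because no transition removes tokens from $p_\text{trash}$, its content is monotonically nondecreasing along any run; if the halt transition ever fires, the inhibitor arc forces $p_\text{trash}=0$ throughout, which in turn forces every invocation of $t_i^0$ in the firing sequence to have occurred with $p_i=0$. Projecting such a sequence to the $M$-level instructions yields a legal halting computation of $M$.

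The main obstacle I anticipate is engineering the routing gadget so that, despite allowing multiple program states to invoke the same shared transition $t_i^0$, the simulation remains deterministic modulo Minsky nondeterminism: the control token must return to the intended successor state and no spurious interleavings must let the halt transition fire without a legitimate halting computation. The standard way around this is a two-phase ``pre-test/post-test'' pair of auxiliary places that uniquely encodes the caller and is drained only after $t_i^0$ has fired, making the overall invariant easy to state: at every reachable marking at most one control token is in circulation, and $p_\text{trash}$ is empty iff every previously fired $t_i^0$ occurred with an empty $p_i$. A secondary subtlety is that coverability, unlike reachability, permits extra tokens anywhere; but since $p_\text{halt}$ receives tokens only via the single gated halt transition, the monotone slack of coverability never introduces a spurious witness.
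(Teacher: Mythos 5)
Your transfer-arc construction is sound and is in the same spirit as the paper's proof: both are weak (``guess-and-defer'') simulations of a two-counter Minsky machine in which cheating on a zero-test leaves an irreversible trace, and the single inhibitor arc is spent on one final integrity check guarding the transition that marks the halting place. In fact, for the transfer case the paper does exactly what you do --- it routes the contents of $C_r$ into a dump place on a guessed zero-test and observes that the dump can never be emptied --- so your monotone-trash invariant and the backward-direction argument (inhibitor forces the trash empty at halt time, monotonicity pushes this back through the whole prefix, hence every fired zero-test was honest) are fine, as is your observation that coverability of $p_{\mathrm{halt}}$ is harmless because that place is fed only by the gated halt transition.

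The genuine gap is your reset-only variant. You propose to pair each reset transition with a ``witness'' transition that copies tokens into $p_{\mathrm{trash}}$ \emph{before} the reset fires, but nothing in a Petri net can force that witness to fire: a dishonest run simply skips it, fires the reset on a nonzero $p_i$, and leaves $p_{\mathrm{trash}}$ empty, so the final inhibitor check passes and you get a spurious coverability witness. A reset arc, unlike a transfer arc, destroys the evidence. The paper closes this hole differently: it maintains an auxiliary place $S$ with the invariant $M(S)\geq M(C_1)+M(C_2)$, where every honest increment/decrement preserves the difference $M(S)-M(C_1)-M(C_2)$ and a reset of a nonzero counter strictly and irrevocably increases it. The final phase then uses only ordinary arcs to move $C_1$ into $C_2$ and cancel $S$ against $C_2$ pairwise, and the one inhibitor arc checks $S=0$; this succeeds iff the difference is zero, i.e.\ iff no zero-test ever cheated. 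You need this conserved-quantity bookkeeping (or something equivalent) to handle the reset case; without it your proof establishes the theorem only for two transfer arcs plus an inhibitor arc, not for two reset arcs plus an inhibitor arc.
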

The rest of this section proves the above theorem. To do this, we construct a Petri net with two reset arcs, one inhibitor arc that simulates the two counter Minsky Machine. The Minsky Machine $M$ is defined as follows - It has a finite set of instructions $q_{i}, 0 \le i \le n$, $q_{0}$ is the initial state and $q_{n}$ is the final instruction i.e. there are no transition rules from $q_{n}$. There are two counters $C_{1}$ and $C_{2}$ in the machine. There are two kind of transitions.  
\begin{enumerate}
\item 
INC(r,j) - Increase $C_{r}$, by 1, Go to $q_{j}$. r can be 1 or 2.  
\item 
JZDEC(r,j,l) - If $C_{r}$ is zero, Go to $q_{l}$, else decrease $C_{r}$ by 1 and go to $q_{j}$. r can be 1 or 2. 
It is well known that reachability of $q_{n}$ in a Minsky Machine is undecidable. 
\end{enumerate}

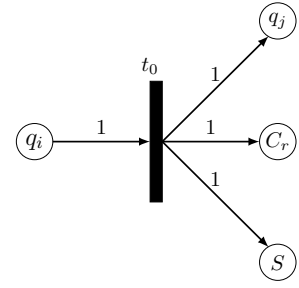
\begin{wrapfigure}[12]{r}{0.3\textwidth}
\scalebox{0.8}{
\begin{tikzpicture}
\draw (0,2) circle (0.3cm) node{\Large $q_i$};
\draw (4,2) circle (0.3cm) node{$C_r$};
\draw (4,0) circle (0.3cm) node{$S$};
\draw (4,4) circle (0.3cm) node{$q_j$};
\draw [fill=black] (1.9,3) node[above]{$t_{0}$} rectangle (2.1,1);
\draw[-latex,thick] (0.3,2) -- node[above]{1} (1.9,2);
\draw[-latex,thick] (2.1,2) -- node[above]{1} (3.85,3.75);
\draw[-latex,thick] (2.1,2) -- node[above]{1} (3.7,2);
\draw[-latex,thick] (2.1,2) -- node[above]{1} (3.85,0.25);
\end{tikzpicture}
}
\caption{Increment}
\label{fig:incr}
\end{wrapfigure}
We encode $M$ into Petri net $P$ as follows - we use places $q_{i}, 0 \le i \le n$ to encode each instruction. The place $q_{i}$ gets a token when we simulate instruction $i$ in the Minsky Machine. We use two places $C_{1}$ and $C_{2}$ to store the number of tokens corresponding to the counter values in $C_{1}$ and $C_{2}$ in the counter machine. We use special place $S$ which stores the sum of $C_{1}$ and $C_{2}$. The adjoining Figure~\ref{fig:incr} shows the construction for increment. When $q_{i}$ gets a token, the transition is fired, $C_{r}$ and $S$ are incremented by 1 and $q_{j}$ gets the token to proceed. 

Next, to simulate decrement (along with zero check), i.e., if $C_r=0$, then $q_\ell$, else $q_j$), we introduce non-determinism in the Petri net.  The gadget for this is shown in figure \ref{fig:decrement}. 

When we reach a decrement with zero check instruction, we guess whether $C_r$ is zero, and if so, fire $t_{11}$ and then $t_3$. Else we decrement it $C_r$ by 1 and fire $t_2$. We have two cases:
\begin{itemize}
 \item 
\textbf{Case - 1} : If $C_{r}$ is actually zero, it runs correctly as $t_2$ would not fire. The transition $t_3$ fires and $C_{r}$ remains zero. And $q_{l}$ gets the token.  
\item
\textbf{Case - 2} : If $C_{r}$ has non-zero tokens, both transitions can fire. But the runs in which $t_3$ fires are ``wrong'' runs. We call such transitions as \emph{Incorrect} transitions. The crucial point is that in runs with incorrect transitions, $S$ is not decremented where as $C_r$ is decremented. Hence $M(S)\neq M(C_1)+M(C_2)$ in markings reached by runs with incorrect transitions. 
\end{itemize} 

Note that in any run of $P$, $q_{i}$ and only $q_{i}$ in the Petri  net gets a token when the instruction numbered $i$ is being simulated. Now, we have the following lemma which proves the correctness of the reduction.

\begin{lemma}
In any run of $P$ reaching marking $M$, $M(S) \geq M(C_{1})+M(C_{2})$ and $M(S) = M(C_{1})+M(C_{2})$ iff there are no incorrect transitions.
\end{lemma}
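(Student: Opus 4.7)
The plan is to prove both the inequality $M(S) \ge M(C_1)+M(C_2)$ and the equality characterisation simultaneously, by induction on the length of a run of $P$. At the initial marking $M_0$, all of $S$, $C_1$, $C_2$ are empty, so $M_0(S) = 0 = M_0(C_1) + M_0(C_2)$, and no incorrect transition has yet fired; the base case is trivial.

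For the inductive step, suppose the statement holds after firing $n$ transitions, reaching marking $M$, and let $t$ be the next transition, leading to $M'$. I would perform a case analysis on the gadget to which $t$ belongs. (i) If $t$ is an increment $t_0$ (Figure~\ref{fig:incr}), it places exactly one token in $C_r$ and one in $S$, so $M'(S) - (M'(C_1)+M'(C_2)) = M(S) - (M(C_1)+M(C_2))$, preserving both the inequality and, if it held, the equality. (ii) If $t$ is the correct decrement $t_2$, it requires $M(C_r)\ge 1$ and removes one token from each of $C_r$ and $S$, again preserving the gap exactly. (iii) If $t$ is the zero-check transition $t_3$ with $M(C_r) = 0$, then the reset arc on $C_r$ has no effect and $S$ is untouched, so the invariant carries over and no incorrect transition has fired. (iv) If $t$ is $t_3$ but $M(C_r) > 0$ (an incorrect firing), the reset arc empties the $M(C_r)$ tokens from $C_r$ while $S$ is left unchanged, so $M'(S) = M(S) \ge M(C_1)+M(C_2) > M'(C_1)+M'(C_2)$: the inequality is preserved, strictly, and an incorrect transition is now recorded. (v) All remaining transitions (the auxiliary $t_{11}$ and the control-flow transitions among the $q_i$'s) leave $S$, $C_1$, $C_2$ pointwise unchanged and therefore preserve both claims.

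Combining these cases yields the inequality at every reachable marking. For the $\Leftarrow$ direction of the biconditional, every case except (iv) preserves equality whenever it already holds, so a run with no incorrect transition keeps the equality throughout. For the $\Rightarrow$ direction, observe that once case (iv) occurs the gap $M(S) - (M(C_1)+M(C_2))$ becomes strictly positive, and no subsequent case can reduce it: increments add $1$ to each side, correct decrements subtract $1$ from each, and further resets of $C_r$ can only widen the gap. Hence equality at $M$ forces the absence of any incorrect transition anywhere in the run leading to $M$. The main obstacle is the book-keeping in case (v): one must verify against Figure~\ref{fig:decrement} that the auxiliary transitions of the decrement gadget (in particular $t_{11}$ and the control-place transitions) truly do not touch $S$, $C_1$, or $C_2$ and that the reset arc of $t_3$ empties only $C_r$. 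With that checked, the induction goes through cleanly.
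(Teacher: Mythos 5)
Your proof is correct and follows essentially the same route as the paper: both track the quantity $f = M(S) - M(C_1) - M(C_2)$, show it starts at $0$, is preserved by increments, correct decrements and the genuine zero-branch, and strictly increases exactly when the zero-branch fires with $C_r > 0$. One small labeling slip: in Figure~\ref{fig:decrement} the reset arc from $C_r$ is attached to $t_{11}$, not to $t_3$ (so $t_{11}$ is the transition that empties $C_r$ and is the ``incorrect'' one when $C_r>0$, while $t_3$ only moves the control token to $q_l$); this relocates your cases (iii)--(v) among the gadget's transitions but does not affect the argument.
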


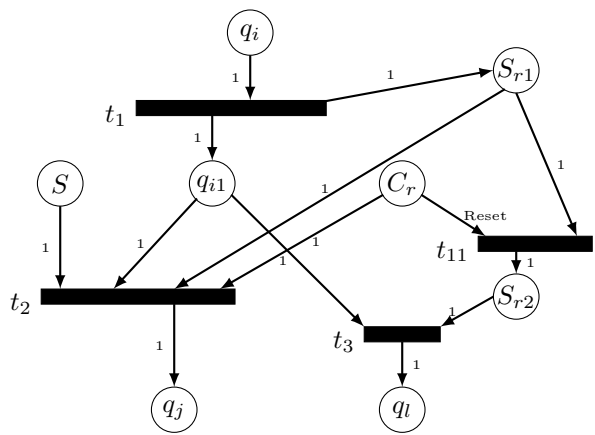
\begin{figure}[t]
\begin{center}
  \begin{tikzpicture}
\draw (4,6) circle (0.3cm) node{$q_i$};
\draw (1.5,4) circle (0.3cm) node{$S$};
\draw (3.5,4) circle (0.3cm) node{$q_{i1}$};
\draw (6,4) circle (0.3cm) node{$C_r$};
\draw (7.5,5.5) circle (0.3cm) node{$S_{r1}$};
\draw (3,1) circle (0.3cm) node{$q_j$};
\draw (6,1) circle (0.3cm) node{$q_l$};
\draw (7.5,2.5) circle (0.3cm) node{$S_{r2}$};

 \draw [fill=black] (2.5,4.9) node[left]{$t_1$} rectangle (5,5.1);
 \draw [fill=black] (1.25,2.4) node[left]{$t_2$} rectangle (3.8,2.6);
\draw [fill=black] (5.5,1.9) node[left]{$t_3$} rectangle (6.5,2.1);
\draw [fill=black] (7,3.1) node[left]{$t_{11}$} rectangle (8.5,3.3);
 \draw[-latex,thick] (4.5,5) -- node[above]{\tiny 1} (7.2,5.5);
 \draw[-latex,thick] (1.5,3.7) -- node[left]{\tiny 1} (1.5,2.6);
 \draw[-latex,thick] (3.3,3.8) -- node[left]{\tiny 1} (2.2,2.6);
 \draw[-latex,thick] (3,2.4) -- node[left]{\tiny 1} (3,1.3);
\draw[-latex,thick] (6.25,3.85) -- node[right]{\tiny Reset} (7.1,3.3);
\draw[-latex,thick] (7.5,5.2) -- node[right]{\tiny 1} (8.3,3.3);
\draw[-latex,thick] (7.5,3.1) -- node[right]{\tiny 1} (7.5,2.8);

\draw[-latex,thick] (7.2,2.5) -- node[left]{\tiny 1} (6.5,2.1);
\draw[-latex,thick] (7.35,5.25) -- node[left]{\tiny 1} (3.0,2.6);

\draw[-latex,thick] (3.75,3.85) -- node[left]{\tiny 1} (5.5,2.1);
\draw[-latex,thick] (4,5.7) -- node[left]{\tiny 1} (4,5.1);
\draw[-latex,thick] (3.5,4.9) -- node[left]{\tiny 1} (3.5,4.3);
\draw[-latex,thick] (6,1.9) -- node[right]{\tiny 1} (6,1.3);

\draw[-latex,thick] (5.75,3.85) -- node[right]{\tiny 1} (3.6,2.6);
\end{tikzpicture}
\caption{Decrement along with zero check}
\label{fig:decrement}
\end{center}
\end{figure}

If the Minsky Machine reaches instruction $q_{n}$, we reach the place $q_{n}$ state in Petri net. But, if the Minsky Machine doesn't reach $q_{n}$, there is a chance that we reach $q_{n}$ in Petri net because of incorrect transitions. By the above lemma, to check if there had been any incorrect transitions along the run, we just check at the end (at $q_n$) if $M(S)=M(C_1)+M(C_2)$, which we can do using an inhibitor arc.  
Thus $q_{n+1}$ gets tokens iff the Minsky Machine reaches the instruction $q_{n}$. Hence reaching instruction $q_n$ in Minsky Machine is equivalent to asking if we can cover the marking in which all places except $q_{n+1}$ have 0 tokens and $q_{n+1}$ has 1 token. We also note that the above proof holds for undecidability of coverability in Petri nets with 2 transfer arcs and an inhibitor arc. The proof of the above lemmas, the inhibitor arc construction and extension to transfer arcs are presented in Appendix \ref{2R1C}.
Finally, the problem of coverability in Petri nets with 1 inhibitor arc and 1 reset arc is open.  

\begin{problem}
Is coverability in Petri nets with 1 reset arc and 1 inhibitor arc decidable ?
\end{problem}

\section{Adding Transfer Arcs within and without Hierarchy}
\label{sec:transf}
\subsection{Reachability and deadlock-freeness in \Za{T}}\label{transfer_undec}
We show a reduction from Petri nets with 2 (non-hierarchical) transfer arcs to \Za{T} preserving reachability and deadlock-freeness. Since reachability and deadlock-freeness in Petri nets with 2 transfer arcs are undecidable~\cite{transfer}, they are undecidable in \Za{T} too.
\begin{theorem}
\label{thm:htpn-reach-dlf}
Reachability and deadlock-freeness are undecidable in \Za{T}.
\end{theorem}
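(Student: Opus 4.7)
The plan is to reduce reachability and deadlock-freeness from Petri nets with two (non-hierarchical) transfer arcs, known to be undecidable by~\cite{transfer}, to the corresponding problems in $\Za{T}$. Fix such a source net $N$ with the two transfer arcs $F(p_1, t_1) = S_{q_1}$ and $F(p_2, t_2) = S_{q_2}$; the variants where $p_1 = p_2$ or $t_1 = t_2$ are simpler instances of the same construction.

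I would define the hierarchy $\sqsubseteq$ of the target net $N'$ by placing $p_1 \sqsubseteq p_2$ at the very bottom, with every other place of $N$ (together with the auxiliary places introduced below) placed above in arbitrary order. In $N'$ the only transitions bearing special arcs will be two dedicated ``transfer step'' transitions $t_1^{\mathit{tr}}$ and $t_2^{\mathit{tr}}$. On $t_1^{\mathit{tr}}$ I put only the original transfer $S_{q_1}$ from $p_1$ (the bottom place, so no companions are forced). On $t_2^{\mathit{tr}}$ I put the original transfer $S_{q_2}$ from $p_2$ together with the forced companion $F(p_1, t_2^{\mathit{tr}}) = S_{p_1}$ required by $p_1 \sqsubseteq p_2$; this \emph{self-transfer} is a semantic no-op, since it simply moves all of $p_1$'s tokens back to $p_1$. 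This is the key trick by which all hierarchy constraints are discharged for free.

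The remaining subtlety is that the flow relation is single-valued on each $(p,t)$ pair, so the self-transfer collides with any simple pre-arc from $p_1$ to $t_2$ that $N$ may contain. To sidestep this I would split each transition $t$ of $N$ carrying a transfer arc into a three-transition gadget $t^{\mathit{en}}, t^{\mathit{tr}}, t^{\mathit{ex}}$ linked by two fresh control places $c_1^t, c_2^t$: $t^{\mathit{en}}$ carries every simple pre-arc of $t$ and puts a token on $c_1^t$; $t^{\mathit{tr}}$ consumes $c_1^t$, carries \emph{only} the transfer arcs (the original one plus the forced self-transfer in the case of $t_2^{\mathit{tr}}$), and puts a token on $c_2^t$; $t^{\mathit{ex}}$ consumes $c_2^t$ and carries every post-arc of $t$. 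Atomicity of the gadget is enforced by a global lock place marked with one token initially: every non-gadget transition has weight-one pre- and post-arcs to the lock, $t^{\mathit{en}}$ consumes it, $t^{\mathit{tr}}$ does not touch it, and $t^{\mathit{ex}}$ restores it. Since the only special arcs in $N'$ sit on $t_1^{\mathit{tr}}, t_2^{\mathit{tr}}$ and source only from $p_1, p_2$ at the bottom of $\sqsubseteq$, $N'$ indeed lies in $\Za{T}$.

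For correctness I would identify the \emph{stable} markings of $N'$ (lock $=1$ and every $c_i^t = 0$) with markings of $N$ via the obvious bijection, and argue (a) every firing step of $N$ corresponds to a complete gadget execution (or a direct copy of a non-transfer transition) in $N'$, and (b) once a gadget starts in $N'$ the lock is held, so only that gadget's own three transitions are enabled until it completes; hence every run of $N'$ factors into stable-to-stable blocks. This gives the reachability equivalence. For deadlock-freeness I would additionally observe that non-stable markings of $N'$ are never deadlocked, since wherever a control token $c_i^t$ sits, the successor gadget transition is enabled (the transfer arcs and self-transfers impose no firability precondition), so the gadget can always complete; together with (a) this yields that $N'$ is deadlock-free iff $N$ is. The main obstacle is precisely this last verification: the atomicity/lock argument must be shown faithful for \emph{both} properties simultaneously, since deadlock-freeness is sensitive to intermediate gadget states that reachability between stable markings can quotient away.
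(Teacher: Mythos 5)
Your reduction targets the same source problem as the paper (reachability and deadlock-freeness for Petri nets with two unrestricted transfer arcs, undecidable by~\cite{transfer}), and most of your machinery is sound: the lock-based three-transition gadget, the factoring of runs into stable-to-stable blocks, and the explicit check that intermediate gadget markings are never deadlocked. In fact that last point is handled more carefully here than in the paper, which instead assumes w.l.o.g.\ that there is no arc from $p_1$ to $t_2$ and defers the justification (and the claim that this pre-processing adds no deadlocks) to an appendix.

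However, the whole construction rests on one assumption you should not take for granted: that the degenerate \emph{self-transfer} $F(p_1,t_2^{\mathit{tr}})=S_{p_1}$ is a legal arc and acts as a no-op. The paper's formal flow relation does not literally exclude $F(p,t)=S_p$, but a transfer arc is described as moving all tokens ``from place $p_1$ \dots\ to place $p_2$'', and the paper's own proof conspicuously refuses this shortcut: it \emph{duplicates} $p_1$ into two places $p_1,p'_1$ ordered $p_1\sqsubseteq p'_1\sqsubseteq p_2$, installs genuine transfers between the two copies through $t_2$ and $t'_2$, and uses tracker places $p_*,p'_*$ to record which copy is currently ``live'', the representative swapping on every firing of $t_2$. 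If self-transfers were admissible no-ops, that elaborate swap would be pointless, and the hierarchy restriction on transfer arcs would be essentially vacuous (every \Zc{T} net would embed into \Za{T} up to your collision gadget) --- consistent with the decidability table, but a strong hint that the authors do not regard $S_p$ sourced at $p$ as part of the model. So either justify explicitly from the definition that a self-transfer is permitted and semantically trivial, in which case your proof is a genuinely simpler alternative, or replace the self-transfer by the paper's device: route $p_1$'s tokens to a fresh twin $p'_1$ sitting below $p_2$ in the hierarchy and keep a one-bit tracker of which twin currently plays the role of $p_1$. The rest of your argument goes through under either repair.
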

\begin{proof}
Given a Petri net with 2 transfer arcs, we will construct a \Za{T} such that reachability of markings and deadlock-freeness is preserved.  Let $N$ be such a net as shown in the figure below, with two transfer arcs, one from $p_1$ to $p_3$ via $t_1$ and another from $p_2$ to $p_4$ via $t_2$. $t_3$ and $t_4$ are representative of any other transitions to and from $p_1$. Wlog., we assume that there is no arc from $p_1$ to $t_2$. If this is not the case, we can add a place and transition in between to create an equivalent net (while adding no deadlocked reachable marking), see Appendix \ref{transfer_undec-appendix}. Now, the construction is shown in the diagram below.

\begin{changemargin}{-1.5cm}{-1.5cm}
\begin{minipage}{.5\textwidth}
\begin{tikzpicture}[scale=0.7]
\draw (3,18.5) circle (0.3cm) node{$p_1$};
\draw (1,15.5) circle (0.3cm) node{$p_3$};
\draw [fill=black] (0.65,16.9) node[left]{$t_1$} rectangle (1.35,17.1);
\draw [fill=black] (2.65,16.9) node[left]{$t_3$} rectangle (3.35,17.1);
\draw [fill=black] (4.65,16.9) node[left]{$t_4$} rectangle (5.35,17.1);
\draw[-latex,thick] (3,18.2) .. controls (1,17.7) .. node[right]{tf} (1,15.8);
\draw[-latex,thick] (3,17.1) -- node[right]{} (3,18.2);
\draw[-latex,thick] (3,18.2) -- node[right]{} (5,17.1);
\draw (8,15.5) circle (0.3cm) node{$p_2$};
\draw (7,15.5) circle (0.3cm) node{$p_4$};
\draw [fill=black] (6.65,16.9) node[left]{$t_2$} rectangle (7.35,17.1);
\draw[-latex,thick] (8,15.8) .. controls (8,18.7) and (7,18.7) .. node[right]{tf} (7,15.8);
\end{tikzpicture}
\end{minipage}
\begin{minipage}{.5\textwidth}
\begin{tikzpicture}[scale=0.7]
\draw (-0.5,15) -- (-0.5,8);
\draw (3,14.5) circle (0.3cm) node{$p_1$};
\draw (1,11.5) circle (0.3cm) node{$p_3$};
\draw [fill=black] (0.65,12.9) node[left]{$t_1$} rectangle (1.35,13.1);
\draw [fill=black] (2.65,12.9) node[left]{$t_3$} rectangle (3.35,13.1);
\draw [fill=black] (4.65,12.9) node[left]{$t_4$} rectangle (5.35,13.1);
\draw[-latex,thick] (3,14.2) .. controls (1,13.7) .. node[right]{tf} (1,11.8);
\draw[-latex,thick] (3,13.1) -- node[right]{} (3,14.2);
\draw[-latex,thick] (3,14.2) -- node[right]{} (5,13.1);
\draw (10,11.5) circle (0.3cm) node{$p_2$};
\draw (9,11.5) circle (0.3cm) node{$p_4$};
\draw (7,14.5) circle (0.3cm) node{$p_*$};
\draw [fill=black] (8.65,12.9) node[left]{$t_2$} rectangle (9.35,13.1);
\draw[-latex,thick] (10,11.8) .. controls (10,14.7) and (9,14.7) .. node[right]{tf} (9,11.8);
\draw[dotted] (0,12.5) rectangle (6,13.5);
\draw[-latex,dotted] (6.7,14.5) .. controls (5.5,14.5) .. (5.5,13.2);
\draw[-latex,dotted] (5.7,13) .. controls (7,13) .. (7,14.2);
\draw[-latex,thick] (7.3,14.5) .. controls (8.8,14.5) .. node[right]{} (8.8,13.1);
\draw[-latex,thick] (8.8,12.9) -- node[left]{} (7,8.8);
\draw (3,8.5) circle (0.3cm) node{$p'_1$};
\draw [fill=black] (0.65,10.1) node[left]{$t'_1$} rectangle (1.35,9.9);
\draw [fill=black] (2.65,10.1) node[left]{$t'_3$} rectangle (3.35,9.9);
\draw [fill=black] (4.65,10.1) node[left]{$t'_4$} rectangle (5.35,9.9);
\draw[-latex,thick] (3,8.8) .. controls (1,9.3) .. node[right]{tf} (1,11.2);
\draw[-latex,thick] (3,9.9) -- node[right]{} (3,8.8);
\draw[-latex,thick] (3,8.8) -- node[right]{} (5,9.9);
\draw (7,8.5) circle (0.3cm) node{$p'_*$};
\draw [fill=black] (8.65,10.1) node[left]{$t'_2$} rectangle (9.35,9.9);
\draw[-latex,thick] (10,11.2) .. controls (10,8.3) and (9,8.3) .. node[right]{tf} (9,11.2);
\draw[dotted] (0,9.5) rectangle (6,10.5);
\draw[-latex,dotted] (6.7,8.5) .. controls (5.5,8.5) .. (5.5,9.8);
\draw[-latex,dotted] (5.7,10) .. controls (7,10) .. (7,8.8);
\draw[-latex,thick] (7.3,8.5) .. controls (8.8,8.5) .. node[right]{} (8.8,9.9);
\draw[-latex,thick] (8.8,10.1) -- node[left]{} (7,14.2);
\end{tikzpicture}
\end{minipage}
\end{changemargin}\\
Six transfer arcs have not been shown in the construction above. These are the following:\\
\begin{minipage}{0.5\textwidth}
\begin{itemize}
    \item From $p_1$ to $p_3$ through $t'_1$.
    \item From $p_1$ to $p'_1$ through $t_2$.
    \item From $p_1$ to $p'_1$ through $t'_2$.
\end{itemize}
\end{minipage}
\begin{minipage}{0.5\textwidth}
\begin{itemize}
    \item From $p'_1$ to $p_3$ through $t_1$.
    \item From $p'_1$ to $p_1$ through $t_2$.
    \item From $p'_1$ to $p_1$ through $t'_2$.
\end{itemize}
\end{minipage}
These transfer arcs ensure hierarchy among the transfer arcs with the ordering $p_1<p'_1<p_2$. The dotted arc from $p_*$ to the upper dotted box represents a pre-arc from $p_*$ to every transition in the box. Similarly, we have an arc from every transition in the box to $p_*$. Similarly, we have arcs for the lower dotted box and $p'_*$ also. The intuitive idea behind the construction is to represent the place $p_1$ in the original net by two places $p_1$ and $p'_1$ in the modified net. At every marking, $p_1$ of original net is represented by one of the two places $p_1$ or $p'_1$ in the modified net. $p'_*$ and $p_*$ are used to keep track of which place represents $p_1$ in current marking. Everytime transition $t_2$ fires, the representative place swaps. Let the original net be $(P,T,F)$ and the constructed net be $(P',T',F')$. The initial marking $M'_0$ is given by $M'_0(p_*)=1$, $M'_0(p'_*)=M'_0(p'_1)=0$, and $M'_0(p)=M_0(p)$ for all other $p\in P$. 
Now, given marking $M$ of original net, let us define the set $S^{ext}=\{A_M,B_M\}$, where,\\
\begin{minipage}{0.5\textwidth} 
$$A_M(p)=\begin{cases} M(p)\,\,\,\,\,p \not \in \{p_1,p'_1,p'_*,p_*\} \\1\,\,\,\,\,\,\,\,\,\,\,\,\,\,\,\,p=p_*\\0\,\,\,\,\,\,\,\,\,\,\,\,\,\,\,\,p\in\{p'_*,p'_1\}\\M(p_1)\,\,\,\,\,p=p_1\end{cases}$$
\end{minipage}
\begin{minipage}{0.5\textwidth}
$$B_M(p)=\begin{cases} M(p)\,\,\,\,\,p \not \in \{p_1,p'_1,p'_*,p_*\} \\1\,\,\,\,\,\,\,\,\,\,\,\,\,\,\,\,p=p'_*\\0\,\,\,\,\,\,\,\,\,\,\,\,\,\,\,\,p\in\{p_*,p_1\}\\M(p_1)\,\,\,\,\,p=p'_1\end{cases}$$
\end{minipage}
\begin{claim}
Marking $A_M$ or $B_M$ is reachable from $M'_0$ in the constructed net iff marking $M$ is reachable from $M_0$ in the original net. 
\end{claim}
From this claim (proof in Appendix \ref{transfer_undec-appendix}), we obtain the proof of the theorem.
\end{proof}
\begin{corollary}\label{hitpncover}
Coverability is undecidable in \Za{IT}.
\end{corollary}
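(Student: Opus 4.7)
\noindent\emph{Proof proposal:}
The plan is to reduce reachability in \Za{T} (shown undecidable in Theorem~\ref{thm:htpn-reach-dlf}) to coverability in \Za{IT}. Since the only extra ingredient available to us over \Za{T} is a hierarchical layer of inhibitor arcs, the standard ``turn a reachability check into a coverability check'' gadget based on zero-testing fits perfectly: we will add inhibitor arcs that test all original places for emptiness simultaneously, which trivially respects any total order on the original places.

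More concretely, given a net $N=(P,T,F,\sqsubseteq,M_0)\in\,$\Za{T} and a target marking $M$ whose reachability we want to decide, I would build $N'\in\,$\Za{IT} as follows. Introduce three fresh places $p^{run}$, $p^{goal}$, $p^{win}$, and place them above all of $P$ in the extended hierarchy $\sqsubseteq'$ (their precise ordering among themselves is irrelevant, as they will carry no special outgoing arcs). Initialise $p^{run}$ with one token and $p^{goal}, p^{win}$ with none. Decorate every original transition $t\in T$ with a simple pre-arc from and post-arc to $p^{run}$ (weight $1$), so that each original firing consumes and restores the ``run token''; the original transfer arcs are kept verbatim. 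Add a new transition $t^{\star}$ with a simple pre-arc from $p^{run}$ (weight $1$, \emph{no} post-arc back), a simple pre-arc of weight $M(p)$ from every place $p\in P$ with $M(p)>0$, and a post-arc of weight $1$ to $p^{goal}$. Finally add a transition $t^{\star\star}$ with a pre-arc from $p^{goal}$, inhibitor pre-arcs from \emph{every} place $p\in P$, and a post-arc to $p^{win}$. The target for coverability in $N'$ is the marking $M^{\star}$ with $M^{\star}(p^{win})=1$ and $0$ elsewhere.

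Correctness is then a routine two-direction argument. For the forward direction, any firing sequence in $N$ that reaches $M$ is replayed verbatim in $N'$ (the $p^{run}$ token is always present), after which $t^{\star}$ is enabled and empties every $p\in P$, enabling $t^{\star\star}$, which deposits a token in $p^{win}$ and thus covers $M^{\star}$. For the converse, covering $M^{\star}$ forces $t^{\star\star}$ to fire at some point, which in turn forces a prior firing of $t^{\star}$ (the only producer of $p^{goal}$). Firing $t^{\star}$ permanently removes the token from $p^{run}$, so no transition of $T$ can fire between $t^{\star}$ and $t^{\star\star}$; consequently the inhibitor condition of $t^{\star\star}$ (all original places empty) holds at exactly the marking produced by $t^{\star}$. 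Since $t^{\star}$ subtracts $M(p)$ from each $p$, the marking immediately before $t^{\star}$ fires must equal $M$ exactly, and the prefix using only original transitions witnesses reachability of $M$ in $N$.

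The one thing to verify carefully, which is the only non-trivial aspect, is that $N'$ genuinely lies in \Za{IT}: the original transfer arcs still respect $\sqsubseteq'$ because $\sqsubseteq'$ agrees with $\sqsubseteq$ on $P$, while the only new special arcs are the inhibitor arcs from \emph{all} of $P$ to $t^{\star\star}$, whose hierarchical closure property is vacuously satisfied. I do not expect any real obstacle here beyond being disciplined about the ``lock'' mechanism via $p^{run}$, which is what guarantees that the inhibitor test on $t^{\star\star}$ reads the marking produced by $t^{\star}$ rather than one that has drifted through further original firings; without this safeguard the reduction would only establish coverability, not reachability.
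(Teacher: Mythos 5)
Your proposal is correct, but it takes a genuinely different route from the paper. The paper derives the corollary by composing Theorem~\ref{cover-thm} (coverability is undecidable for Petri nets with two non-hierarchical transfer arcs and one inhibitor arc, via a weak Minsky-machine simulation) with the hierarchization construction of Theorem~\ref{thm:htpn-reach-dlf}, asserting that ``a construction similar to the above'' turns such a net into a \Za{IT} net while preserving coverability. You instead start from the undecidability of \emph{reachability} in \Za{T} (also Theorem~\ref{thm:htpn-reach-dlf}) and apply the classical reachability-to-coverability gadget: a lock place $p^{run}$, an exact-subtraction transition $t^{\star}$, and a final zero-test transition $t^{\star\star}$ carrying inhibitor arcs from \emph{all} original places. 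The key observation making this legitimate in the hierarchical setting --- and the one genuinely non-trivial point, which you correctly identify --- is that an inhibitor arc from every place of $P$ is automatically downward-closed under any total order that puts the fresh places above $P$, so $N'$ lands in \Za{IT}; the lock mechanism then guarantees the inhibitor test reads the marking produced by $t^{\star}$, so the marking just before $t^{\star}$ must equal $M$ exactly. Your argument is self-contained and avoids having to re-verify that the paper's hierarchization gadget preserves coverability in the presence of an extra inhibitor arc (a step the paper leaves at the level of a sketch); the trade-off is that both proofs ultimately lean on the same Theorem~\ref{thm:htpn-reach-dlf} construction, yours through its reachability conclusion rather than its gadget.
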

\begin{proof}
From Theorem \ref{cover-thm}, coverability is undecidable in Petri nets with two transfer arcs and one inhibitor arc. Given such a net $N$, we can perform a construction similar to above, to reduce the coverability problem in $N$ to coverability problem in a \Za{IT} net.
\end{proof}

\subsection{Hardness of Termination in \Za{IT}}
\label{sec:skolem}
Termination in \Za{IcT} is decidable as shown in Section \ref{term}. Termination in \Za{T} is also decidable, as it is known that termination in transfer Petri nets is decidable. However, it turns out that termination in \Za{IT} which subsumes the above two problems is as hard as the positivity problem which is a long standing open problem about linear recurrent sequences (\cite{positivity2},\cite{positivity1}). In the following we show the reduction from the positivity problem to the problem of termination in \Za{IT}.

\begin{definition}[Positivity Problem]
Given a matrix $M \in \mathbb{Z}^{n \times n}$ and a vector $v_0 \in \mathbb{Z}^n$,
is $M^k v_0 \geq 0$ for all $k \in \mathbb{N}$? 
\end{definition}

Given matrix $M \in \mathbb{Z}^{n \times n}$ and vector $v_0 \in \mathbb{Z}^n$,
we construct a net $N \in \Za{IT}$ such that $N$ does not terminate iff $M^k v_0 \geq 0$
for all $k \in \mathbb{N}$.
Consider the following
$\mathsf{while}$ loop program
\lstinline{v = v0; while (v >= 0) v = Mv}. Clearly, this program is non-terminating
iff $M^k v_0 \geq 0$ for all $k$. We construct a net $N$ which simulates
this linear program. $N$ contains two phases, a forward phase that has
the effect of multiplying $v$ by $M$, and a backward phase that takes the role
of assignment, i.e. assigning the new vector $M v$ computed in the forward phase
back to $v$. We also check for non-negativity in the backward phase, and design
the net $N$ to terminate if any component goes negative.
\begin{figure}[t]
\begin{minipage}{0.5\textwidth}
\centering
\scalebox{0.75}{
\begin{tikzpicture}
\draw (1,6) circle (0.3cm) node{$u_1$};
\draw (1,3) circle (0.3cm) node{$u_2$};
\draw (1,0) circle (0.3cm) node{$u_3$};
\draw [fill=black] (2,6.35) node[above]{$t_1$} rectangle (2.2,5.65);
\draw [fill=black] (2,3.35) node[above]{$t_2$} rectangle (2.2,2.65);
\draw [fill=black] (2,0.35) node[above]{$t_3$} rectangle (2.2,-0.35);
\draw (4,7) circle (0.3cm) node{$u_{11}$};
\draw (4,6) circle (0.3cm) node{$u_{12}$};
\draw (4,5) circle (0.3cm) node{$u_{13}$};
\draw (4,4) circle (0.3cm) node{$u_{21}$};
\draw (4,3) circle (0.3cm) node{$u_{22}$};
\draw (4,2) circle (0.3cm) node{$u_{23}$};
\draw (4,1) circle (0.3cm) node{$u_{31}$};
\draw (4,0) circle (0.3cm) node{$u_{32}$};
\draw (4,-1) circle (0.3cm) node{$u_{33}$};
\draw [fill=black] (7,7.25) node[above]{$t_{11}$} rectangle (7.2,6.75);
\draw [fill=black] (7,6.25) node[above]{$t_{12}$} rectangle (7.2,5.75);
\draw [fill=black] (7,5.25) node[above]{$t_{13}$} rectangle (7.2,4.75);
\draw [fill=black] (7,4.25) node[above]{$t_{21}$} rectangle (7.2,3.75);
\draw [fill=black] (7,3.25) node[above]{$t_{22}$} rectangle (7.2,2.75);
\draw [fill=black] (7,2.25) node[above]{$t_{23}$} rectangle (7.2,1.75);
\draw [fill=black] (7,1.25) node[above]{$t_{31}$} rectangle (7.2,0.75);
\draw [fill=black] (7,0.25) node[above]{$t_{32}$} rectangle (7.2,-0.25);
\draw [fill=black] (7,-0.75) node[above]{$t_{33}$} rectangle (7.2,-1.25);
\draw (11,6) circle (0.3cm) node{$u'_1$};
\draw (11,3) circle (0.3cm) node{$u'_2$};
\draw (11,0) circle (0.3cm) node{$u'_3$};
\draw (11,9) circle (0.3cm) node{$G'$};
\draw (4,9) circle (0.3cm) node{$G$};
\draw[-latex,thick] (1.3,3) -- node[above]{1} (2,3);
\draw[-latex,thick] (1.3,6) -- node[above]{1} (2,6);
\draw[-latex,thick] (1.3,0) -- node[above]{1} (2,0);

\draw[-latex,thick] (2.2,3) -- node[above]{4} (3.7,4);
\draw[-latex,thick] (2.2,3) -- node[above]{5} (3.7,3);
\draw[-latex,thick] (2.2,3) -- node[above]{6} (3.7,2);
\draw[-latex,thick] (2.2,6) -- node[above]{1} (3.7,7);
\draw[-latex,thick] (2.2,6) -- node[above]{2} (3.7,6);
\draw[-latex,thick] (2.2,6) -- node[above]{3} (3.7,5);
\draw[-latex,thick] (2.2,0) -- node[above]{7} (3.7,1);
\draw[-latex,thick] (2.2,0) -- node[above]{8} (3.7,0);
\draw[-latex,thick] (2.2,0) -- node[above]{9} (3.7,-1);
\draw[-latex,thick] (4.3,-1) -- node[above]{1} (7,-1);
\draw[-latex,thick] (4.3,0) -- node[above]{1} (7,0);
\draw[-latex,thick] (4.3,1) -- node[above]{1} (7,1);
\draw[-latex,thick] (4.3,2) -- node[above]{1} (7,2);
\draw[-latex,thick] (4.3,3) -- node[above]{1} (7,3);
\draw[-latex,thick] (4.3,4) -- node[above]{1} (7,4);
\draw[-latex,thick] (4.3,5) -- node[above]{1} (7,5);
\draw[-latex,thick] (4.3,6) -- node[above]{1} (7,6);
\draw[-latex,thick] (4.3,7) -- node[above]{1} (7,7);
\draw[-latex,thick] (7.2,7) node[xshift=0.5cm,yshift=0.1cm]{1} -- (11,6.3);
\draw[-latex,thick] (7.2,6) node[xshift=0.5cm,yshift=-0.2cm]{1} -- (11,3.3);
\draw[-latex,thick] (11,0.3) node[xshift=-0.8cm,yshift=0.6cm]{1} -- (7.2,5);
\draw[-latex,thick] (10.7,6) node[xshift=-0.5cm,yshift=0cm]{1} -- (7.2,4);
\draw[-latex,thick] (10.7,3) node[xshift=-0.5cm,yshift=0.2cm]{1} -- (7.2,3);
\draw[-latex,thick] (10.7,0) node[xshift=-0.5cm,yshift=0cm]{1} -- (7.2,2);
\draw[-latex,thick] (7.2,1) node[xshift=0.5cm,yshift=0.3cm]{1} -- (11,5.7);
\draw[-latex,thick] (11,2.7) node[xshift=-0.8cm,yshift=-0.4cm]{1} -- (7.2,0);
\draw[-latex,thick] (7.2,-1) node[xshift=0.5cm,yshift=-0.1cm]{1} -- (11,-0.3);
% from/to G'
\draw[-latex,dotted] (7.2,7) -- node[above]{6} (11,8.7);
\draw[-latex,dotted] (7.2,6) -- node[above]{15} (11,8.7);
\draw[-latex,dotted] (11,8.7) -- node[above]{24} (7.2,5);
\draw[-latex,dotted] (11,8.7) -- node[above]{6} (7.2,4);
\draw[-latex,dotted] (11,8.7) -- node[above]{15} (7.2,3);
\draw[-latex,dotted] (11,8.7) -- node[above]{24} (7.2,2);
\draw[-latex,dotted] (7.2,1) -- node[above]{6} (11,8.7);
\draw[-latex,dotted] (11,8.7) -- node[above]{15} (7.2,0);
\draw[-latex,dotted] (7.2,-1) -- node[above]{24} (11,8.7);
% from G
\draw[-latex,dotted] (4,8.7) -- node[above]{1} (7,0);
\draw[-latex,dotted] (4,8.7) -- node[above]{1} (7,1);
\draw[-latex,dotted] (4,8.7) -- node[above]{1} (7,2);
\draw[-latex,dotted] (4,8.7) -- node[above]{1} (7,3);
\draw[-latex,dotted] (4,8.7) -- node[above]{1} (7,4);
\draw[-latex,dotted] (4,8.7) -- node[above]{1} (7,5);
\draw[-latex,dotted] (4,8.7) -- node[above]{1} (7,6);
\draw[-latex,dotted] (4,8.7) -- node[above]{1} (7,7);
\draw[-latex,dotted] (4,8.7) -- node[above]{1} (7,-1);
\draw[dotted] (0.5,-2) -- node[above]{$G_1$} (2.5,-2) -- (2.5,8) -- (0.5,8) -- (0.5,-2);
\draw[dotted] (3.5,-2) -- node[above]{$G_2$} (7.5,-2) -- (7.5,8) -- (3.5,8) -- (3.5,-2);
\draw[dotted] (10.5,-2) -- node[above]{$G_3$} (11.5,-2) -- (11.5,8) -- (10.5,8) -- (10.5,-2);
\end{tikzpicture}
}
\caption{Forward phase}
\label{fig:forward_phase}
\end{minipage}
\hspace{1.5cm}
\begin{minipage}{0.5\textwidth}
\centering
\scalebox{0.75}{
\begin{tikzpicture}

\tikzstyle{place}=[draw=none]

\node[draw,circle] at  (6,4) (A) {$u_1$};
\node[draw,circle] at (6,2) (B) {$u_2$};
\node[draw,circle] at (6,0) (C) {$u_3$};

\node[draw,circle] at  (3,4) (A1) {$u_1'$};
\node[draw,circle] at (3,2) (B1) {$u_2'$};
\node[draw,circle] at (3,0) (C1) {$u_3'$};

\node[draw,circle] at (3,6) (G1) {$G'$};
\node[draw,circle] at (6,6) (G) {$G$};

\draw [fill=black] (4.4,1) node[below]{$t_R$} rectangle (4.6,5);

\draw[->,thick] (A1) -- (A) node[near start, above] {};
\draw[->,thick] (B1) -- (B) node[near start, above] {};
\draw[->,thick] (C1) to[out=80,in=100] (C) node[above] {};
\draw[->,thick] (G1) to[out=-80,in=-100] (G) node[above] {};

\draw[-o,thick] (G) to[out=-160,in=130] node[above]{} (4.4,4.5);

\end{tikzpicture}
}
\caption{Backward phase: Arc from $G$ to $t_R$ is an inhibitor arc, rest are transfer arcs}
\label{fig:backward_phase}
\end{minipage}

\end{figure}

\medskip
\noindent\textbf{Forward Phase:} The construction of the forward phase petri net for a general matrix is explained below. An example of the construction is shown in Figure \ref{fig:forward_phase} for the matrix: 
\[
M=
  \begin{bmatrix}
    1 & -4 & 7\\
    2 & -5 & -8\\
    -3 & -6 & 9
  \end{bmatrix}
\]

We have n places, $u_1,u_2, \ldots,u_n$, corresponding to the n components of vector $v$. 
Each place $u_i$ is connected to a transition $t_i$ with a pre-arc weight of $1$.
Each $t_i$ also has a post-arc to a new place $u_{ij}$ for $1 \leq i, j \leq n$ with a weight $|M_{ji}|$,
i.e. the absolute value of the $(j, i)^{th}$ entry of matrix $M$, corresponding to $v_i$'s
contribution to the new value of $v_j$.
Finally, we have places $u'_1,u'_2, \ldots,u'_n$, corresponding to the n components
of the new value of vector $v$. Each place $u_j'$ is connected to place $u_{ij}$ by a
transition $t_{ij}$, with both the arcs weighted 1. If $M_{ji} \geq 0$, then $u_{ij}$ has a pre-arc
to $t_{ij}$ and $t_{ij}$ has a post-arc to $u_j'$. This has the effect of adding the value
of $u_{ij}$ to $u_j'$. On the other hand, if $M_{ji} < 0$, then
both $u_{ij}$ and $u_j'$ have pre-arcs to $t_{ij}$, which has the effect of subtracting
value of $u_{ij}$ from $u_j'$.

This simulates the forward phase, in effect multiplying the vector $v$, represented
by $u_i$'s in Figure~\ref{fig:forward_phase} by $M$ and storing the new components
in $u_i'$'s. To simulate the while loop program, we need to copy back each $u_i'$
to $u_i$, while performing the check that each $u_i'$ is non-negative.

\textbf{Backward phase :}
The copy back in backward phase (Fig. \ref{fig:backward_phase}) is demonstrated by a transfer arc from $u_i'$ to $u_i$ via
transition $t_R$. To ensure that the backward phase starts only after
the forward phase completes, (else, partially computed values would be copied back), we introduce a new place
$G$. $G$ stores as many tokens as the total number of times each transition
$t_{ij}$ will fire and has a pre-arc weighted 1 to each transition $t_{ij}$.
The emptiness of $G$ ensures that each $t_{ij}$ has
completed its firings in the current loop iteration. An inhibitor arc from $G$
to $t_R$ ensures that the forward phase completes before $t_R$ fires. We introduce a place $G'$ which computes the initial value of $G$ for next loop.
$G'$ has an arc connected
to $t_{ij}$ with weight $\sum_{k=1}^n |M_{kj}|$. If $u_j'$ has a pre-arc to $t_{ij}$,
then $G'$ has a pre-arc to $t_{ij}$ while if $t_{ij}$ has a post-arc to $u_j'$, then
it also has a post-arc to $G'$.
Finally, there is a transfer arc from $G'$ to $G$ via $t_R$.
Once
the forward phase finishes, the place $G$ is empty, hence, the only transition that
can fire is $t_R$, which completes the backward phase in one firing.
Combining the forward and backward phases, we obtain a net $N$ which simulates
the while loop program. The initial
marking assigns $(v_0)_i$, i.e. the $i$-th component of vector $v_0$ to place $u_i$,
and $\sum_{1 \leq i\leq n} (\sum_{1\leq j\leq n}|M_{ji}|)(v_0)_{i}$ tokens to $G$, while all other places are
assigned 0 tokens. The below lemma (see appendix \ref{appendix-62}) relates termination of $N$ with the
Positivity problem. 

\begin{lemma}
\label{nonterminating}
There exists a non-terminating run in $N$ iff $M^k v_0 \geq 0$ for all $k \in \mathbb{N}$. 
\end{lemma}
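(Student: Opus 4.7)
The plan is to set up a canonical-marking correspondence between iterations of the abstract $\mathsf{while}$-loop and between-$t_R$-firings of $N$. For each $k\ge 0$, let $M_k$ denote the \emph{canonical marking after $k$ iterations}: $M_k(u_i)=(M^k v_0)_i$, $M_k(G)=\sum_{i,j}|M_{ji}|\,(M^k v_0)_i$, and $M_k(p)=0$ for every other place. By construction, $M_0$ is precisely the initial marking of $N$. The heart of the argument is a single-step simulation lemma: \emph{$M_{k+1}$ is reachable from $M_k$ in $N$ iff $M^{k+1}v_0\ge 0$, and in that case the only marking reachable from $M_k$ in which $t_R$ has just fired is $M_{k+1}$.} Granting this lemma, a straightforward induction shows that any firing sequence using exactly $k$ firings of $t_R$ ends in $M_k$, so (since the forward phase admits only finitely many firings before $t_R$ must fire, as each $t_i$ can fire at most $M(u_i)$ times and each $t_{ij}$ consumes a token from $G$) an infinite run exists iff $t_R$ fires infinitely often iff $M^k v_0\ge 0$ for every $k\in\mathbb{N}$.

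For the ``if'' half of the single-step lemma I would exhibit an explicit firing order from $M_k$. First, fire each $t_i$ exactly $(M^k v_0)_i$ times, emptying $u_i$ and loading $u_{ij}$ with $|M_{ji}|(M^k v_0)_i$ tokens. Next, fire every \emph{additive} $t_{ij}$ (those with $M_{ji}\ge 0$) until $u_{ij}$ is drained; these are always enabled since they only require tokens on $u_{ij}$ and on $G$. Finally, fire the \emph{subtractive} $t_{ij}$ (those with $M_{ji}<0$), each of which additionally consumes one token from $u_j'$. The cumulative change to $u_j'$ is $\sum_i M_{ji}(M^k v_0)_i=(M^{k+1}v_0)_j$, so the non-negativity hypothesis is exactly what is needed to guarantee that $u_j'$ never goes negative in this order; a parallel accounting on $G'$ shows that it gains and then loses the same weights, ending at $M_{k+1}(G)$ without dipping below zero. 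Once $G$ is empty, the inhibitor arc from $G$ to $t_R$ releases, and firing $t_R$ (whose transfer arcs push $u_i'\to u_i$ and $G'\to G$) produces $M_{k+1}$.

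The main technical obstacle is the converse half of the single-step lemma: ruling out that some clever interleaving could succeed even when $(M^{k+1}v_0)_j<0$ for some $j$. I would handle this with a Parikh-image argument. Since each firing of $t_{ij}$ consumes one token from $G$, and $u_{ij}$ is filled only by firings of $t_i$, the \emph{multiset} of firings required between two consecutive firings of $t_R$ (starting from $M_k$) is uniquely forced: each $t_i$ must fire $(M^k v_0)_i$ times and each $t_{ij}$ must fire $|M_{ji}|(M^k v_0)_i$ times. Summing the net effect on $u_j'$ over any such completed sequence yields $(M^{k+1}v_0)_j$, and since token counts stay non-negative throughout, a completion exists only when $(M^{k+1}v_0)_j\ge 0$ for every $j$; when it does, the resulting marking is independent of the interleaving and equals $M_{k+1}$. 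If some $(M^{k+1}v_0)_j<0$, then $u_j'$ must be driven to zero before every $u_{ij}$ can be drained, so $G$ retains a strictly positive residue, $t_R$ stays permanently inhibited, and since the bound on forward-phase firings noted above still applies, only finitely many further transitions fire and the run terminates. This dichotomy, combined with the induction above, closes both directions.
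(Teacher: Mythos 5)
Your proposal is correct and follows essentially the same route as the paper: an explicit firing schedule (all $t_i$, then additive $t_{ij}$, then subtractive $t_{ij}$, then $t_R$) for the positive direction, and for the converse the observation that emptying $G$ forces the exact multiset of $t_{ij}$ firings, so the phase completes iff $(M^{k+1}v_0)_j\ge 0$ for all $j$, with a finiteness bound on each forward phase closing the argument. The paper packages your Parikh-image accounting as an invariant maintained at every intermediate marking of the forward phase, but the content is the same.
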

From the above lemma, we derive the following theorem. Note that as we have only one transition with inhibitor and transfer arcs, $N$ is a \Zb{T}{I} as well as \Za{IT}. 
\begin{theorem}
\label{thm:skolem-hard}
Termination in \Za{IT} as well as \Zb{T}{I} is as hard as the positivity problem.
\end{theorem}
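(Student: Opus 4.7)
The plan is to combine the construction laid out in this subsection with Lemma~\ref{nonterminating} into a polynomial-time reduction from the positivity problem to termination. Given an instance $(M,v_0)$ of positivity, with $v_0\ge 0$ assumed (if $v_0 \not\ge 0$ then positivity fails immediately and I would output a trivially terminating net), the construction produces a net $N$ in which \emph{only} the transition $t_R$ carries any special arcs: the single inhibitor arc runs from $G$ to $t_R$, and every transfer arc is incident on $t_R$. Because only one transition bears special arcs, the hierarchy condition is satisfied vacuously under any total order on the places. Hence $N$ lies simultaneously in \Za{IT} and in \Zb{T}{I}, so a decision procedure for termination in either class would decide positivity, establishing the stated hardness.

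The substantive content lies in Lemma~\ref{nonterminating}. For the direction ``$M^k v_0\ge 0$ for all $k$ implies non-termination,'' I would exhibit an explicit infinite run, organised into iterations delimited by firings of $t_R$. At the start of iteration $k$, the places $u_i$ hold $(M^{k-1}v_0)_i$ tokens; I fire every $t_i$ exhaustively, depositing $|M_{ji}|\,(M^{k-1}v_0)_i$ tokens into each $u_{ij}$, and then, for each $j$, fire all positive-contribution transitions $t_{ij}$ (those with $M_{ji}\ge 0$) before the negative-contribution ones. The running value of $u_j'$ stays between $0$ and $(M^k v_0)_j\ge 0$, so every subtraction is enabled; by design of the weights, $G$ reaches $0$ exactly when the forward phase completes, at which point the inhibitor arc enables $t_R$, which transfers $u_j'\mapsto u_j$ and $G'\mapsto G$ to set up iteration $k+1$.

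For the converse, I would assume $N$ admits an infinite run $\sigma$ and argue $M^k v_0\ge 0$ for all $k$ by induction on $k$. The key structural fact is that no transition other than $t_R$ deposits tokens into $\{u_1,\dots,u_n\}\cup\{G\}$, so each maximal segment of $\sigma$ between consecutive $t_R$-firings is bounded in length: $t_i$ can fire at most $u_i$ times and then $t_{ij}$ at most $u_{ij}$ times, after which no non-$t_R$ transition is enabled. Hence $\sigma$ must fire $t_R$ infinitely often, so every forward phase completes in full. A counting argument on $G$ then forces each $t_{ij}$ to fire exactly $|M_{ji}|$ times per token originally in $u_i$, so $u_j'=(Mv)_j$ at completion; non-negativity of $u_j'$ throughout the firings yields $(Mv)_j\ge 0$. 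The induction step then propagates the invariant $u_i=(M^{k}v_0)_i$ across the $k$-th firing of $t_R$.

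The main obstacle is making the converse precise, since the semantics permits extensive interleaving of $t_i$ and $t_{ij}$ firings within and (potentially) across iteration boundaries. The technical fulcrum is a pair of invariants maintained throughout $\sigma$: $G$ equals the number of $t_{ij}$-firings still owed in the current iteration, and $G'=\sum_j \bigl(\sum_k|M_{kj}|\bigr)\,u_j'$. Verifying that these invariants are preserved under arbitrary schedules -- and hence that $t_R$ can fire only at the honest completion of a forward phase -- is where the bulk of the work sits; the construction's use of $G'$ with weight $\sum_k|M_{kj}|$ is precisely what makes the second invariant inductive across $t_R$-firings, so that $G$ is correctly re-initialised for iteration $k+1$.
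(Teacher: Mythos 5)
Your proposal is correct and follows essentially the same route as the paper: the same forward/backward-phase construction, the same observation that $t_R$ is the unique transition carrying special arcs (so hierarchy is vacuous and $N$ lies in both \Za{IT} and \Zb{T}{I}), the same explicit schedule for the positive direction, and for the converse the same combination of (i) boundedness of each segment between $t_R$-firings, (ii) the counting role of $G$ forcing every $t_{ij}$ to fire fully before $t_R$ is enabled, and (iii) non-negativity of markings yielding $M^k v_0 \geq 0$ --- which is the contrapositive form of the paper's Lemmas~\ref{lemma:nont-run}--\ref{lemma:multiplication}. The only quibble is the claim that $u_j'$ stays below $(M^k v_0)_j$ during the forward phase (it can overshoot during the incrementing firings before the decrements bring it down), but this does not affect the argument, which only needs the lower bound.
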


\section{Conclusion}
In this paper, we investigated the effect of hierarchy on Petri nets extended with not only inhibitor arcs (as classically considered), but also reset and transfer arcs. For four of the standard decision problems, we settled the decidability for almost all these extensions using different reductions and proof techniques. As future work, we are interested in questions of boundedness and place-boundedness in these extended classes. We would also like to explore further links to problems on linear recurrences. We leave open one technical question of coverability for Petri nets with 1 reset and 1 inhibitor arc (without hierarchy). 

\section*{Acknowledgments}
We thank Alain Finkel and Mohamed Faouzi Atig for insightful discussions and pointers to earlier results and reductions.
\bibliographystyle{plain}
\bibliography{papers}

\appendix
\section{Appendix}
 \label{app:main}

\subsection{Reduction from Reachability to Deadlockfreeness}
In this subsection, we shall discuss a reduction from Reachability in Petri nets to Deadlockfreeness in Petri nets. Further, this reduction only involves addition of classical arcs, and hence, this reduction holds for all extensions of Petri nets being considered in this paper.
\begin{figure}[h]
    \centering
        \begin{tikzpicture}
            \node[circle, draw] at (5,6) (P1) {$p_1$};
            \node[circle, draw] at (5,5) (P2) {$p_2$};
            \node[circle, draw] at (5,4) (P3) {$p_3$};
            \node[circle, draw] at (1,6) (P*) {$p_*$};
            \node[circle, draw] at (4,1) (PH) {$p_{\#}$};
            \node[circle, draw] at (9,4) (P**) {$p_{**}$};
            \node at (4,7.1) (N) {Original Net};
            \draw[dotted] (2,3) rectangle (6,7);
            \draw[fill] (2.9,5.65) rectangle node[xshift=0.2cm, yshift=-0.35cm] {t} (3.1,6.35);
            \draw[fill] (6.65,3.9) rectangle node[xshift=0.2cm, yshift=-0.35cm] {$t_3$} (7.35,4.1);
            \draw[fill] (6.65,4.9) rectangle node[xshift=0.2cm, yshift=-0.35cm] {$t_2$} (7.35,5.1);
            \draw[fill] (6.65,5.9) rectangle node[xshift=0.2cm, yshift=-0.35cm] {$t_1$} (7.35,6.1);
            \draw[fill] (8.9,5.65) rectangle node[xshift=0.2cm, yshift=0.55cm] {$t_{**}$} (9.1,6.35);
            \draw[fill] (1.65,1.9) rectangle node[xshift=2.2cm, yshift=-0.35cm] {$t_{\#}$} (6.35,2.1);
            \draw[-latex,thick] (P**) .. controls (8.5,5) .. (8.9,6);
            \draw[-latex,thick] (9.1,6) .. controls (9.5,5) .. (P**);
            \draw[-latex,thick] (P*) -- (2.9,6);
            \draw[-latex,thick] (3.1,6) .. controls (4,6) and (4,7) .. (P*);
            \draw[-latex,thick] (P1) .. controls (6,5.6) .. (7,5.9);
            \draw[-latex,thick] (7,6.1) .. controls (6,6.4) .. (P1);
            \draw[-latex,thick] (P2) .. controls (6,4.6) .. (7,4.9);
            \draw[-latex,thick] (7,5.1) .. controls (6,5.4) .. (P2);
            \draw[-latex,thick] (P3) .. controls (6,3.6) .. (7,3.9);
            \draw[-latex,thick] (7,4.1) .. controls (6,4.4) .. (P3);
            %%%%%%%%%%%%%%%%%%%%%%%%%%%%%%%%%%%%%%%%%%%%%%%%%%%%%%%%%%%%%%%%
            \draw[-latex,thick] (P*) .. controls (1,4) and (2,4) .. (2,2.1);
            \draw[-latex,thick] (P**) .. controls (9,3) and (6,3) .. (6,2.1);
            \draw[-latex,thick] (P1) .. controls (3.5,6) .. node[near start,xshift=0.15cm,yshift=-0.2cm] {\tiny $M(p_1)$} (3.5,2.1);
            \draw[-latex,thick] (P2) .. controls (4,5) .. node[near start,xshift=0.15cm,yshift=-0.4cm] {\tiny $M(p_2)$} (4,2.1);
            \draw[-latex,thick] (P3) .. controls (4.5,4) .. node[near start,xshift=0.3cm,yshift=-0.6cm] {\tiny $M(p_3)$} (4.5,2.1);
            \draw[-latex,thick] (4,1.9) -- (PH);
        \end{tikzpicture}
    \caption{Reduction from Reachability to Deadlockfreeness}
    \label{fig:reach_ddlk}
\end{figure}
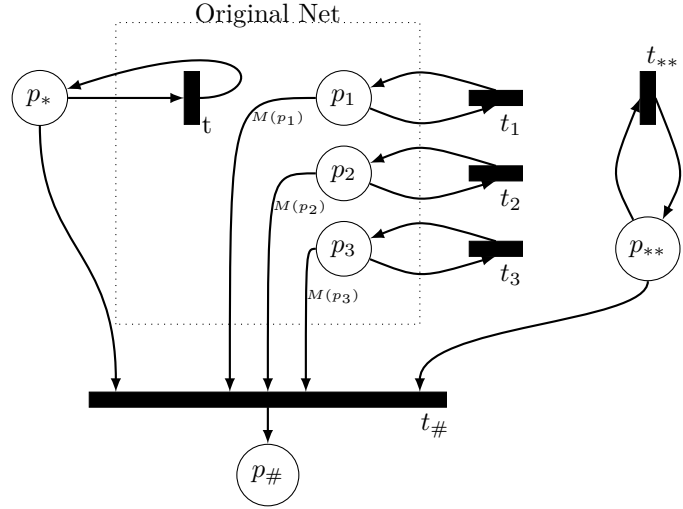
In Figure \ref{fig:reach_ddlk}, the construction has been described for reachability of marking $M$ in original net to deadlockfreeness in constructed net. 
\begin{itemize}
    \item We have added three places, $p_*$, $p_{**}$ and $p_{\#}$
    \item $p_*$ has a pre-arc and post-arc to every transition in the original net (as shown for transition $t$ in diagram)
    \item For every place $p_i$ in the original net, we add transition $t_i$ as shown in the diagram
    \item From each place $p_i$ we have a pre-arc of weight $M(p_i)$ to transition $t_\#$
    \item In initial marking in constructed net, $M'_0(p)=M_0(p)$ for all places in original net, $M'_0(p_*)=M'_0(p_{**})=1$ and $M'_0(p_{\#})=0$.
\end{itemize}
Let, for any marking $M$ in original net, define $M^{ext}$ in constructed net, such that, $M^{ext}(p)=M(p)$ for all places in original net, $M^{ext}(p_*)=M^{ext}(p_{**})=1$ and $M^{ext}(p_{\#})=0$. Thus, $M'_0=M_0^{ext}$.
\begin{lemma}\label{ddlk_forward}
If marking $M$ is reachable in the original net, then a deadlocked marking is reachable in the constructed net.
\end{lemma}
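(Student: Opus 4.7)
The plan is to lift the witnessing run to the constructed net, fire the test transition $t_{\#}$ exactly once, and verify that the resulting marking is a deadlock.

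First, suppose $\rho$ is a firing sequence with $M_0 \xrightarrow{\rho} M$ in the original net. I would prove by induction on $|\rho|$ that $M_0^{ext} \xrightarrow{\rho} M^{ext}$ in the constructed net. The only change the construction makes to any original transition $t$ is the addition of a weight-$1$ self-loop through $p_*$; since $M_0^{ext}(p_*) = 1$ and every firing of such a $t$ consumes and immediately restores this token, firability of $t$ in the constructed net exactly mirrors firability in the original, the marking on the original places evolves identically, and the auxiliary places $p_*, p_{**}, p_{\#}$ remain unchanged. Hence the invariant $p_* = p_{**} = 1$ and $p_{\#} = 0$ is preserved along $\rho$, so we reach $M^{ext}$.

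Next, I would fire $t_{\#}$ from $M^{ext}$. By construction, the pre-arcs of $t_{\#}$ are weight $1$ from $p_*$, weight $1$ from $p_{**}$, and weight $M(p_i)$ from each original place $p_i$, while its only post-arc is to $p_{\#}$. All pre-conditions are met at $M^{ext}$, so $t_{\#}$ fires and produces a marking $M^{\star}$ with $M^{\star}(p_i) = 0$ for every original place $p_i$, $M^{\star}(p_*) = M^{\star}(p_{**}) = 0$, and $M^{\star}(p_{\#}) = 1$.

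Finally I would enumerate the transitions of the constructed net and verify that each is disabled at $M^{\star}$: every original transition has $p_*$ in its preset and $M^{\star}(p_*) = 0$; the transition $t_{**}$ needs $p_{**}$, which is empty; each auxiliary $t_i$ needs $p_i$, which is empty since the tokens of $M$ have been drained by $t_{\#}$; and $t_{\#}$ itself needs $p_*$. Thus $M^{\star}$ is a reachable deadlock, proving the lemma. There is no substantive obstacle here; the only concern is the bookkeeping in the inductive lifting step, in particular checking that no original transition accidentally touches $p_{**}$ or $p_{\#}$, which is immediate from the fact that the construction only attaches a self-loop through $p_*$ to these transitions. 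The auxiliary transitions $t_i$ and $t_{**}$ play no role in the forward direction beyond being obviously disabled at $M^{\star}$; they exist to rule out spurious deadlocks in the (unclaimed here) converse direction, and need only be acknowledged as harmless in this proof.
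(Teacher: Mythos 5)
Your proposal is correct and follows essentially the same route as the paper's proof: lift $\rho$ to the constructed net to reach $M^{ext}$ (by induction on $|\rho|$, using that the only modification to original transitions is the self-loop through $p_*$), fire $t_{\#}$ once to reach the all-zero-except-$p_{\#}$ marking $M^{\star}$, and check that $M^{\star}$ is deadlocked. Your explicit enumeration of why each transition class is disabled at $M^{\star}$ is slightly more detailed than the paper's, but the argument is the same.
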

\begin{proof}
Given, $M$ is reachable in the original net. Let the corresponding run be $\rho$. Then, run $\rho$ is a run reaching $M^{ext}$ in the constructed net (this can be proved by induction over length of $\rho$).\\
Further, at marking $M^{ext}$, transition $t_\#$ is firable. On firing $t_\#$ at $M^{ext}$, we reach marking $M^{\star}$ in constructed net, where $M^{\star}(p_\#)=1$ and for any other place $p$, $M^{\star}(p)=0$.\\
But $M^{\star}$ as described above is a deadlocked marking. Thus, run $\rho t_\#$ reaches a deadlocked marking in the constructed net. Hence proved.
\end{proof}
\begin{lemma}\label{ddlk_mark}
$M^{\star}$ as described in Proof of Lemma \ref{ddlk_forward} is the only reachable deadlocked marking in the constructed net.
\end{lemma}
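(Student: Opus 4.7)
The plan is to characterize the reachable deadlocked markings by reasoning about how the tokens at $p_*$, $p_{**}$, and $p_\#$ can evolve, and to exploit the self-looping transitions $t_i$ that were added in the construction for precisely this purpose.

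First I would make two structural observations about the net. Observation (a): the only transition that consumes a token from $p_*$ is $t_\#$, and no transition produces tokens on $p_*$ (the original transitions are loops on $p_*$, and $t_{**}$, $t_i$ do not touch $p_*$). Since $M'_0(p_*)=1$, this means $p_*$ holds $1$ token until $t_\#$ fires, and $0$ tokens forever afterwards; in particular, $t_\#$ fires at most once in any run. The same holds for $p_{**}$, with the difference that $t_{**}$ loops on $p_{**}$, hence $t_{**}$ is firable exactly as long as $t_\#$ has not yet fired. Observation (b): by construction, $t_i$ is a self-loop on $p_i$, so $t_i$ is firable at any marking with $M(p_i)\ge 1$.

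Next I would do a case split on any reachable marking $M'$. If $t_\#$ has not yet fired in the run reaching $M'$, then by (a) we have $M'(p_{**})=1$, so $t_{**}$ is firable and $M'$ is not deadlocked. Suppose instead that $t_\#$ has been fired, and let $M''$ be the marking at which it was fired. Then $M'(p_*)=M'(p_{**})=0$ and $M'(p_\#)=1$, while $M'(p_i)=M''(p_i)-M(p_i)$ for every original place $p_i$ (and $t_\#$ cannot be fired again). If $M'(p_i)>0$ for some $i$, then by (b), $t_i$ is firable at $M'$, so $M'$ is not deadlocked.

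Thus any reachable deadlocked marking must arise after firing $t_\#$ and must satisfy $M'(p_i)=0$ for every original place $p_i$, which forces $M''(p_i)=M(p_i)$ and therefore $M'=M^{\star}$. Conversely, $M^{\star}$ is easily verified to be deadlocked: every original transition needs a token on $p_*$, $t_{**}$ needs one on $p_{**}$, $t_\#$ needs one on $p_*$ and $p_{**}$, and each $t_i$ needs one on $p_i$, all of which are empty in $M^{\star}$. Hence $M^{\star}$ is the unique reachable deadlocked marking.

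The only subtle step is ruling out ``partial'' firings of $t_\#$ on an overly enriched marking, but this is handled precisely by the added loop transitions $t_i$: they guarantee that any surplus tokens at some $p_i$ remaining after $t_\#$ fires keep the net alive. Once this is established, the statement follows without further bookkeeping.
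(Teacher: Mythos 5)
Your proof is correct and follows essentially the same route as the paper's: you use the self-loop transitions $t_i$ and $t_{**}$ to force all original places and $p_{**}$ to be empty in any deadlocked marking, and a token-conservation argument on $p_*$, $p_{**}$, $p_\#$ (which the paper states as the inductive invariant $M(p_*)=M(p_{**})=1-M(p_\#)$, and which you phrase as ``$t_\#$ is the only transition with a net effect on these places and fires at most once'') to conclude $M(p_*)=0$ and $M(p_\#)=1$. The two presentations differ only in bookkeeping, so no further comment is needed.
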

\begin{proof}
Consider any reachable deadlocked marking $M$. We shall prove that $M=M^{\star}$. Firstly, for all places $p_i$ present in the original net, $M(p_i)=0$, else transition $t_i$ would be firable. Similarly, $M(p_{**})=0$, else transition $t_{**}$ can fire. But we can show by induction that for any reachable marking $M_1$ $M_1(p_{**})=M_1(p_*)=1-M_1(p_\#)$. This implies that $M(p_*)=0$ and $M(p_\#)=1$. But then, $M=M^{\star}$.
\end{proof}
\begin{lemma}\label{ddlk_backward}
If a deadlocked marking is reachable in the constructed net, then marking $M$ can be reached in the original net.
\end{lemma}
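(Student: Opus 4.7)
The plan is to invert the construction: a run that reaches a deadlock in the constructed net must end with $t_{\#}$ fired at the marking $M^{ext}$, and the preceding prefix can be converted into a run of the original net reaching $M$.

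By Lemma~\ref{ddlk_mark}, the only deadlocked marking reachable in the constructed net is $M^{\star}$, so I start from an arbitrary run $\sigma$ from $M'_{0}$ to $M^{\star}$. The first observation is that $t_{\#}$ is the unique transition that deposits a token in $p_{\#}$, and once $t_{\#}$ fires it empties both $p_{*}$ and $p_{**}$. Since every other transition in the constructed net (each original-net transition needs $p_{*}$, while $t_{i}$ and $t_{**}$ are self-loops on $p_{i}$ and $p_{**}$) requires at least one of these tokens and $t_{\#}$ itself needs both, no transition is firable after $t_{\#}$. Hence $\sigma$ has the form $\sigma' \cdot t_{\#}$ with $t_{\#}$ occurring exactly once. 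Writing $M'$ for the marking just before firing $t_{\#}$ and solving $M' \xrightarrow{t_{\#}} M^{\star}$ componentwise, using the pre-arc weights of $t_{\#}$ (weight $1$ from $p_{*}$ and $p_{**}$, weight $M(p_{i})$ from each original place $p_{i}$), forces $M' = M^{ext}$.

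It remains to extract a run of the original net from $\sigma'$. In $\sigma'$, $t_{\#}$ cannot appear, because $p_{\#}$ is still $0$ at its end and no transition consumes tokens from $p_{\#}$. The transitions $t_{i}$ and $t_{**}$ are pure self-loops and leave every marking unchanged when fired, so I can remove all their occurrences from $\sigma'$ without altering the marking trajectory, obtaining a sequence $\sigma''$ whose transitions lie entirely in the original net. Each original-net transition in the constructed net differs from its counterpart only by an added self-loop on $p_{*}$ of weight $1$, so restricting the marking trajectory of $\sigma''$ to the original places yields a valid run of the original net from $M_{0}$ to $M$. The delicate step is the very first one, namely verifying that $t_{\#}$ must be the final transition and pinning down the preceding marking as $M^{ext}$; once that is done, the rest is a routine component-wise check against the construction in Figure~\ref{fig:reach_ddlk}, and it goes through equally well for nets with special arcs since the added structure only involves classical arcs.
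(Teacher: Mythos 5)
Your proof is correct and follows essentially the same route as the paper's: identify $M^{\star}$ via Lemma~\ref{ddlk_mark}, force the run to end with $t_{\#}$ fired at $M^{ext}$, strip the self-loop transitions, and project onto the original net. One small slip in your justification: $t_{i}$ is a self-loop on $p_{i}$ alone and needs no token in $p_{*}$ or $p_{**}$, so it \emph{can} fire after $t_{\#}$ if $p_{i}$ retains tokens; this is harmless because such firings leave the marking unchanged (the paper sidesteps the issue by deleting all $t_{i}$/$t_{**}$ occurrences from the run \emph{before} arguing that $t_{\#}$ must be last), and reachability of $M^{\star}$ itself then forces the pre-$t_{\#}$ marking to be exactly $M^{ext}$ as you conclude.
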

\begin{proof}
Given, a deadlocked marking is reachable in the constructed net. By Lemma \ref{ddlk_mark}, the marking is $M^{\star}$. Let $\rho_1$ be the run reaching $M^{\star}$. Let $\rho$ be the run obtained from $\rho_1$ by removing the $t_i$ and $t_{**}$ transitions. This is again a valid run to $M^{\star}$ (since on firing $t_i$ or $t_{**}$, the marking remains unchanged). Now, consider any firing of $t_\#$ in the run $\rho$. In the marking reached after firing $t_\#$, there are zero tokens in $p_*$. Thus, no transition other than $t_i$ and $t_{**}$ is firable in this marking. This implies that $t_\#$ can only be the last transition fired in $\rho$. Also, since $M^{\star}(p_\#)=1$ (and only transition $t_\#$ puts tokens into $p_\#$), the last transition fired must be $t_\#$. Thus, $\rho$ is of the form $\mu t_\#$, where $\mu$ is a run over the transitions in the original net. Now, if we consider run $\mu$ in the original net, we notice that it must reach $M$, since $\mu$ must reach $M^{ext}$ in the constructed net. Thus, marking $M$ is reachable in the original net.
\end{proof}
\begin{theorem}\label{thm:folklore}
Reachability in Petri nets is reducible to Deadlockfreeness in Petri nets.
\end{theorem}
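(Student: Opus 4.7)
The plan is to turn the construction of Figure~\ref{fig:reach_ddlk} into a many-one reduction and chain together the three preceding lemmas. Given a reachability instance $(N, M_0, M)$, I would output the constructed net $N'$ together with the initial marking $M'_0 = M_0^{\mathrm{ext}}$ defined just before Lemma~\ref{ddlk_forward}. This reduction is polynomial and adds only standard arcs (weighted pre-arcs carrying $M(p_i)$ tokens out of each original place into $t_\#$, plus self-loops through $p_*$ and $p_{**}$), so it transports cleanly into every sub-class of Petri nets with special arcs considered in the paper.

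The correctness argument has two directions. The ``if'' direction is exactly Lemma~\ref{ddlk_forward}: any run of $N$ reaching $M$ lifts to a run of $N'$ reaching $M^{\mathrm{ext}}$ because the self-loop on $p_*$ through every original transition makes the simulation faithful, and then $t_\#$ becomes firable and its firing produces the deadlocked marking $M^\star$. The ``only if'' direction combines Lemmas~\ref{ddlk_mark} and~\ref{ddlk_backward}: any reachable deadlock of $N'$ must equal $M^\star$, and any run reaching $M^\star$ necessarily ends with a firing of $t_\#$ (the only producer of $p_\#$), preceded by a sequence of transitions that, after deleting the idempotent $t_i$ and $t_{**}$ firings, is a run of $N$ reaching $M$.

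The main technical obstacle is the invariant $M_1(p_*) = M_1(p_{**}) = 1 - M_1(p_\#)$ underlying the uniqueness of the deadlock in Lemma~\ref{ddlk_mark}; I would verify it by induction on run length, noting that $p_\#$ changes only via $t_\#$, which simultaneously depletes $p_*$ and $p_{**}$. This invariant also ensures that the gadget introduces no spurious deadlocks: as long as $p_\#$ is empty the auxiliary transition $t_{**}$ remains firable, and once $p_\#$ carries a token all other transitions of $N'$ are disabled (since they require a token in $p_*$). Combining these observations, $M$ is reachable in $N$ if and only if $N'$ has some reachable deadlocked marking; since decidability is preserved under complement, this yields the desired reduction to \dlf, establishing the theorem.
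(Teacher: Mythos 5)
Your proposal is correct and follows essentially the same route as the paper: the same gadget (places $p_*$, $p_{**}$, $p_\#$ with the draining transitions $t_i$, the self-loop $t_{**}$, and the final $t_\#$ consuming exactly $M(p_i)$ from each place), the same forward direction via Lemma~\ref{ddlk_forward}, and the same backward direction via the uniqueness of the deadlocked marking (Lemma~\ref{ddlk_mark}, resting on the invariant $M_1(p_*)=M_1(p_{**})=1-M_1(p_\#)$) combined with Lemma~\ref{ddlk_backward}. The only cosmetic point is that no complementation is needed, since the paper's {\dlf} problem is stated directly as the existence of a reachable deadlocked marking.
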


\subsection{Reduction from \Za{IR} to \Za{I}}
\label{app:hirpn_bisimulation}
We present a detailed proof of Lemma~\ref{lem:hirpnk-to-hirpnk-1} and Theorem~\ref{thm:bisim-k}.%Theorem~\ref{bisim-hirpn-hipn}.

\begin{lemma} \label{in-k-1}
The net $N'$ constructed in the proof of Lemma~\ref{lem:hirpnk-to-hirpnk-1}
is in the class \Za{IR}$_{k-1}$.
\end{lemma}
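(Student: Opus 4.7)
The plan is to verify two things separately: first, that $N'$ is a member of \Za{IR} (i.e.\ a Petri net with hierarchical special arcs whose special arcs are drawn only from $\{I,R\}$), and second, that the count of transitions with reset pre-arcs drops from $k$ in $N$ to at most $k-1$ in $N'$.

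For the second claim the argument is by inspection of the gadget. The only transition of $N$ that is removed is $t$, and the transitions of $N'$ that replace it are $t^S$, $t^R_1,\ldots,t^R_n$, and $t^I$. From the construction, $t^S$ and every $t^R_i$ carry only simple (natural-number) arcs, while $t^I$ has a simple pre-arc from $p_t^*$ together with inhibitor pre-arcs from the places $p^I$ and $p^R$ that originally had inhibitor or reset arcs to $t$; in particular, none of $t^S$, $t^R_i$, $t^I$ has a reset pre-arc. All other transitions of $N$ keep their original pre-arcs, so the set of transitions with reset pre-arcs in $N'$ is exactly the set for $N$ with $t$ removed, hence of size at most $k-1$.

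For the first claim I would check the two conditions in the definition of a Petri net with hierarchical special arcs. The range condition ${\Rng}(F')\setminus\N\subseteq\{I,R\}$ is immediate from the construction: every newly introduced arc is either a simple arc, or an inhibitor arc into $t^I$. For the hierarchy condition under $\sqsubseteq'$, the key observation is that $\sqsubseteq'$ extends $\sqsubseteq$ by placing $p_t^*$ and $p^*$ strictly above every original place ($p\sqsubseteq' p_t^*\sqsubseteq' p^*$ for all $p\in P$), and neither $p_t^*$ nor $p^*$ carries a special arc to any transition of $N'$. Consequently, for every transition $u$ of $N'$, the set $\{p\in P'\mid F'(p,u)\notin\N\}$ equals (up to the straightforward identification) the corresponding set in $N$ for some transition, which was downward closed under $\sqsubseteq$; since no new place sneaks in below an original place in $\sqsubseteq'$, this set remains downward closed under $\sqsubseteq'$. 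For transitions unaffected by the replacement this is trivial, and for $t^I$ it uses exactly the hierarchy condition for $t$ in $N$ together with the fact that the union of the original inhibitor and reset predecessors of $t$ is still downward closed in $\sqsubseteq$.

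The only genuine bookkeeping point, and where I expect a reader to want care, is verifying that the newly added places $p^*$ and $p_t^*$ do not accidentally need to have special arcs attached to them to preserve downward closure — this is exactly why we place them at the top of $\sqsubseteq'$ rather than at the bottom. Once that choice is made, no new obligations on the hierarchy are generated, and both parts of the claim follow.
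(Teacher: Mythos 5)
Your proof is correct and follows essentially the same route as the paper's: count the drop in reset transitions by noting that $t$ is replaced by $t^S$, the $t^R_i$'s and $t^I$, none of which has a reset pre-arc, and then check that the hierarchy survives because the special pre-places of $t^I$ are exactly the (downward-closed) special pre-places of $t$ in $N$, while the new places $p^*$ and $p_t^*$ sit at the top of $\sqsubseteq'$ and carry no special arcs. Your explicit remark about why the new places must be placed above all original places is a useful clarification that the paper leaves implicit, but it is the same argument.
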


\begin{proof}
Since the reset arc from $p^R$ to $t$ in $N$ does not exist in the
net $N'$, we have at least one less reset transition than $N$ in $N'$.
Since $N \in \Za{IR}_k$, we get that $N$ has at most $k$ reset transitions. 
Hence, $N'$ has at most $k-1$ reset transitions. We further need
to establish that the hierarchy is preserved in $N'$. Consider $N'$ in
Figure~\ref{fig:transform-hirpn}. Since we only add a simple pre-arc
from $p^*$ to the transitions in the rest of the net, we preserve the hierarchy
in the rest of the net. Also, since $t^S$ and $t^R$ only
have simple pre-arcs, the hierarchy is preserved at both these transitions.
Finally, since in the original net $N$, we have special arcs from
$p^I$ and $p^R$ to $t$ which maintain the hierarchy, this hierarchy should
also be maintained at transition $t^I$ which also has special arcs
only from $p^I$ and $p^R$.
\end{proof}

\begin{lemma} \label{sum_p_p_t}
For any marking $M'$ reachable from $M'_0$ in $N'$, $M'(p^*) + M'(p_t^*)=1$.
\end{lemma}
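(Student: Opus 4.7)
The plan is to prove this by induction on the length of a firing sequence leading from $M'_0$ to $M'$, showing that the quantity $M'(p^*) + M'(p_t^*)$ is an invariant of every transition of the gadget and of every transition inherited from the rest of the net.

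For the base case, the initial marking $M'_0 = f(M_0)$ satisfies $M'_0(p^*) = 1$ and $M'_0(p_t^*) = 0$ by definition of $f$, so the sum is $1$. For the inductive step, I would do a case analysis on which transition fires. The transitions of $N'$ fall into four groups, each of which, inspected against Figure~\ref{fig:transform-hirpn}, leaves $M(p^*) + M(p_t^*)$ unchanged:
\begin{itemize}
\item Any transition $t'$ in the ``Rest of Net'' has exactly one simple pre-arc from $p^*$ and one simple post-arc back to $p^*$ (the dotted arrows in the figure), and has no arc incident on $p_t^*$. Firing $t'$ therefore leaves both $M(p^*)$ and $M(p_t^*)$ unchanged.
\item The transition $t^S$ has a simple pre-arc from $p^*$ and a simple post-arc to $p_t^*$, so its firing moves exactly one token from $p^*$ to $p_t^*$, preserving the sum.
\item Each transition $t^R_i$ has a simple pre-arc from $p_t^*$ and a simple post-arc to $p_t^*$, and does not touch $p^*$; firing it preserves $M(p_t^*)$ (and hence the sum).
\item The transition $t^I$ has a simple pre-arc from $p_t^*$ and a simple post-arc to $p^*$, moving one token back from $p_t^*$ to $p^*$, again preserving the sum.
\end{itemize}

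Combining the base case and the inductive step, every reachable $M'$ satisfies $M'(p^*) + M'(p_t^*) = 1$. There is no real obstacle here; the only thing that requires care is making sure the enumeration of transitions of $N'$ is exhaustive and that the arcs on $p^*, p_t^*$ for each transition agree with those drawn in Figure~\ref{fig:transform-hirpn}. In particular, one should remark that no transition in $N'$ has any other arc incident on $p^*$ or $p_t^*$ besides those listed, which is exactly how the gadget was specified in the proof of Lemma~\ref{lem:hirpnk-to-hirpnk-1}.
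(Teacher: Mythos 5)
Your proof is correct and matches the paper's argument: both establish that each of the four kinds of transitions ($t^S$, $t^R$, $t^I$, and the rest-of-net transitions) conserves the token sum $M(p^*)+M(p_t^*)$, which equals $1$ initially. The only difference is that you phrase the conservation argument as an explicit induction on run length, which the paper leaves implicit.
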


\begin{proof}
A firing of transition $t^S$ will remove one token $p^*$ and add a token to $p_t^*$,
while a firing of $t^I$ will remove a token from $p_t^*$ and add a token to $p^*$.
Since firing $t^R$ will replenish the removed token from $p_t^*$ and firing a transition
from the rest of the net will replenish the removed token from $p^*$, the sum
of tokens in $p^*$ and $p_t^*$ will remain conserved. Since this sum is $1$ in the
initial marking, $M'(p^*) + M'(p_t^*)=1$.
\end{proof}
The main role of $p^*$ is to act as a driver for the rest of net $N'$ while the role
of $p_t^*$ is to act as a driver for each transition $t^R$ that arises from a reset arc
to $t$ in the original transition. Lemma~\ref{sum_p_p_t} shows that either $p^*$ has
1 token, in which case the transitions in the rest of net are firable, or $p_t^*$ has
1 token, in which case each transition $t^R$ is firable.

\begin{lemma} \label{forward}
Let there be $n$ tokens in place $p^{R}$ just before $t$ fires in net $N$. Then,
one firing of $t$ in $N$ is equivalent to the firing sequence containing
one firing of $t^S$, $n$ firings of $t^R$ and one firing of $t^I$ in that order
in the net $N'$.
\end{lemma}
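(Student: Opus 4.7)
The plan is to exhibit the purported firing sequence in $N'$ starting from $f(M)$, verifying firability at each step and computing the final marking, where $M$ denotes the marking in $N$ at which $t$ fires, and $M'$ denotes the resulting marking after $t$ fires. I would then compare the final marking produced in $N'$ with $f(M')$ and show they coincide.

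First, I would check firability of $t^S$ at $f(M)$. By Lemma~\ref{sum_p_p_t}, $f(M)(p^*)=1$; moreover, for every simple pre-place $p^S$ of $t$ in $N$, firability of $t$ in $N$ gives $M(p^S)\ge F(p^S,t)$, and since $f$ agrees with $M$ on places of $N$ we have $f(M)(p^S)\ge F(p^S,t)$. After $t^S$ fires, $p^*$ drops to $0$, $p_t^*$ becomes $1$, each simple pre-place is decremented by $F(p^S,t)$, and the inhibitor/reset pre-places $p^I,p^R$ of $t$ are untouched. Next, I would argue that $t^R$ can fire exactly $n$ times in sequence: each firing only requires one token in $p_t^*$ (which is replenished by its own post-arc) and one token in $p^R$, so after $n$ firings, $p^R$ is empty, $p_t^*$ still holds one token, and no other place changes.

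Finally, I would check firability of $t^I$ in the resulting state. The simple pre-arc requires $p_t^*\ge1$, which holds. The inhibitor pre-arcs come from every place $p^I$ such that $F(p^I,t)=I$ in $N$ (which require $M(p^I)=0$, inherited from firability of $t$ in $N$) and from $p^R$ (which is empty after the $n$ firings of $t^R$). Firing $t^I$ removes the token from $p_t^*$, puts it back in $p^*$, and deposits the post-arc weights $F(t,p)$ for every post-place $p$ of $t$, matching the post-effect of $t$ in $N$. Putting the three stages together, the net effect on places of $N$ is to consume $F(p^S,t)$ from every simple pre-place, empty $p^R$, and add $F(t,p)$ to every post-place; the auxiliary places $p^*$ and $p_t^*$ return to their values in $f(M)$. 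This coincides exactly with $f(M')$.

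The main obstacle, and the reason for the intermediate bookkeeping place $p_t^*$ and the inhibitor arcs from $p^R$ to $t^I$, is synchronising the three phases: without them, $t^I$ could fire before $p^R$ was drained, producing a marking in $N'$ that no firing of $t$ in $N$ produces. The inhibitor arc from $p^R$ to $t^I$ forces $t^R$ to fire enough times before $t^I$ can fire, and the token exchange $p^*\leftrightarrow p_t^*$ forces $t^S$ to precede and $t^I$ to terminate the gadget, so the only way to complete the gadget is the sequence described. The rest of the argument is routine accounting of token flow, with no interference from the ``Rest of Net'' transitions because $p^*$ is empty throughout the gadget's execution.
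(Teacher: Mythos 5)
Your proposal is correct and follows essentially the same route as the paper's proof: walk through the gadget in the order $t^S$, $(t^R)^n$, $t^I$, verify firability of each stage from the firability of $t$ at $M$ (simple pre-places for $t^S$, emptiness of $p^R$ and of the inhibitor places for $t^I$), and check that the cumulative token flow equals that of $t$, so the resulting marking is $f(M')$. Your accounting of the pre-arc weights $F(p^S,t)$ and post-arc weights $F(t,p)$, and your remark that the $p^*\leftrightarrow p_t^*$ token exchange together with the inhibitor arc from $p^R$ to $t^I$ forces this to be the only completion of the gadget, are if anything slightly more explicit than the paper's own write-up.
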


\begin{proof}
Let $M_1 \Arrow{t} M_2$ in net $N$.
\begin{itemize}
\item Since $t$ is firable at $M_1$, $M_1(p^S) \geq 1$ and $M_1(p^I) = 0$.
Consider $M_1' = f(M_1)$. Clearly, $M_1'(p^*) = 1$ (by definition of $f$).
Hence, $t^S$ is firable in $N'$.

\item Once $t^S$ fires in $N$, $p^*$ becomes empty, hence the only
firable transitions are $t^R$ and $t^I$. If $M_1(p^{R}) = n > 0$, then $t^I$
cannot fire as it has an inhibitor arc from $p^{R}$. Hence, the only
transition that can fire until $p^{R}$ is empty is $t^R$. Essentially,
the $n$ firings of $t^R$ empty the place $p^{R}$ having the same effect
on it as transition $t$ in net $N$.

\item Once $p^{R}$ is empty, $t^I$ is the only transition that can fire,
emptying $p_t^*$ and putting 1 token back in $p^*$, signaling that the
transitions in rest of net can fire. Once $t^I$ fires, the marking in $N'$ is $M_2'
= f(M_2)$.
\end{itemize}
Hence, $f(M_1) \Arrow{{t^S \cdot (t^R)^n \cdot t^I}} f(M_2)$ in the net $N'$.
\end{proof}
This will serve as the basis for the \Za{IR}$_{k}$ $\to$ \Za{IR}$_{k-1}$
direction of the reduction.

\begin{lemma}[Forward Direction] \label{forward-sim}
For any markings $M_1$ and $M_2$ in $N$ such that $M_1 \Arrow{\mu} M_2$,
where $\mu$ is a sequence of transitions, there exists $\mu'$ such that
$f(M_1) \Arrow{{\mu'}} f(M_2)$ in net $N'$.
\end{lemma}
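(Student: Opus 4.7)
The plan is to induct on the length $k$ of the firing sequence $\mu$. The base case $k=0$ is immediate: $M_1 = M_2$ implies $f(M_1) = f(M_2)$, so $\mu' = \epsilon$ works. For the inductive step, write $\mu = \sigma \cdot t'$ with $M_1 \Arrow{\sigma} M \Arrow{t'} M_2$ in $N$. By the induction hypothesis, there exists a sequence $\sigma'$ such that $f(M_1) \Arrow{\sigma'} f(M)$ in $N'$. It then suffices to exhibit a sequence $\tau$ in $N'$ with $f(M) \Arrow{\tau} f(M_2)$, and take $\mu' = \sigma' \cdot \tau$.

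To produce $\tau$ I would split on whether $t'$ is the distinguished transition $t$ whose reset pre-arcs were eliminated when constructing $N'$. If $t' \neq t$, then $t'$ belongs to the ``Rest of Net'' and is present in $N'$ with only the extra pre-arc from $p^*$ and post-arc to $p^*$ (both of weight $1$). By the definition of $f$, $f(M)(p^*) = 1$, so this additional pre-arc is satisfied; all other pre-arc, inhibitor, and reset conditions of $t'$ refer to places of $N$, where $f(M)$ agrees pointwise with $M$, so they are satisfied exactly because $t'$ is firable at $M$. Firing $t'$ in $N'$ consumes and immediately replenishes the token on $p^*$, leaves $p_t^*$ untouched (it still has $0$ tokens), and has the same net effect on the original places as firing $t'$ at $M$ in $N$. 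Hence the resulting marking is $f(M_2)$, and we take $\tau = t'$.

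If $t' = t$, I would invoke Lemma~\ref{forward} directly: with $n_i = M(p^R_i)$ for each reset pre-place $p^R_i$ of $t$ in $N$, the sequence $\tau = t^S \cdot (t^R_1)^{n_1} \cdots (t^R_m)^{n_m} \cdot t^I$ is firable at $f(M)$ and yields $f(M_2)$. The firability of $t^S$ uses $f(M)(p^*)=1$ together with the pre-arcs from the simple pre-places of $t$, which are the same in $N$ and $N'$; the firings of the $t^R_i$ drain each $p^R_i$ exactly as the reset arcs of $t$ would in $N$ and are enabled because $f(M)(p_t^*)$ has just become $1$; finally $t^I$ is enabled because all relevant places $p^R_i$ are now empty and the inhibitor conditions on the $p^I$'s transfer verbatim from $N$.

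I expect no serious technical obstacle, since Lemma~\ref{forward} already isolates the delicate case, and Lemma~\ref{sum_p_p_t} together with the definition of $f$ guarantees the invariant $f(\cdot)(p^*) = 1$ that makes the ``Rest of Net'' transitions firable at every step. The only care needed is in correctly handling multiple reset pre-arcs of $t$ (several $t^R_i$ gadgets rather than one), and in verifying that the extra pre/post arcs of ordinary transitions with $p^*$, together with $t^S$ and $t^I$ handling $p_t^*$, preserve the invariant of Lemma~\ref{sum_p_p_t} so that the induction can be continued to the next transition of $\mu$.
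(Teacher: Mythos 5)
Your proposal is correct and follows essentially the same route as the paper: the paper's proof simply says to construct $\mu'$ from $\mu$ by replacing each occurrence of $t$ with the sequence $t^S\cdot(t^R)^n\cdot t^I$ given by Lemma~\ref{forward}, leaving the other transitions unchanged. Your version merely makes this explicit as an induction on $|\mu|$ and spells out why the ``Rest of Net'' transitions remain firable via the $p^*$ token, which is a harmless (and welcome) elaboration of the same argument.
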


\begin{proof}
$\mu'$ can be constructed from $\mu$ by replacing each occurrence of $t$ in
$\mu$ by its equivalent sequence of transitions in $N'$ as demonstrated in
Lemma~\ref{forward}.
\end{proof}

\begin{lemma} \label{backward}
Consider markings $M_1'$ and $M_2'$ such that $M_1' \Arrow{{\mu'}} M_2'$
and $M_1'(p^*) = M_2'(p^*) = 1$
and $M_1'(p^{R}) = n$ in the net $N'$. If $\mu'$ starts with $t^S$, then
$t^S$ is followed by $n$ firings of $t^R$, followed by one firing of $t^I$
in $\mu'$.
\end{lemma}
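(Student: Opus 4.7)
The proof plan uses the fact that, after $t^S$ fires, the only token available in the ``driver'' places $p^*$ and $\{p_{t'}^*\}_{t'}$ sits in $p_t^*$, so that subsequent firings are forced to stay inside the current gadget for $t$ until the $p^*$ token is returned.

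First I would lift Lemma~\ref{sum_p_p_t} to a global invariant on $N'$: for every marking $M'$ reachable from $M'_0$, one has $M'(p^*) + \sum_{t'} M'(p_{t'}^*) = 1$, where the sum is taken over all transitions $t'$ of $N$ that carry a reset pre-arc. This follows because every transition of $N'$ either preserves each summand (transitions in ``Rest of Net'' and each $t'^R$) or merely swaps a token between $p^*$ and some $p_{t'}^*$ (namely $t'^S$ and $t'^I$). Since $M_1'(p^*) = 1$, the invariant gives $M_1'(p_{t'}^*) = 0$ for every $t'$.

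Next I would enumerate the transitions enabled immediately after the initial $t^S$ firing. At that point $p^* = 0$, $p_t^* = 1$, $p_{t'}^* = 0$ for all $t' \neq t$, and $p^R = n$. Every transition in ``Rest of Net'' requires a token on $p^*$ and is disabled; each $t'^S$ requires the same and is disabled; each $t'^R$ and $t'^I$ for $t'\neq t$ requires a token on $p_{t'}^*$ and is disabled. Within the gadget for $t$, transition $t^I$ carries an inhibitor arc from $p^R$ and is blocked while $p^R>0$. Hence $t^R$ is the unique firable transition, and firing it decrements $p^R$ by 1 while restoring $p_t^*$. A straightforward induction on $k\in\{0,\dots,n-1\}$ then shows that after the prefix $t^S(t^R)^k$, the driver state is unchanged and $p^R = n-k$, and the same case analysis forces the next transition to again be $t^R$. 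At $k=n$ the place $p^R$ is empty, so $t^R$ is no longer firable and $t^I$ is the only remaining candidate.

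Finally I would invoke the hypothesis $M_2'(p^*) = 1$ to conclude that $t^I$ actually fires: if $t^I$ were also disabled (because some inhibitor pre-place of $t^I$ held a positive count), then no transition whatsoever would be firable, giving a deadlock with $p^* = 0$ from which $\mu'$ could not continue to a marking with $p^* = 1$. Hence the next transition must be $t^I$, and the prefix $t^S(t^R)^n t^I$ appears in $\mu'$ as required.

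The main obstacle I expect is the bookkeeping for the global invariant: Lemma~\ref{sum_p_p_t} as stated is per-gadget, but the argument above must simultaneously rule out firings in \emph{all} other gadgets' $t'^S,t'^R,t'^I$ during the stretch between $t^S$ and $t^I$. Once the global conservation of $p^* + \sum_{t'} p_{t'}^*$ is in hand, the rest of the proof reduces to a routine case enumeration of enabled transitions and an induction on the $p^R$ token count.
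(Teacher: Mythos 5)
Your proposal is correct and follows essentially the same route as the paper: after $t^S$ fires the token conservation between $p^*$ and $p_t^*$ freezes the rest of the net, the inhibitor arc from $p^R$ blocks $t^I$ until the $n$ firings of $t^R$ empty $p^R$, and the hypothesis $M_2'(p^*)=1$ forces $t^I$ to eventually fire; the paper compresses all of this into a one-line reference back to Lemma~\ref{forward}. Your extra "global" invariant over all $t'$ is harmless but unnecessary here, since a single application of the construction introduces only one gadget (the remaining reset transitions stay in ``Rest of Net'' and are already gated by $p^*$), and your explicit deadlock argument for why $t^I$ must fire is a point the paper's proof leaves implicit.
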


\begin{proof}
Firing of $t^S$ places one token in $p_t^*$ guaranteeing the sequence
above as argued in Lemma~\ref{forward}.
\end{proof}

\begin{lemma}[Backward Direction] \label{backward-sim}
For any markings $M_1' = f(M_1)$ and $M_2' = f(M_2)$ in $N'$ such that
$M_1' \Arrow{{\mu'}} M_2'$,
where $\mu'$ is a sequence of transitions, there exists $\mu$ such that
$M_1 \Arrow{\mu} M_2$ in net $N$.
\end{lemma}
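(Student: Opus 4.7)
The plan is to run an induction on the length of $\mu'$, using the conservation law from Lemma~\ref{sum_p_p_t} to decompose $\mu'$ into well-defined blocks that each correspond to a single transition of $N$. By definition of $f$, the markings $M_1'$ and $M_2'$ both satisfy $M'(p^*) = 1$ and $M'(p_t^*) = 0$. Since $M'(p^*) + M'(p_t^*) = 1$ is invariant along $\mu'$, I would identify every prefix of $\mu'$ after which $p^*$ again holds a token; these are precisely the ``rest points'' where the gadget is idle and a transition of $N$ has finished being simulated.

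At each rest point, the only firable transitions are either a ``rest of net'' transition $u$ (which needs the token in $p^*$ as its driver) or the gadget-entry transition $t^S$. In the first case, the next block of $\mu'$ is the single transition $u$, which has identical pre- and post-effects on all places of $P$ in both $N$ and $N'$, so it corresponds to firing $u$ in $N$. In the second case, Lemma~\ref{backward} forces the block to be exactly $t^S \cdot (t^R)^{n} \cdot t^I$ for some $n$, where $n$ equals the number of tokens in $p^R$ just before $t^S$ fires: no other transition can interleave because the rest-of-net transitions all require a token in $p^*$, and $t^I$ is blocked by its inhibitor arcs from the $p^R$ places until every such place has been drained by the $t^R$ firings. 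This block can be contracted to a single firing of $t$ in $N$ by Lemma~\ref{forward} (read in reverse). Write $\mu$ as the concatenation of the corresponding single transitions.

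To close the induction, I would show that after consuming one block, the intermediate marking of $N'$ is again of the form $f(M)$ for some $M \in \Markings{N}$, and that the corresponding single transition of $N$ is firable at the previous marking in $N$ and produces exactly $M$. Firability in $N$ follows from matching each arc condition: simple pre-arc weights and inhibitor conditions on places $p^S, p^I$ are checked identically in both nets (the gadget's $t^S$ carries the simple pre-arcs and $t^I$ carries the inhibitor arcs), while the reset condition in $N$ corresponds to the emptying effect of the $n$ firings of $t^R$ in $N'$. Putting the blocks together yields $M_1 \Arrow{\mu} M_2$.

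The main obstacle is ruling out interleavings: one must argue that once $t^S$ has fired within $\mu'$, no rest-of-net transition and no other instance of $t^S$ can fire before the corresponding $t^I$, so that the block structure is forced rather than merely possible. This is precisely where the driver-place invariant is indispensable, together with the inhibitor arcs from $p^R$ to $t^I$, which prevent premature closing of the block. Once block integrity is established, the construction of $\mu$ is purely mechanical and the correspondence $f$ transports the end-of-block markings faithfully, completing the proof.
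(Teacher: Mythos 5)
Your proposal is correct and follows essentially the same route as the paper: the paper's proof also decomposes $\mu'$ into blocks, replacing every occurrence of $t^S \cdot (t^R)^n \cdot t^I$ by $t$, with the forced block structure supplied by Lemma~\ref{backward} and the driver-place invariant of Lemma~\ref{sum_p_p_t}. Your write-up merely makes explicit the interleaving argument that the paper delegates to those auxiliary lemmas.
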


\begin{proof}
$\mu$ is constructed by replacing every
occurrence of $t^S \cdot (t^R)^n \cdot t^I$ by $t$. The existence of this
occurrence is guaranteed by Lemma~\ref{backward}.
\end{proof}
This concludes the proof of Lemma~\ref{lem:hirpnk-to-hirpnk-1}, establishing the relation between \Za{IR}$_k$ and \Za{IR}$_{k-1}$ by construction. 

\begin{figure}[t]
\begin{tikzpicture}[scale=0.75]
\hspace{0pt}\raisebox{-1em}
{
\draw (0,13.875) circle (0.3cm) node{\scriptsize $p^S$};
\draw (1,13.875) circle (0.3cm) node{\scriptsize $p^I$};
\draw (2,13.875) circle (0.3cm) node{\scriptsize $p^{R}$};
\draw (0,11.625) circle (0.3cm) node{\scriptsize $p$};
\draw (2,11.625) circle (0.3cm) node{\scriptsize $p^T$};
\draw [fill=black] (0.65,12.675) node[left]{\scriptsize $t$} rectangle (1.35,12.825);
\draw[-latex,thick] (0,13.575) -- (0.65,12.825);
\draw[-o,thick] (1,13.575) -- (1,12.825);
\draw[-latex,thick] (2,13.575) ..controls (1,12.675) .. node[right]{\scriptsize Transfer} (2,11.925);
\draw[-latex,thick] (1,12.675) -- (0,11.925);
\draw (3,12.75) rectangle node{\scriptsize \textit{Rest of Net}} (5,11.625);
}

\hspace{15pt}\raisebox{3em}
{
\path[draw=black,solid,line width=2mm,fill=black,
preaction={-triangle 90,thick,draw,shorten >=-1mm}
] (4.6, 10.875) -- (6.3, 10.875);
}

\hspace{180pt}\raisebox{9.3em}
{
\draw (0,10.125) circle (0.3cm) node{\scriptsize $p^S$};
\draw (2,10.125) circle (0.3cm) node{\scriptsize $p^*$};
\draw (1,7.875) circle (0.3cm) node{\scriptsize $p_t^*$};
\draw [fill=black] (0.65,8.925) node[left]{\scriptsize $t^S$} rectangle (1.35,9.075);
\draw[-latex,thick] (0,9.825) -- (0.8,9.075);
\draw[-latex,thick] (1.9,9.825) -- (1.1,9.075);
\draw[-latex,thick] (1,8.925) -- (1,8.175);
\draw (3,9.375) rectangle node{\scriptsize \textit{Rest of Net}} (5,8.25);
\draw[-latex,thick,dotted] (2.4,10.125) to[out=-20, in=100] (3.8, 9.15);
\draw[-latex,thick,dotted] (3.2,9.15) to[out=170, in=-60] (2,9.825);

\draw (2,6.75) circle (0.3cm) node{\scriptsize $p^{R}$};
\draw (5,6.75) circle (0.3cm) node{\scriptsize $p^{T}$};
\draw (0,6.75) circle (0.3cm) node{\scriptsize $p^I$};
\draw (1,4.5) circle (0.3cm) node{\scriptsize $p$};
\draw [fill=black] (3.5,6.6375) node[below]{\scriptsize $t^R$} rectangle (3.7,7.1625);
\draw [-latex,thick] (2.3,6.75) -- (3.5,6.75);
\draw [-latex,thick] (3.7,6.75) -- (4.7,6.75);
\draw [-latex,thick] (1,7.575) .. controls (1.5,7.275) and (2.5,7.125) .. (3.5,7.05);
\draw [-latex,thick] (3.7,7.05) .. controls (4.5,7.5) and (2.5,7.725) .. (1.3,7.875);
\draw [fill=black] (1.35,5.7) node[below right]{\scriptsize $t^I$} rectangle (0.65,5.55);
\draw[-o,thick] (0,6.45) -- (0.65,5.7);
\draw[-latex,thick] (1,7.575) -- (1,5.7);
\draw[-o,thick] (2,6.45) -- (1.35,5.7);
\draw[-latex,thick] (1,5.55) -- (1,4.8);
\draw[-latex,thick] (0.8,5.55) .. controls (-2,3.375) and (-3,13.5) .. (2,10.425);
}
\end{tikzpicture}
\vspace{-11.5em}
\caption{Transformation from $N \in$ \Za{IRcT}$_k$ (left) to $N' \in$ \Za{IRcT}$_{k-1}$ (right)}
\label{fig:transform-hirctpn}
\end{figure}

\begin{lemma} \label{reduction-k}
Reachability, coverability and termination problems in \Za{IR}$_{k}$ are reducible to their corresponding versions in \Za{IR}$_{k-1}$.
\end{lemma}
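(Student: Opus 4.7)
The plan is to use the construction from Lemma~\ref{lem:hirpnk-to-hirpnk-1} in a black-box fashion, with the bisimulation relation $f:\Markings{N}\to\Markings{N'}$ providing the reduction in each of the three cases. The reduced instance always uses the net $N'\in\Za{IR}_{k-1}$ with initial marking $f(M_0)$; what changes between the three problems is how the target is transported across $f$ and which reachable markings of $N'$ we need to rule out as ``intermediate''.

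For reachability the reduction is the most direct: given a target marking $M$ in $N$, ask whether $f(M)$ is reachable from $f(M_0)$ in $N'$. The forward direction is Lemma~\ref{forward-sim} and the backward direction is Lemma~\ref{backward-sim}, so the equivalence is immediate. For termination, I would simply ask whether $N'$ has an infinite run. The forward direction again follows from Lemma~\ref{forward-sim}, since replacing each firing of $t$ by its finite block $t^S(t^R)^n t^I$ yields an infinite firing sequence in $N'$ from $f(M_0)$. For the converse, I would argue that in any infinite run of $N'$ the non-intermediate markings (those with $p^*=1$) occur infinitely often: each $t^R$ strictly decreases $M(p^R)$ and $t^R$ fires only while $p_t^*=1$, so the gadget cannot trap the run forever; after a bounded number of $t^R$'s, $t^I$ or a rest-of-net transition must fire, restoring $p^*=1$. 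Extracting the subsequence of non-intermediate markings yields a sequence of the form $f(M_0),f(M_1),f(M_2),\ldots$, and Lemma~\ref{backward-sim} applied to consecutive pairs assembles an infinite run in $N$.

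For coverability, given a target $M$ in $N$, I would define $M^{ext}\in\Markings{N'}$ by $M^{ext}(p)=M(p)$ for $p\in P$, $M^{ext}(p^*)=1$, and $M^{ext}(q)=0$ for every other new place (in particular $M^{ext}(p_t^*)=0$), and ask whether $M^{ext}$ is coverable in $N'$. The forward direction is easy: any reachable $\hat M\geq M$ in $N$ lifts under $f$ to a reachable $f(\hat M)\geq M^{ext}$ in $N'$ by Lemma~\ref{forward-sim}. The backward direction is where the intermediate markings must be ruled out, and Lemma~\ref{sum_p_p_t} does exactly this: any reachable $\hat M'$ in $N'$ satisfies $\hat M'(p^*)+\hat M'(p_t^*)=1$, so $\hat M'\geq M^{ext}$ forces $\hat M'(p^*)=1$ and $\hat M'(p_t^*)=0$. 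Such an $\hat M'$ is of the form $f(\hat M)$ for a unique $\hat M\in\Markings{N}$ with $\hat M\geq M$ on $P$, and Lemma~\ref{backward-sim} yields $\hat M\in\mathsf{Reach}(N,M_0)$.

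The main obstacle, as above, is the coverability case, since in principle an intermediate marking of the gadget (where $p_t^*=1$ and some tokens have been consumed by $t^S$ but not yet returned) could spuriously dominate $M$ on the original places while no corresponding marking of $N$ dominates $M$. Forcing $M^{ext}(p^*)=1$ together with the conservation invariant of Lemma~\ref{sum_p_p_t} is what cleanly eliminates this possibility, and no further work is needed. Once these three reductions are in hand, iterating Lemma~\ref{lem:hirpnk-to-hirpnk-1} $k$ times collapses any $\Za{IR}_k$-instance to an $\Za{IR}_0=\Za{I}$-instance, for which reachability, coverability and termination are known to be decidable by~\cite{reinhardt,bonnet}.
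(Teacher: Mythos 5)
Your proposal is correct and follows essentially the same route as the paper: the paper treats this lemma as an immediate consequence of the construction of Lemma~\ref{lem:hirpnk-to-hirpnk-1} together with Lemmas~\ref{forward-sim}, \ref{backward-sim} and the invariant of Lemma~\ref{sum_p_p_t}, which are exactly the ingredients you use. You merely spell out the details the paper leaves implicit --- in particular the observation that an infinite run of $N'$ cannot remain trapped inside the gadget (each $t^R$ strictly decreases the token count of its reset pre-place), and the use of $M'(p^*)+M'(p_t^*)=1$ to exclude spurious coverings by intermediate markings --- both of which are sound.
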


Finally, starting from  an arbitrary net $N \in \Za{IR}_k$. We apply Lemma~\ref{lem:hirpnk-to-hirpnk-1} recursively for each transition connected to a reset arc. The reduction follows from applying Lemma~\ref{reduction-k} successively.

The construction for constrained transfer arcs is shown in Figure~\ref{fig:transform-hirctpn}. Note that the constrained property of transfer arcs is required here, since if we had transfer to a place with an inhibitor arc to the same transition, then in the constructed net, $t_I$ cannot be fired, since we would have added tokens through $t_R$. Hence, we can redo the above formal proof and thus, as a consequence we obtain the proof of Theorem~\ref{thm:bisim-k}.

%\begin{theorem} \protect\label{bisim-hirpn-hipn}
%Reachability, coverability and termination problems in \Za{IR} and \Za{IRcT} are reducible to their corresponding versions in \Za{I}.
%\end{theorem}

% A3
\subsection{Deadlockfreeness in \Za{IR}} \label{deadlock-appendix}
\begin{lemma} \label{4.1}
For any clause C, $[\forall i \forall p\in B^C_i, M(p)=i$ and $\forall p\in A^C, M(p)\geq 1]$ iff clause C is true at marking $M$.
\end{lemma}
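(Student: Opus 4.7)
\emph{Proof plan for Lemma~4.1.} The statement is essentially a bookkeeping lemma: it says that the conjunctive clause $C$, viewed as a predicate on markings, is equivalent to the explicit numerical condition parametrized by the partition into $B^C_i$'s and $A^C$. My plan is to prove the two implications separately, in each case unpacking the definitions of $S_C$, $B^C_i$ and $A^C$, and doing a small case analysis on the type of each literal.

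\smallskip
\textbf{Forward direction ($\Rightarrow$).} Assume $M(p)=i$ for every $p\in B^C_i$ (and all $i$), and $M(p)\geq 1$ for every $p\in A^C$. I need to show that every literal in $S_C$ is true at $M$. Take an arbitrary $\ell\in S_C$ and split cases by its shape. If $\ell=Exact_j(p)$, then by definition $p\in B^C_j$, so the hypothesis gives $M(p)=j$, i.e.\ $\ell$ holds. If $\ell=AtLeast(p)$, there are two sub-cases: either $p\in A^C$, in which case $M(p)\geq 1$ directly; or $p\notin A^C$, which, by the definition $A^C=\{p\mid AtLeast(p)\in S_C\}\setminus\bigcup_{i\geq 1}B^C_i$, forces $p\in B^C_i$ for some $i\geq 1$, and then $M(p)=i\geq 1$. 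Either way $AtLeast(p)$ is satisfied, so the whole conjunction $C$ evaluates to true at $M$.

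\smallskip
\textbf{Backward direction ($\Leftarrow$).} Assume clause $C$ is true at $M$, i.e.\ every literal of $S_C$ holds at $M$. For $p\in B^C_i$, by definition of $B^C_i$ we have $Exact_i(p)\in S_C$; truth of this literal gives $M(p)=i$, which is the first half of the conclusion. For $p\in A^C$, we have $AtLeast(p)\in S_C$ and its truth gives $M(p)\geq 1$, which is the second half. This is all that is needed.

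\smallskip
\textbf{Main (mild) obstacle.} The only subtlety is the overlap between $AtLeast$ and $Exact_j$ literals referring to the same place $p$, which is exactly why $A^C$ is defined by removing $\bigcup_{i\geq 1}B^C_i$. The preceding paragraph in the paper already restricts attention to clauses where the $B^C_i$'s are pairwise disjoint and $B^C_0\cap A^C=\varnothing$, so the partition $\{A^C\}\cup\{B^C_i\}_{i\geq 0}$ covers, without double-counting, exactly those places mentioned by literals of $S_C$. I would flag this disjointness explicitly at the start of the proof, since it is what makes the case split in the forward direction exhaustive and non-overlapping. Once that is noted, the proof is a direct unfolding of definitions and no further machinery is required.
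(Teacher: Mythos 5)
Your proposal is correct and matches the paper's own argument: the paper likewise dismisses the ``clause true $\Rightarrow$ numerical condition'' direction as immediate from the definitions of $Exact_i$ and $AtLeast$, and proves the other direction by exactly your case split on literal shape, using $p\in A^C\cup\bigcup_{i\geq 1}B^C_i$ for the $AtLeast(p)$ literals. Your explicit remark about the disjointness of $A^C$ and the $B^C_i$'s is a reasonable clarification but does not change the route.
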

\begin{proof}
If Clause C is true at marking M, by definition of $Exact_i$ and $AtLeast$, the result follows.\\
If $[\forall i \forall p\in B^C_i, M(p)=i$ and $\forall p\in A^C, M(p)\geq 1]$, then all literals of the form $Exact_i(p)$ in $S_C$ are true by definition of $B^C_i$. For all literals of the form $AtLeast(p)$ in $S_C$, we have $p\in A^C \bigcup_{i\geq 1}B^C_i$ by definition of $A^C$. Hence, $p\in A^C$ or $p\in B^C_i$ for some $i\geq 1$. If $p\in A^C$, then $M(p)\geq 1$. If $p\in B_i,i\geq 1$, then $M(p)=i\geq 1$. Thus $AtLeast(p)$ is true. Hence, clause C is true at marking $M$.
\end{proof}
\begin{lemma} \label{4.2}
Consider any run $\rho$ over $T$. $\rho$ is a run in the original net from $M_0$ to $M$, iff $\rho$ is a run in the constructed net from $M_0^{ext}$ to $M^{ext}$
\end{lemma}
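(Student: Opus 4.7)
The plan is a straightforward induction on the length of $\rho$, exploiting the fact that $\rho$ contains only transitions from the original set $T$, and that the construction makes every such transition interact with the added places only through the self-loop on $p^{**}$.

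I first establish the following invariant: if $M_0^{ext} \Arrow{\sigma} M'$ in the constructed net for some sequence $\sigma$ consisting only of transitions in $T$, then $M'(p^{**}) = 1$ and $M'(p) = 0$ for every added place other than $p^{**}$. The invariant holds because the construction attaches a pre-arc and a matching post-arc of weight $1$ between $p^{**}$ and every transition of $T$, so each firing consumes and immediately replenishes the single token in $p^{**}$. Moreover, no transition in $T$ has any arc incident to $p_C$, $p_{i*}$, or $p_*$ (these places are touched only by the newly introduced transitions $q_C, t_{i*}, r_i, s_i, t_i, t_C$, none of which occur in $\rho$), so the added places keep their initial values of $0$.

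With the invariant in hand, the induction is routine. The base case $|\rho|=0$ is immediate from the definition of the extension $M_0^{ext}$. For the inductive step, write $\rho = \rho'\cdot t$ with $t \in T$ and $M_0 \Arrow{\rho'} M_1 \Arrow{t} M$. By induction $M_0^{ext} \Arrow{\rho'} M_1^{ext}$ in the constructed net. To see that $t$ is firable at $M_1^{ext}$, note that its pre-conditions on the original places (whether standard, inhibitor, or reset) depend only on the values of $M_1^{ext}$ on those original places, which coincide with $M_1$; the only added pre-condition is one token on $p^{**}$, which the invariant supplies. Firing $t$ in the constructed net produces the same changes on original places as in the original net, and by the invariant the added places retain their required values, so the resulting marking is exactly $M^{ext}$.

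The converse direction is symmetric: since $\rho$ mentions only transitions of $T$, firing $\rho$ from $M_0^{ext}$ in the constructed net alters only the original components of the marking (apart from the trivial loop on $p^{**}$), and the firability conditions on those components are literally the firability conditions of the original net. Hence $M_0 \Arrow{\rho} M$ in the original net.

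The main concern, and the only place where something could go wrong, is the possibility that running $\rho$ in the constructed net could accidentally disable some $t \in T$ because the newly added places might be in the wrong state. This is precisely what the invariant rules out: the only transition that can remove the token from $p^{**}$ is $q_C$, which is excluded from $\rho$ by assumption, and the only way to place tokens into $p_C$ or $p_{i*}$ is through the check-gadget transitions, which are likewise excluded. Thus, no nontrivial interference can occur, and the lemma reduces to a bookkeeping argument on the original components.
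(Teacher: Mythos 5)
Your proof is correct and follows essentially the same route as the paper: the forward direction is the bookkeeping observation that original transitions only interact with the new places through the self-loop on $p^{**}$ (which your invariant makes explicit, and which the paper dismisses as ``trivial''), and the backward direction for a run already restricted to $T$ reduces to the same observation. The only difference is that the paper's backward argument additionally establishes the stronger fact that \emph{any} run of the constructed net reaching $M^{ext}$ can use no newly added transitions (since $t_C$ would deposit an irremovable token in $p^*$ and the other new transitions an irremovable token in $p_C$), a fact not needed for the literal statement but used implicitly in the surrounding development; your proof of the statement as written is complete.
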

\begin{proof}
Forward direction is trivial, since the constructed net has all transitions present in the original net unmodified.\\
For the backward direction, since $M^{ext}(p^*)=0$, for any clause C, transition $t_C$ was never fired in the run, since otherwise, a token would be added in $p^*$ which can't be removed by firing any other transition. Further, no other newly added transition was fired, since $M^{ext}(p_C)=0$ and all other new transitions add a token to $p_C$, which can be emptied only by $t_C$, which never fired. Hence, only transitions in the original net fired in the run. Thus, the run is a valid run in the original net too.
\end{proof}
\begin{lemma} \label{4.3}
Let marking $M$ be a deadlocked marking reachable from initial marking $M_0$ in original net. Then, marking $M'$, where $$M'(p)=\begin{cases} 1\, p=p^*\\ 0\, p\neq p^*\end{cases}$$ is reachable in constructed net.
\end{lemma}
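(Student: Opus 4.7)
The plan is to exhibit an explicit firing sequence in the constructed net $N'$ from $M_0^{ext}$ that reaches $M'$. First I would invoke Lemma~\ref{4.2} to replay, in $N'$, the run witnessing reachability of $M$ in $N$; this reaches the marking $M^{ext}$, at which every original-net place $p$ holds $M(p)$ tokens, $p^{**}$ holds one token, and every other auxiliary place ($p^*$, the various $p_C$, and the auxiliaries $p_{1*}$, $p_{2*}$ of every gadget) is empty.

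Since $M$ is deadlocked in $N$, the DNF formula for $\Deadlock{M}$ evaluates to true at $M$, so at least one conjunctive clause $C$ among those added to the construction is satisfied at $M$. Lemma~\ref{4.1} translates this into the conditions $M(p)=i$ for every $p\in B^C_i$ and $M(p)\geq 1$ for every $p\in A^C$. I now fire $q_C$, moving the sole token of $p^{**}$ to $p_C$. After this, no original-net transition is enabled (each requires $p^{**}$), no $q_{C'}$ with $C'\neq C$ is enabled (same reason), and no gadget transition of any clause $C'\neq C$ is enabled (each needs $p_{C'}$); only the gadget transitions of $C$ and $t_C$ itself can potentially fire, and every gadget transition other than $t_C$ returns a token to $p_C$, so $p_C$ remains at $1$ throughout the draining phase to come.

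Next I fire the gadget transitions to empty the original-net places while loading the pre-places of $t_C$. For each $p \in A^C$, fire the $t_2$-type transition once, moving one token from $p$ into the corresponding auxiliary $p_*$, and then fire the associated $t_{2*}$-type drainer $M(p)-1$ additional times to empty $p$; for each $p \in B^C_i$, fire the $r_1$-type transition once, which transfers exactly $i = M(p)$ tokens from $p$ to its auxiliary in a single firing and empties $p$; and for each remaining place $p \notin A^C \cup \bigcup_i B^C_i$, fire the $s_3$-type drainer $M(p)$ times. Each firing is enabled when attempted, because $p_C$ is present and the counts on $p$ have not yet dropped below the required threshold.

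After this sequence, every original-net place holds $0$ tokens, each auxiliary of a place in $A^C \cup \bigcup_i B^C_i$ holds exactly one token, $p_C$ still holds one token, and all remaining places ($p^{**}$, $p^*$, and the auxiliaries and $p_{C'}$ of other gadgets) hold $0$ tokens. At this marking $t_C$ is enabled: its inhibitor arcs from the (now-empty) original-net places are satisfied, and each of its simple pre-arcs from the auxiliaries and from $p_C$ has its token. Firing $t_C$ consumes the tokens at $p_C$ and at every auxiliary of $C$'s gadget and deposits a single token in $p^*$, yielding exactly the marking $M'$. The main obstacle is the bookkeeping during the draining phase: one must verify at each step that the intended transition is firable and, crucially, that auxiliary places belonging to gadgets of other clauses are never inadvertently touched, so that the terminal marking is exactly $M'$ and not some strictly larger marking that merely covers $M'$.
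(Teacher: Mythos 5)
Your proof is correct and follows essentially the same route as the paper's: replay the run to $M^{ext}$ via Lemma~\ref{4.2}, select a satisfied clause $C$ via Lemma~\ref{4.1}, fire $q_C$, drain each original place through its gadget transitions while $p_C$ keeps the gadget enabled, and finish with $t_C$ to deposit the lone token in $p^*$. The only (immaterial) difference is that for places in $A^C$ you fire the $t_i$-type transition before, rather than after, the $M(p)-1$ firings of the $t_{i*}$ drainer.
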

\begin{proof}
Since $M$ is a deadlocked marking, $Deadlock(M)$ is true. This implies atleast one clause in the DNF is true. Let Clause C be any one of those clauses.\\
Let $\rho$ be the run from $M_0$ to $M$ in the original net. By Lemma \ref{4.2}, $\rho$ is a run from $M_0^{ext}$ to $M^{ext}$ in the constructed net. Since $M$ satisfies Clause C, we have $\forall i\forall p\in B^C_i M(p)=i$ and $\forall p\in A^C M(p)\geq 1$, by Lemma \ref{4.1}.
By definition of $M^{ext}$, $\forall i\forall p\in B^C_i M^{ext}(p)=i$ and $\forall p\in A^C M^{ext}(p)\geq 1$ (since $A^C$ and $B_i$ are subsets of $P$). Let $M^{ext}\Arrow{q_C}M_1$. Then, $\forall i\forall p\in B^C_i M_1(p)=i$ and $\forall p\in A^C M_1(p)\geq 1$. Consider the run defined as $$\rho'=\rho.q_C.\Bigcdot_{\forall j p_i\in B^C_j}(r_i).\Bigcdot_{p_i\in A^C}(t_{i*}^{M_1(p_i)-1}t_i).\Bigcdot_{p_i\notin A^C\cup\bigcup_{j}B^C_j}(s_i^{M(p_i)}).t_C$$ The marking reached by $\rho$ is $M^{ext}$. The marking reached by $\rho q_C$ is $M_1$. Now, $\Bigcdot_{\forall j p_i\in B^C_j}(r_i)$ removes $i$ tokens in all places $p_i\in B^C_i$, thus emptying the place and putting one token in $p_{i*}$. $\Bigcdot_{p_i\in A^C}(t_{i*}^{M_1(p_i)-1}t_i)$ removes all tokens from any place $p_i\in A^C$ and puts one token in $p_{i*}$. $\Bigcdot_{p_i\notin A^C\cup\bigcup_{j}B^C_j}(s_i^{M(p_i)})$ removes all tokens from any place $p\not\in A^C\cup \bigcup_{i}B^C_i$. In the resultant marking, $t_C$ is firable. Firing $t_C$ removes all tokens in all other places, and puts one token in $p^*$. Thus, $\rho'$ is a run in the constructed net, from $M_0^{ext}$ to $M'$. Hence proved.
\end{proof}
\begin{lemma} \label{4.4}
Let marking $M$ be reachable from $M_0^{ext}$ in the constructed net. Then, $$M(p^{**})+M(p^{*})+\Sigma_{check\, transition\, t_C}M(p_C)=1$$
\end{lemma}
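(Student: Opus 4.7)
The plan is a straightforward induction on the length of the firing sequence from $M_0^{ext}$, using the construction in Figure~\ref{fig:dlfpic} to maintain the invariant $\Phi(M) \equiv M(p^{**}) + M(p^{*}) + \sum_{C} M(p_C) = 1$.

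For the base case, recall that by the definition of $f$, we have $M_0^{ext}(p^{**}) = 1$ and $M_0^{ext}(p^{*}) = M_0^{ext}(p_C) = 0$ for every clause $C$, so $\Phi(M_0^{ext}) = 1$ holds. For the inductive step, assume $\Phi(M)$ holds and $M \xrightarrow{t} M'$. I split into cases on which kind of transition $t$ is, appealing directly to the arc structure stipulated in Section~\ref{sec:deadlock}.

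First, if $t$ is a transition of the original net, then by construction it has a pre-arc from and a post-arc to $p^{**}$ of weight $1$ and no arcs touching $p^{*}$ or any $p_C$. Hence $t$ requires $M(p^{**}) \geq 1$ to fire and leaves the counts of $p^{**}, p^{*}, p_C$ unchanged, so $\Phi(M') = \Phi(M) = 1$. Second, if $t = q_C$ for some clause $C$, then $q_C$ has a pre-arc from $p^{**}$ and a post-arc to $p_C$ (and no arcs to other tracked places); firing consumes one token from $p^{**}$ and produces one token in $p_C$, so the sum in $\Phi$ is preserved. Third, if $t \in \{r_i, s_i, t_i, t_{i*}\}$, i.e., a transition for clause $C$ strictly inside the dotted box and distinct from $t_C$, then by definition $t$ has both a pre-arc from $p_C$ and a post-arc to $p_C$ of weight $1$ (and again no interaction with $p^{**}, p^{*}$ or $p_{C'}$ for $C' \neq C$); firing leaves $M(p_C)$ and every other tracked coordinate unchanged, so $\Phi$ is trivially preserved. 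Finally, if $t = t_C$ is the check transition for some clause $C$, then $t_C$ has a pre-arc from $p_C$ but, by the exception in the construction, no post-arc back to $p_C$; instead it has a post-arc of weight $1$ to $p^{*}$. Firing therefore removes one token from $p_C$ and adds one token to $p^{*}$, so $\Phi(M') = \Phi(M) = 1$.

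Since these four cases exhaust all transitions of the constructed net, the invariant is preserved under every firing, completing the induction. The main subtlety—really the only thing to check carefully—is to confirm from the formal description of the arc set that the ``check transitions'' $r_i, s_i, t_i, t_{i*}$ indeed carry the self-loop through $p_C$, while only $t_C$ breaks this loop to deposit the token into $p^{*}$; once this is verified the argument is purely mechanical.
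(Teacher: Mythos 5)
Your proof is correct and follows essentially the same route as the paper: induction on the length of the run with a case split on the type of the fired transition, checking that each of the four kinds of transitions preserves the sum $M(p^{**})+M(p^{*})+\sum_C M(p_C)$. If anything, your accounting of which place each transition touches (e.g., $q_C$ moving a token from $p^{**}$ to $p_C$, and $t_C$ from $p_C$ to $p^{*}$) is stated more carefully than in the paper's own write-up.
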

\begin{proof}
Let $\rho$ be the run from $M_0^{ext}$ to $M$.
We shall prove this result by induction over the length $|\rho|$.
If $|rho|=0$, then $M=M_0^{ext}$ are the result follows.
Assume the statement to be true for $|rho|=k$.
Consider any $\rho$ of length $k+1$. Then, $\rho=\mu t$, where $\mu$ is a run of length $k$ and $t\in T'$. Let $M_0^{ext}\Arrow{\mu}M_1\Arrow{t}M$. By induction hypothesis, $M_1(p^{**})+M_1(p^{*}+\Sigma_{check\, transition\, t_C}M_1(p_C)=1$.\\
If $t=q_C$, then  $M_1(p^{**})=M(p^{**})$,$M_1(p^{*})-1=M(p^{*})$ and $1+\Sigma_{t_C}M_1(p_C)=\Sigma_{t_C}M(p_C)$.\\
If $t=t_C$ is a check transition, $1+M_1(p^{**})=M(p^{**})$,$M_1(p^{*})=M(p^{*})$ and $-1+\Sigma_{t_C}M_1(p_C)=\Sigma_{t_C}M(p_C)$.\\
Otherwise, $M_1(p^{**})=M(p^{**})$,$M_1(p^{*})=M(p^{*})$ and $\Sigma_{t_C}M_1(p_C)=\Sigma_{t_C}M(p_C)$.\\
Thus for all $t\in T'$, the summation remains constant. Hence proved.
\end{proof}
\begin{lemma} \label{4.5}
Run $\rho$ is a run from $M_0^{ext}$ to $M$, where $M(p^{*})=1$ iff $\rho=\mu t_C$, where $t_C$ is a check transition.
\end{lemma}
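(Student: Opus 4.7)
\noindent\emph{Proof plan for Lemma~\ref{4.5}.}
The statement is an iff characterizing exactly when a run from $M_0^{ext}$ puts a token into $p^*$. My main tool will be Lemma~\ref{4.4}, which gives the conservation invariant $M(p^{**}) + M(p^*) + \sum_{C}M(p_C) = 1$ at every reachable marking. Using this invariant, both directions reduce to a short case analysis on the last transition of $\rho$, since the only transition that \emph{produces} a token in $p^*$ is a check transition $t_C$, and no transition consumes from $p^*$.

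For the reverse direction, suppose $\rho = \mu\, t_C$ is a valid run from $M_0^{ext}$ and let $M_1$ be the marking reached by $\mu$. Firability of $t_C$ at $M_1$ requires $M_1(p_C)\geq 1$; combined with Lemma~\ref{4.4} applied to $M_1$, this forces $M_1(p_C)=1$ and $M_1(p^*)=0=M_1(p^{**})$. The transition $t_C$ has a weight-$1$ post-arc to $p^*$ and no pre-arc from $p^*$, so firing it yields $M(p^*) = M_1(p^*) + 1 = 1$.

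For the forward direction, assume $\rho$ ends at $M$ with $M(p^*)=1$. Since $M_0^{ext}(p^*)=0$, $\rho$ is nonempty; write $\rho = \mu\, t$ and let $M_1$ be the marking after $\mu$. Applying Lemma~\ref{4.4} to $M$ yields $M(p^{**})=0$ and $M(p_C)=0$ for every clause $C$. I then rule out every possibility for $t$ other than a check transition, by enumerating the transition types in the constructed net:
\begin{itemize}
\item If $t$ belongs to the original net, then $t$ has both a pre-arc and post-arc to $p^{**}$ (by construction), so $M(p^{**}) = M_1(p^{**}) \geq 1$, contradicting $M(p^{**})=0$.
\item If $t = q_C$, then $t$ transfers a token from $p^{**}$ to $p_C$, so $M(p_C) = M_1(p_C)+1 \geq 1$, a contradiction.
\item If $t$ is an intermediate clause-$C$ transition (one of the $r_i, s_i, t_i, t_{i*}$), then $t$ has a pre-arc and a matching post-arc to $p_C$, preserving its token count at a positive value, again contradicting $M(p_C)=0$.
\item The only remaining option is $t = t_C$, a check transition.
\end{itemize}

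The main obstacle is the bookkeeping in the case analysis: one must verify that every transition type in the constructed net other than a check transition either preserves $p^{**}$, preserves $p_C$, or strictly increases $p_C$, so that none of them can leave $p^{**}$ and all the $p_C$'s simultaneously empty. Once this is done carefully using the explicit arcs set up in the construction of Figure~\ref{fig:dlfpic}, Lemma~\ref{4.4} closes both directions immediately.
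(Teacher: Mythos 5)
Your proof is correct, and the forward direction is organized differently from the paper's. The paper first proves by induction that a run containing no check transition leaves $p^*$ empty (so some $t_C$ must occur in $\rho$, giving $\rho=\mu\,t_C\,\mu'$), and then argues that the marking reached immediately after $t_C$ is dead --- by Lemma~\ref{4.4} it has $p^{**}$ and every $p_C$ empty, so no transition is firable --- forcing $\mu'$ to be empty. You instead apply Lemma~\ref{4.4} to the \emph{final} marking $M$ to get $M(p^{**})=0$ and $M(p_C)=0$ for all $C$, and then rule out every non-check candidate for the last transition of $\rho$ by observing that original-net transitions preserve a positive count in $p^{**}$, that $q_C$ deposits a token in $p_C$, and that the intermediate clause transitions $r_i,s_i,t_i,t_{i*}$ preserve a positive count in $p_C$. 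Your case analysis is exhaustive over the transitions of the constructed net and each case is sound, so both directions go through; your route avoids the extra induction and the deadness argument, at the cost of the explicit bookkeeping over transition types, and it also supplies the detail of the reverse direction that the paper dismisses as trivial (namely that $t_C$'s firability forces $M_1(p^*)=0$ via the invariant, so the single post-arc to $p^*$ yields exactly one token).
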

\begin{proof}
Consider a run $\rho$, such that no check transition $t_C$ is fired along $\rho$. We shall prove by induction over length of $\rho$ that if $M_0^{ext}\Arrow{\rho}M$, then $M(p^{*})\neq 1$.\\
If $|\rho|=0$, then marking reached by $\rho$ is $M_0^{ext}$, and by definition, $M_0^{ext}(p^{*})\neq 1$.\\
Assume it true for all $\rho$ of length $k$, such that $t_C$ is not fired in $\rho$ for all check transitions $t_C$.\\
Consider any $\rho$ of length $k+1$ such that no check transition is fired in it. Let $\rho=\mu t$, where $\mu$ is of length $k$ and has no $t_C$ fired in it. Hence, by induction hypothesis, if $M_0^{ext}\Arrow{\mu}M_1$, then $M_1(p^{*})\neq 1$. Since $t\neq t_C$ for all clauses C, hence, if $M_1\Arrow{t}M_2$, then $M_2(p^*)=M_1(p^*)\neq 1$ (Since only check transitions have pre-arc or post-arc to $p^*$). But $M_0^{ext}\Arrow{\mu}M_1\Arrow{t}M_2\implies M_0^{ext}\Arrow{\rho}M_2$. Hence proved.\\
Hence, taking contrapositive of this statement, we get, 
If run $\rho$ is a run from $M_0^{ext}$ to $M$, where $M(p^{*})=1$, then $\rho=\mu t_C \mu'$, where $t_C$ is a check transition. Now, we notice that $\mu'$ must be empty. This is because, if $M_0^{ext}\Arrow{{\mu t_C}}M_3$, then $M_3(p^*)=1$. By Lemma \ref{4.4}, $M_3(p^{**})=M_3(p_C)=0$ for all clauses C. Hence, in marking $M_3$, no transition is firable in the constructed net. Hence, $\mu'$ must be the empty run. Thus, we have proved the forward direction of the claim.\\
The other direction is trivial.
\end{proof}
\begin{lemma} \label{4.6}
If run $\rho$ is a run from $M_0^{ext}$ to $M$, where $M(p_C)=1$, then $q_C$ fires in run $\rho$.
\end{lemma}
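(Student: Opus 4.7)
The plan is a simple token-counting (marking equation) argument on the place $p_C$. First, I would read off from the construction in Figure~\ref{fig:dlfpic} the net effect of every transition of $N'$ on the place $p_C$. Only three groups of transitions are incident to $p_C$: the transition $q_C$, which has a post-arc of weight $1$ to $p_C$ and no pre-arc from it; the check transition $t_C$, which has a pre-arc of weight $1$ from $p_C$ and, by the explicit ``except from $t_C$ to $p_C$'' exclusion in the construction, no post-arc back to $p_C$; and the remaining clause-$C$ gadget transitions $r_i, s_i, t_i, t_{i*}$, each of which has matched pre- and post-arcs of weight $1$ on $p_C$. Transitions of the original net $N$ and transitions associated with other clauses $C' \neq C$ are disjoint from $p_C$ by construction.

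Consequently, each firing contributes $+1$ to the token count at $p_C$ if the fired transition is $q_C$, $-1$ if it is $t_C$, and $0$ otherwise. Writing $\#_\rho(t)$ for the number of times transition $t$ is fired in $\rho$, the Petri-net state equation restricted to $p_C$ gives
$$M(p_C) \;=\; M_0^{ext}(p_C) \;+\; \#_\rho(q_C) \;-\; \#_\rho(t_C).$$
By the definition of the extended initial marking we have $M_0^{ext}(p_C) = 0$, and by hypothesis $M(p_C) = 1$, so $\#_\rho(q_C) = 1 + \#_\rho(t_C) \geq 1$. Hence $q_C$ fires in $\rho$, as required.

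I do not anticipate any real obstacle: the argument is purely a conservation-of-tokens invariant and uses no special-arc semantics, since $p_C$ is touched only by classical pre- and post-arcs. The only points requiring care are verifying from the construction that $t_C$ is indeed the \emph{unique} strict consumer of $p_C$ (the exclusion clause is crucial) and that $q_C$ is the unique strict producer, while every other clause-$C$ gadget transition is a pass-through on $p_C$. If preferred, the same conclusion can be obtained by a short induction on $|\rho|$ tracking $M_i(p_C) \geq 0$ and noting that strict increases in $M_i(p_C)$ can only be caused by firings of $q_C$.
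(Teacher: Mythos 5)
Your proof is correct and rests on exactly the same observation as the paper's: among all transitions of $N'$, only $q_C$ strictly increases the token count of $p_C$ (every other clause-$C$ transition has matched pre- and post-arcs of weight $1$, $t_C$ only consumes, and nothing else touches $p_C$), so $M(p_C)=1$ forces at least one firing of $q_C$. The paper phrases this as a contrapositive induction on $|\rho|$ maintaining the invariant $M(p_C)=0$ when $q_C$ never fires, which is precisely the alternative formulation you mention at the end, so the two arguments are essentially identical.
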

\begin{proof}
Consider a run $\rho$, such that no transition $q_C$ is fired along $\rho$. We shall prove by induction over length of $\rho$ that if $M_0^{ext}\Arrow{\rho}M$, then $M(p_C)=0$.\\
If $|\rho|=0$, then marking reached by $\rho$ is $M_0^{ext}$, and by definition, $M_0^{ext}(p^{*})=0$.\\
Assume it true for all $\rho$ of length $k$, such that $q_C$ is not fired in $\rho$.\\
Consider any $\rho$ of length $k+1$ such that transition $q_C$ is not fired in it. Let $\rho=\mu t$, where $\mu$ is of length $k$ and has no $q_C$ fired in it. Hence, by induction hypothesis, if $M_0^{ext}\Arrow{\mu}M_1$, then $M_1(p^{*})=0$. Since $t\neq q_C$, hence, if $M_1\Arrow{t}M_2$, then $M_2(p^*)=M_1(p^*)=0$ (Since only transition $q_C$ can add tokens to $p_C$). But $M_0^{ext}\Arrow{\mu}M_1\Arrow{t}M_2\implies M_0^{ext}\Arrow{\rho}M_2$. Hence proved.\\
\end{proof}
\begin{lemma} \label{4.7}
If transition $q_C$ fires in a run $\rho$, then the $\rho=\mu q_C \mu'$, where $\mu'$ is a run over Clause C Transitions and $\mu$ is a run over $T$.
\end{lemma}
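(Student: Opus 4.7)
The approach I would take is to exploit the conservation invariant of Lemma~\ref{4.4}, namely $M(p^{**}) + M(p^*) + \sum_C M(p_C) = 1$. This says that at any reachable marking there is a single ``driver'' token sitting on exactly one of the auxiliary places, and that driver determines which transitions are enabled. I would split $\rho = \mu \, q_C \, \mu'$ at the occurrence of $q_C$ and argue the two halves independently.

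For the prefix $\mu$, I would first show that $q_C$ can fire at most once in any run: firing $q_C$ consumes the token from $p^{**}$, and inspection of the construction shows that no transition in the net deposits a token back into $p^{**}$ (original transitions only cycle the token already there, and the new transitions touch $p^{**}$ only via $q_{C'}$ which takes from it). So once $p^{**}$ is emptied, no $q_{C'}$ can ever fire again. In particular, if $q_C$ occurs in $\rho$, then no other $q_{C'}$ occurs before it, since otherwise the driver would already have moved to some $p_{C'}$ and $p^{**}$ would be empty when $q_C$ is supposed to fire. Hence throughout $\mu$ the driver sits on $p^{**}$ (by the invariant, since no $q_{C'}$ has fired and $p^*$ starts empty), so every $p_{C'}$ is empty and no check transition $t_{C'}$ or clause-$C'$ auxiliary transition ($r_i, s_i, t_i, t_{i*}$) is enabled. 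The only remaining enabled transitions are those of the original net $T$, whose pre-/post-arcs at $p^{**}$ preserve the driver position. This is exactly the claim that $\mu$ is a run over $T$.

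For the suffix $\mu'$, immediately after $q_C$ fires we have $M(p^{**})=0$, $M(p_C)=1$, $M(p^*)=0$, and $M(p_{C'})=0$ for $C' \neq C$. Original transitions are disabled because they require a token in $p^{**}$; every $q_{C'}$ is disabled for the same reason; and every clause-$C'$ transition with $C' \neq C$ is disabled because $p_{C'}$ is empty. Only transitions for clause $C$ can fire; each of $r_i, s_i, t_i, t_{i*}$ returns the driver to $p_C$ so the same argument iterates, while $t_C$ moves the driver to $p^*$ and produces a marking (the deadlock $M'_\star$) from which no transition is firable. Hence $\mu'$ is a run over clause $C$ transitions, as claimed. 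The main bookkeeping hurdle is the case analysis over transition types at each step, but given Lemma~\ref{4.4} and the explicit list of pre-arcs in the construction it is entirely mechanical.
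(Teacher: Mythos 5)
Your proof is correct, and the suffix half is essentially the paper's argument: after $q_C$ fires the conservation law of Lemma~\ref{4.4} confines the driver token to $p_C$ (or, after $t_C$, to $p^*$), so only clause-$C$ transitions are ever enabled in $\mu'$. For the prefix you take a genuinely different, and arguably cleaner, route. You observe that no transition ever increases $M(p^{**})$ (original transitions merely cycle the token there, and each $q_{C'}$ strictly consumes it), so a firing of any $q_{C'}$ before $q_C$ would leave $p^{**}$ permanently empty and block the later firing of $q_C$; combined with the fact that only $q_{C'}$ can populate $p_{C'}$ (the content of Lemma~\ref{4.6}), every $p_{C'}$ stays at $0$ throughout $\mu$, so only original transitions can fire. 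The paper instead argues by contradiction: if a non-$T$ transition fires in $\mu$ then $M(p_{C'})=1$ for some $C'$, hence $q_{C'}$ fired earlier by Lemma~\ref{4.6}, and then the suffix argument applied to $C'$ forbids the subsequent firing of $q_C$. Your version avoids this re-application of the suffix argument to a second clause and also explicitly covers the case where the offending transition is itself some $q_{C'}$ (which the paper's case split glosses over), at the cost of one extra structural observation about $p^{**}$. One small nitpick: the marking reached after $t_C$ fires need not literally be $M'_\star$ (places attached to other clauses' gadgets may still carry tokens); all you actually need there is that once the driver reaches $p^*$ nothing is firable, which Lemma~\ref{4.4} already gives you.
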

\begin{proof}
Let $M_0^{ext}\Arrow{\mu}M_1\Arrow{q_C}M_2\Arrow{{\mu'}}M_3$.\\
Firstly, we shall prove by induction on length of run $\rho$ that if $\rho$ is a run from $M_2$ from $M$, then $M(p_C)+M(p^*)=1$.
If $|\rho|=0$, then $M=M_2$. Since $q_C$ has a post-arc of weight 1 to $p_C$, $M(p_C)=M_2(p_C)=1$ and $M(p^*)=M_2(p^*)=0$. Hence, the hypothesis holds.\\
Assume true for all $|\rho|=k$.\\
Consider $\mu$ of length $k+1$. Let $\mu=\rho t$, where $|\rho|=k$ and $t\in T'$. By induction hypothesis, if $M_2\Arrow{\rho}M_4\Arrow{t}M_5$, then $M_4(p_C)+M_4(p^*)=1$.\\
If $t=t_C$, then $M_5(p_C)=M_4(p_C)-1$ and $M_5(p_C)=M_4(p_C)+1$
Otherwise, $M_5(p_C)=M_4(p_C)$ and $M_5(p_C)=M_4(p_C)$
Hence, the sum remains same. Hence proved.\\
Now, consider any marking $M$ reachable from $M_2$. Then, $M(p_C)+M(p^*)=1$. If $M(p^*)=1$, by Lemma \ref{4.4}, no transition can be fired at $M$. Otherwise, $M(p_C)=1$. Then, Clause C Transitions are firable. Hence, from any marking reachable from $M_2$, only Clause C Transitions can be fired. This implies $\mu'$ must be a run over Clause C Transitions.\\
Now, we need to prove that $\mu$ must be a run over $T$. Assume that some $t\not\in T$ fires in $\mu$ at marking $M$. Then, we must have $M(p_{C'})=1$ for some clause $C'$. Then, transition $q_{C'}$ must have fired in $\mu$ by Lemma \ref{4.6}. But, then, after $q_{C'}$ fired, $q_C\not \in$ Clause $C'$ Transitions fired. Contradiction.
Hence Proved.
\end{proof}
\begin{lemma} \label{4.8}
Consider any marking $M$ reachable from marking $M_0^{ext}$, where $M(p_C)=1$. Then, for any marking $M'$ reachable from $M$,s.t. $M'(p_C)=1$,
\begin{itemize}
    \item $\forall i\forall p_j\in B^C_i M'(p_j)+i*M'(p_{j*})=M(p_j)+i*M(p_{j*})$
    \item $\forall p_i\in A^C M'(p_i)+M'(p_{i*})\leq M(p_i)+M(p_{i*})$
\end{itemize}
\end{lemma}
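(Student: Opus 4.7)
The plan is to fix a run $\sigma$ witnessing $M \xrightarrow{\sigma} M'$ and induct on $|\sigma|$. The base case $|\sigma|=0$ is immediate since then $M=M'$ and both invariants hold with equality.

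For the inductive step, I would first restrict which transitions can appear in $\sigma$. Since $M(p_C)=1$, Lemma~\ref{4.4} forces $M(p^*)=M(p^{**})=0$ and $M(p_{C'})=0$ for every clause $C'\neq C$. Consequently every original-net transition is disabled (it needs $p^{**}$), every $q_{C'}$ is disabled (it needs $p^{**}$), and every transition attached to a clause $C'\neq C$ is disabled (it needs $p_{C'}$). Thus the only transitions enabled at $M$ are Clause-$C$ transitions $r_j, t_j, t_{j*}, s_j$ together with $t_C$. Moreover $t_C$ cannot occur anywhere in $\sigma$: if it did, then after its firing we would have $p^*=1$, and by Lemma~\ref{4.5} no further transition could fire, forcing $M'(p_C)=0$, contradicting the hypothesis. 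Hence every transition in $\sigma$ is of the form $r_j, t_j, t_{j*}$, or $s_j$; in particular every intermediate marking still has $p_C=1$, so the same argument applies inductively at each step.

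Finally, I would verify the two invariants on a single step $M\xrightarrow{\tau} M_1$ by case analysis, relying on the pairwise disjointness of the sets $A^C, B^C_1, B^C_2, \ldots$ guaranteed by the construction. A firing of $r_j$ with $p_j\in B^C_i$ removes exactly $i$ tokens from $p_j$ and adds one to $p_{j*}$, so $M_1(p_j)+i\cdot M_1(p_{j*})=M(p_j)+i\cdot M(p_{j*})$ exactly; a firing of $t_j$ with $p_j\in A^C$ removes $1$ from $p_j$ and adds $1$ to $p_{j*}$, preserving $M(p_j)+M(p_{j*})$; a firing of $t_{j*}$ with $p_j\in A^C$ removes $1$ from $p_j$ only, so $M(p_j)+M(p_{j*})$ decreases by $1$ (consistent with $\leq$); a firing of $s_j$ does not touch any place in $A^C\cup\bigcup_i B^C_i$. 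In each case, the $B^C_i$-equalities and the $A^C$-inequalities are preserved, and composing along $\sigma$ gives the lemma.

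The main obstacle is conceptual cleanliness rather than technical difficulty: the crucial points are to argue that $t_C$ cannot appear in $\sigma$ (which forces invoking Lemmas~\ref{4.4}--\ref{4.5}) and to exploit the pairwise disjointness of $A^C$ and the $B^C_i$ so that each case of the case analysis perturbs only its own summand without side effects on the others.
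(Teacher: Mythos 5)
Your proposal is correct and follows essentially the same route as the paper: restrict the run from $M$ to $M'$ to the clause-$C$ transitions $r_j,s_j,t_j,t_{j*}$ (the paper cites Lemmas~\ref{4.6} and~\ref{4.7} for this, while you re-derive it directly from the token-conservation invariant of Lemma~\ref{4.4}), rule out $t_C$ via Lemmas~\ref{4.4}--\ref{4.5}, and then induct on the run length with a per-transition case analysis. Your write-up is in fact more explicit than the paper's, which merely states that the induction is ``similar to Lemma~\ref{4.4}''; your case analysis of how $r_j$, $t_j$, $t_{j*}$ and $s_j$ affect the weighted sums is exactly the content that proof elides.
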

\begin{proof}
Given, $M(p_C)=1$. By Lemma \ref{4.6}, $q_C$ must fire in the run $\rho$ from $M_0^{ext}$ to $M$. By Lemma \ref{4.7}, the run from $M$ to $M'$, say $\mu$ is over Clause C Transitions.\\
Since $M'(p_C)=1$, $t_C$ is not fired in $\mu$ by Lemma \ref{4.4} and Lemma \ref{4.5}. Hence, the only transitions firing in $\mu$ are $r_i,s_i,t_i,t_{i*}$.
Both the results can be proved by induction over the length of $\rho$, similar to Lemma \ref{4.4}.
\end{proof}
\begin{lemma} \label{4.10}
Let marking $M'$, as defined in Claim \ref{4.3}, be reachable from $M_0^{ext}$ in the constructed net. Then, there exists a marking $M$ reachable from $M_0$ in the original net such that marking $M$ satisfies $Deadlock(M)$.
\end{lemma}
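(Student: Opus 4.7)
The plan is to decompose the given run $\rho$ from $M_0^{ext}$ to $M'$ using the structural lemmas already established (Lemmas~\ref{4.2}, \ref{4.4}--\ref{4.8}), extract a prefix that corresponds to a legal run in the original net, and then exploit the constraints imposed by the pre-arcs and inhibitor arcs of the final check transition $t_C$ to force the reached marking to satisfy some conjunctive clause $C$ of $\mathit{Deadlock}(M)$.

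First, since $M'(p^*)=1$, Lemma~\ref{4.5} forces $\rho$ to end with a check transition, so $\rho = \sigma \cdot t_C$ for some clause $C$. Firing $t_C$ requires $M_2(p_C) \ge 1$ in the marking $M_2$ immediately preceding it; by Lemma~\ref{4.6}, this forces the auxiliary transition $q_C$ to have fired earlier in $\sigma$, and by Lemma~\ref{4.7} the run must factor as $\rho = \mu \cdot q_C \cdot \mu' \cdot t_C$, where $\mu$ uses only original-net transitions and $\mu'$ uses only Clause-$C$ transitions. Applying Lemma~\ref{4.2} to $\mu$ produces a marking $M$ of the original net reachable from $M_0$ with $\mu$ reaching $M^{ext}$ in the constructed net. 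Let $M_1$ be the marking right after $q_C$ (so $M_1$ agrees with $M^{ext}$ on all original places, with $M_1(p_C)=1$ and all $p_{i*}$ empty), and let $M_2$ be the marking right before $t_C$ fires.

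The heart of the argument is to combine the firing conditions of $t_C$ with Lemma~\ref{4.8}. Because $t_C$ has inhibitor arcs from every original place, $M_2(p)=0$ for every place $p$ in the original net; because $t_C$ has unit pre-arcs from $p_{i*}$ for each $p_i \in A^C \cup \bigcup_k B^C_k$, we need $M_2(p_{i*}) \ge 1$; and because $M'(p_{i*})=0$ after $t_C$ fires, we must in fact have $M_2(p_{i*}) = 1$ for these indices. Now invoke Lemma~\ref{4.8} on the segment from $M_1$ to $M_2$ (both satisfy $(\cdot)(p_C)=1$). For $p_j \in B^C_k$ it yields $M_2(p_j) + k\cdot M_2(p_{j*}) = M_1(p_j) + k\cdot M_1(p_{j*}) = M(p_j)$; substituting $M_2(p_j)=0$ and $M_2(p_{j*})=1$ gives $M(p_j) = k$. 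For $p_j \in A^C$ it yields $M_2(p_j)+M_2(p_{j*}) \le M(p_j)$; substituting $M_2(p_j)=0$ and $M_2(p_{j*})=1$ gives $M(p_j) \ge 1$.

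Thus $M$ satisfies both $\forall i,\forall p\in B^C_i,\; M(p)=i$ and $\forall p\in A^C,\; M(p)\ge 1$, so by Lemma~\ref{4.1} clause $C$ evaluates to true at $M$, whence $\mathit{Deadlock}(M)$ holds. Combined with reachability of $M$ in the original net, this completes the proof. The main obstacle is precisely Step~3 above: extracting the exact equality $M_2(p_{i*}) = 1$ (rather than $\ge 1$) from the fact that $M'$ has no tokens outside $p^*$, since it is this equality that upgrades the invariant of Lemma~\ref{4.8} from inequalities into the exact match with clause $C$'s literals.
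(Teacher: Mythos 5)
Your proof is correct and follows essentially the same route as the paper's: decompose the run via Lemmas~\ref{4.5}--\ref{4.7}, recover a reachable original-net marking from the prefix via Lemma~\ref{4.2}, and transfer the clause-$C$ constraints back through the invariant of Lemma~\ref{4.8} to conclude via Lemma~\ref{4.1}. If anything, your derivation of the exact equality $M_2(p_{i*})=1$ from $M'(p_{i*})=0$ is slightly more careful than the paper, which asserts it directly from firability of $t_C$ (which alone only gives $\geq 1$).
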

\begin{proof}
Let $\rho$ be the run from $M_0^{ext}$ to $M'$ in the constructed net. Since $M'(p^*)=1$, by Lemma \ref{4.5}, $\rho=\mu t_C$ for some check transition $t_C$. Let $M_0^{ext}\Arrow{\mu}M_1\Arrow{t_C}M'$. Then, by definition of the constructed net, $\forall p_i\in A^C,M_1(p_{i*})=1$ and $\forall j\forall i \in B^C_j,M_1(p_{i*})=1$ and $\forall p\in P, M_1(p)=0$ and $M_1(p_C)=1$ (for $t_C$ to be firable at $M_1$). By Lemma \ref{4.6} and \ref{4.7}, $\mu=\alpha q_C\beta$, where $\beta$ is a word over Clause C Transitions and $\alpha$ is a word over $T$. Consider $M_0^{ext}\Arrow{\alpha}M_2\Arrow{q_C}M_3\Arrow{\beta}M_1\Arrow{t_C}M'$. We have, $\forall i,M_2(p_{i*})=0$ since transitions $t\in T$ don't add any tokens to $p_{i*}$. Hence, $\forall i,M_3(p_{i*})=0$ and $M_3(p_C)=1$. By Lemma \ref{4.8}, 
\begin{itemize}
    \item $\forall i\forall p_j\in B^C_i M_1(p_j)+i*M_1(p_{j*})=M_3(p_j)+i*M_3(p_{j*})$
    \item $\forall p_i\in A^C M_1(p_i)+M_1(p_{i*})\leq M_3(p_i)+M_3(p_{i*})$
\end{itemize}
This further implies,
\begin{itemize}
    \item $\forall i\forall p_j\in B^C_i i=M_3(p_j)$
    \item $\forall p_i\in A^C 1\leq M_3(p_i)$
\end{itemize}
Thus, $M_3$ satisfies clause C by Lemma \ref{4.1}. Hence, $M_3$ satisfies $Deadlock(M)$. This further implies that $M_2$ satisfies $Deadlock(M)$ (since $M_3(p)=M_2(p)\forall p\in P$) Also, $M_2=M^{ext}_{org}$ for some $M_{org}$, since $M_2(p^{**})=1$. Then, by Lemma \ref{4.2}, $M_{org}$ is reachable from $M_0$ in the original net and also satisfied $Deadlock(M)$. Hence, $M_{org}$ is a deadlocked marking reachable from $M_0$. Hence proved.
\end{proof}
This proves that Deadlockfreeness of the original \Za{IR} net is equivalent to reachability of marking $M'$ in the constructed \Za{IR} net.

% A4
\subsection{Reachability in Petri nets with 1 reset arc and 1 inhibitor arc}\label{1I1RPN}
Here, we present a reduction from Reachability in Petri Nets with two inhibitor arcs to Petri Nets with one reset and one inhibitor arc.

The construction is shown in the diagram alongside.

\begin{figure}[t]
\begin{tikzpicture}
\draw (0,18.5) circle (0.3cm) node{$p_1$};
\draw (2,18.5) circle (0.3cm) node{$p_2$};
\draw (1,15.5) circle (0.3cm) node{$p_3$};
\draw [fill=black] (0.65,16.9) node[left]{$t_1$} rectangle (1.35,17.1);
\draw [fill=black] (2.65,16.9) node[left]{$t_2$} rectangle (3.35,17.1);
\draw [fill=black] (4.65,16.9) node[left]{$t_3$} rectangle (5.35,17.1);
\draw[-latex,thick] (3,17.1) -- (2,18.2);
\draw[-latex,thick] (2,18.2) -- (5,17.1);
\draw[-latex,thick] (0,18.2) -- node[left]{1} (0.65,17.1);
\draw[-o,thick] (2,18.2) -- (1.35,17.1);
\draw[-latex,thick] (1,16.9) -- node[left]{1} (1,15.8);
\draw (0,14) -- (5,14);
\draw (0,12.5) circle (0.3cm) node{$p_1$};
\draw (2,12.5) circle (0.3cm) node{$p_2$};
\draw (4,12.5) circle (0.3cm) node{$p'_2$};
\draw (1,9.5) circle (0.3cm) node{$p_3$};
\draw [fill=black] (0.65,10.9) node[left]{$t_1$} rectangle (1.35,11.1);
\draw [fill=black] (2.65,10.9) node[left]{$t_2$} rectangle (3.35,11.1);
\draw [fill=black] (4.65,10.9) node[left]{$t_3$} rectangle (5.35,11.1);
\draw[-latex,thick] (2.65,11.1) -- node[right]{1} (2,12.2);
\draw[-latex,thick] (2,12.2) -- node[left]{1} (4.65,11.1);
\draw[-latex,thick] (3.35,11.1) -- node[right]{1} (4,12.2);
\draw[-latex,thick] (4,12.2) -- node[left]{1} (5.35,11.1);
\draw[-latex,thick] (0,12.2) -- node[left]{1} (0.65,11.1);
\draw[-latex,thick] (2,12.2) -- node[left]{rt} (1.35,11.1);
\draw[-latex,thick] (1,10.9) -- node[left]{1} (1,9.8);
\end{tikzpicture}
\end{figure}
In the construction, for one place (place $p_2$ in the diagram) from which we have an inhibitor arc, we create a copy place (place $p'_2$ in diagram). In the initial marking, the copy place has equal number of tokens as the original place. The copy place has the same set of arcs as the original place, except for the inhibitor arc. Also, the inhibitor arc of original place is now replaced by a reset arc. The intuitive idea is that the if reset arc is fired when the original place has non-zero number of tokens, then the number of tokens in original place and copy place will not be equal. This can be checked by reachability.

For convenience, assume that the two inhibitor arcs are, one from place $p_2$ to transition $t_1$, and another, from any place to transition $t_4$.
Let the original net be $(P,T,F)$. Then the constructed net is $(P'=P\cup\{p_2'\},T,F')$. The flow function is $$F'(p,t)=\begin{cases}R\,\,\,\,\,\,\,\,\,\,\,\,\,\,\,\,\,\,\,p=p_2\wedge t=t_1\\0\,\,\,\,\,\,\,\,\,\,\,\,\,\,\,\,\,\,\,\,p=p'_2\wedge t=t_1\\0\,\,\,\,\,\,\,\,\,\,\,\,\,\,\,\,\,\,\,\,p=p'_2\wedge t=t_4\wedge F(p_2,t_4)=I\,(i.e.\, both\, inh.\, arcs\, from\, p_2)\\F(p_2,t)\,\,\,\,\,\,\,\,\,\,p=p'_2\wedge F(p_2,t)\in \mathbb{N}\\F(p,t)\,\,\,\,\,\,\,\,\,\,\,otherwise\end{cases}$$ $$\forall p\in P\,F'(t,p)=F(t,p)$$ $$F'(t,p'_2)=F(t,p_2)$$ Note that by construction, $\forall t\in T\,F'(p'_2,t)\in \mathbb{N}$, and  $\forall t\in T\,F'(p'_2,t)>0\implies F'(p'_2,t)=F(p_2,t)$.\\
For convenience, define for any marking $M$ in original net, $M^{ext}$ is a run in the constructed net, where $$M^{ext}(p'_2)= M(p_2)\wedge\forall p\in P\,M^{ext}(p)= M(p)$$
The initial marking in the constructed net is $M^{ext}_0$, where $M_0$ is the initial marking of the original net.
\begin{lemma} \label{4.21}
Consider a marking $M_1$ of the constructed net, such that $M_1(p_2)<M_1(p'_2)$. Then, for all markings $M_2$ reachable from $M_1$, $M_2(p_2)<M_2(p'_2)$.
\end{lemma}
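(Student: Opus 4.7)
The plan is to prove Lemma~\ref{4.21} by induction on the length of the firing sequence from $M_1$ to $M_2$, showing that the invariant $M(p_2) < M(p'_2)$ is preserved by every single transition firing. The base case (length zero) is immediate from the hypothesis. For the inductive step, I fix a marking $M$ satisfying the invariant and a transition $t$ firable at $M$, yielding $M \xrightarrow{t} M'$, and analyse how firing $t$ changes the token counts in $p_2$ and $p'_2$ using the precise description of $F'$ in the construction, in particular the identities $F'(t,p'_2) = F'(t,p_2) = F(t,p_2)$ for every $t \in T$.

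The case analysis splits on the type of arc from $p_2$ to $t$. When $t \notin \{t_1, t_4\}$, the construction gives $F'(p_2,t) = F'(p'_2,t) \in \N$ and equal post-arcs, so firing $t$ subtracts and adds the same quantities from both places; the difference $M(p'_2)-M(p_2)$ is preserved, and strict inequality propagates. When $t = t_4$ has inhibitor arcs from $p_2$ in the original net (so $F'(p_2,t_4) = I$ and $F'(p'_2,t_4) = 0$), firability at $M$ forces $M(p_2)=0$ and hence $M(p'_2)\ge 1$; after firing, $M'(p_2) = F(t_4,p_2)$ while $M'(p'_2) = M(p'_2) + F(t_4,p_2) \ge 1 + F(t_4,p_2) > M'(p_2)$. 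If instead $t_4$'s inhibitor is from a different place, then $t_4$ falls under the symmetric-arc case above.

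The critical and most subtle case is $t = t_1$, the transition whose inhibitor arc from $p_2$ in the original net was replaced by a reset arc in the construction. Here the reset clears $p_2$ but leaves $p'_2$ untouched (since $F'(p'_2,t_1)=0$), and then both places receive $F(t_1,p_2)$ tokens from the (copied) post-arc. Consequently $M'(p_2) = F(t_1,p_2)$ while $M'(p'_2) = M(p'_2) + F(t_1,p_2)$, and since $M(p'_2) > M(p_2) \ge 0$ implies $M(p'_2) \ge 1$, we obtain $M'(p'_2) \ge 1 + F(t_1,p_2) > M'(p_2)$. I expect this reset case to be the only mildly delicate point, because it is the only firing that treats $p_2$ and $p'_2$ asymmetrically while still possibly depositing tokens into both; a careful unpacking of the firing semantics (first execute the reset, then apply post-arcs) is needed, and the gap $M(p'_2)-M(p_2)$ can actually grow here rather than merely being preserved. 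Once these three cases are established the inductive step is complete and the lemma follows.
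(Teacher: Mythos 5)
Your proposal is correct and follows essentially the same route as the paper's own proof: induction on the length of the run, with the inductive step split into the three cases $F'(p_2,t)\in\N$, $F'(p_2,t)=I$ (the transition $t_4$ when its inhibitor is on $p_2$), and $F'(p_2,t)=R$ (the transition $t_1$), using exactly the same token-count computations in each case. Your additional remarks --- that firability under the inhibitor forces $M(p_2)=0$ hence $M(p'_2)\ge 1$, and that the gap can grow in the reset case --- are correct elaborations of steps the paper leaves implicit.
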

\begin{proof}
Let the run from $M_1$ to $M_2$ be $\rho$.\\
We shall prove this result by induction over length of $\rho$.
If $|\rho|=0$, then $M_2=M_1$ and hence $M_2(p_2)=M_1(p_2)<M_1(p'_2)=M_2(p'_2)$.
Assume the statement to be true for all runs $\mu$ of size $k$.\\
Consider a run $\rho$ of size $k+1$. Let $\rho=\mu t$ where $\mu$ is a run of length $k$ and $t\in T$. Also, let $M_1\Arrow{\mu}M_3\Arrow{t}M_2$.\\
By induction hypothesis, we have $M_3(p_2)<M_3(p'_2)$.\\
If $F'(p_2,t)\in \mathbb{N}$, then $M_2(p_2)-M_3(p_2)=M_2(p'_2)-M_3(p'_2)=F'(t,p_2)-F'(p_2,t)$. Hence, $M_2(p'_2)> M_2(p_2)$.\\
If $F'(p_2,t)=I$, (i.e. $t=t_4$ and both inhibitor arcs from $p_2$ in original net), then $M_2(p_2)=F'(t,p_2)$ and $M_2(p'_2)=M_3(p'_2)+F'(t,p_2)>M_2(p_2)$.\\
If $F'(p_2,t)=R$, (i.e. $t=t_1$), then $M_2(p_2)=F'(t,p_2)$ and $M_2(p'_2)=M_3(p'_2)+F'(t,p_2)>M_2(p_2)$.
Thus, in all cases, we have $M_2(p_2)\leq M_2(p'_2)$.\\
\end{proof}
\begin{lemma} \label{4.22}
For all reachable markings $M_1$, we have $M_1(p_2)\leq M_1(p'_2)$.
\end{lemma}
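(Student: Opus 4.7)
The plan is a straightforward induction on the length of the run witnessing reachability of $M_1$ from the initial marking $M_0^{ext}$, using Lemma~\ref{4.21} to handle the ``strict'' case and a direct case analysis to handle the ``equality'' case.

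For the base case, by definition of $M_0^{ext}$ we have $M_0^{ext}(p_2) = M_0(p_2) = M_0^{ext}(p'_2)$, so the inequality holds (with equality) at the initial marking. For the inductive step, suppose $M$ is reachable and satisfies $M(p_2) \leq M(p'_2)$, and $M \Arrow{t} M'$ for some transition $t \in T$. If $M(p_2) < M(p'_2)$, then Lemma~\ref{4.21} applied to $M_1 := M$ and $M_2 := M'$ immediately gives $M'(p_2) < M'(p'_2)$, and we are done.

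The remaining case is $M(p_2) = M(p'_2)$, which I would split according to the type of arc from $p_2$ to $t$ in the constructed net, exactly as in the proof of Lemma~\ref{4.21}. If $F'(p_2,t) \in \N$, then by construction $F'(p'_2,t) = F'(p_2,t)$ and $F'(t,p'_2) = F'(t,p_2)$, so both $M(p_2)$ and $M(p'_2)$ are updated by the same amount and equality is preserved. If $F'(p_2,t) = R$ (i.e.\ $t = t_1$), then $M'(p_2) = F'(t,p_2)$ while $M'(p'_2) = M(p'_2) + F'(t,p'_2) = M(p'_2) + F'(t,p_2) \geq F'(t,p_2) = M'(p_2)$. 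If $F'(p_2,t) = I$ (the subcase where both inhibitor arcs of the original net emanate from $p_2$), firing $t$ requires $M(p_2) = 0$, so $M'(p_2) = F'(t,p_2)$ and $M'(p'_2) = M(p'_2) + F'(t,p_2) \geq F'(t,p_2) = M'(p_2)$. In every subcase the invariant is maintained.

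There is no real obstacle here: the construction was designed precisely so that the only transition that can create asymmetry between $p_2$ and $p'_2$ is the one whose pre-arc from $p_2$ is turned into a reset (or into the surviving inhibitor), and in both of these cases the asymmetry is one-sided, namely $p'_2$ accumulates the tokens that $p_2$ loses. The main thing to get right is simply bookkeeping: checking that $F'(t,p'_2)=F'(t,p_2)$ for every transition, and that $F'(p'_2,t)=F'(p_2,t)$ whenever $F'(p_2,t)\in\N$, both of which are immediate from the definition of $F'$. Lemma~\ref{4.22} then feeds directly into the correctness argument for the reduction, since reachability in the original two-inhibitor net will correspond to reachability, in the constructed net, of a marking satisfying $M(p_2) = M(p'_2)$, and Lemma~\ref{4.21} together with Lemma~\ref{4.22} will show this equality can be tested via reachability.
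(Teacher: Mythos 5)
Your proof is correct and follows essentially the same route as the paper, which simply states that the argument is ``similar to that of Lemma~\ref{4.21}'', i.e.\ an induction on run length with a case split on $F'(p_2,t)$. Your version is in fact more explicit than the paper's, and the bookkeeping claims ($F'(t,p'_2)=F'(t,p_2)$ always, and $F'(p'_2,t)=F'(p_2,t)$ whenever $F'(p_2,t)\in\N$) do hold by the definition of $F'$.
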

\begin{proof}
The proof is similar to that of Lemma \ref{4.21}.
\end{proof}
\begin{lemma} \label{4.23}
Reachability in Inhibitor Petri Nets is reducible to Reachability in Reset Petri Nets. Also, this reduction conserves the hierarchy in the nets.
\end{lemma}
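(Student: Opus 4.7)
\noindent\emph{Proof proposal for Lemma~\ref{4.23}.}
The plan is to show, via the construction just described, that a marking $M$ of the original $2$-inhibitor net is reachable from $M_0$ iff the extended marking $M^{ext}$ is reachable from $M_0^{ext}$ in the constructed $1$-reset-$1$-inhibitor net. Combined with hierarchy preservation (argued at the end), this gives the claimed reduction. Since reachability with two inhibitor arcs is undecidable (two-counter machines), the same will then follow for nets with one inhibitor and one reset arc.

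For the forward direction, I would argue by induction on the length of a run: any run $\rho$ of the original net that reaches $M$ is also a valid firing sequence of the constructed net reaching $M^{ext}$. The only transition whose local behaviour has changed is $t_1$, where the inhibitor from $p_2$ has been replaced by a reset. But along an original run, $t_1$ fires only when $M(p_2)=0$, so the reset removes no tokens and behaves exactly like the original inhibitor. Meanwhile the copy place $p'_2$ tracks $p_2$ by construction: its pre-arc multiplicities agree with those of $p_2$ for every transition other than $t_1$, and its post-arc multiplicities agree with those of $p_2$ for every transition (so they gain and lose tokens in lockstep whenever $t_1$ does not fire, and when $t_1$ fires on the empty $p_2$, neither place changes by different amounts).

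For the backward direction, suppose $\rho$ is a run of the constructed net from $M_0^{ext}$ to some marking $M'$ with $M'(p_2)=M'(p'_2)$. The crucial claim is that $t_1$ never fires along $\rho$ at a marking with $p_2>0$. Suppose for contradiction that $t_1$ fires at some intermediate $M_1$ with $M_1(p_2)>0$, yielding $M_2$. Since $F'(p'_2,t_1)=0$ but $F'(t_1,p'_2)=F(t_1,p_2)$, firing $t_1$ adds tokens to $p'_2$ without removing any, while $p_2$ is reset and then receives the same post-arc weight. Using Lemma~\ref{4.22} one obtains $M_2(p'_2)-M_2(p_2)=M_1(p'_2)\ge M_1(p_2)>0$, i.e.\ a \emph{strict} inequality appears after the offending firing. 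By Lemma~\ref{4.21} this strict inequality is preserved by every subsequent transition, contradicting $M'(p_2)=M'(p'_2)$. Hence $t_1$ fires along $\rho$ only when $p_2=0$, so the reset arc acts like an inhibitor everywhere along $\rho$; the same sequence is therefore a valid run in the original net, and its final marking restricted to $P$ equals $M'|_P=M$.

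Hierarchy preservation is immediate: we have replaced one special arc (an inhibitor) by another special arc (a reset) at the same (place, transition) pair, and added a new place $p'_2$ from which no special arc emanates. Hence the condition $F(p,t)\not\in\N \Rightarrow (\forall q\sqsubseteq p,~F(q,t)\not\in\N)$ remains satisfied under any placement of $p'_2$ in the extended total order $\sqsubseteq'$ (for concreteness, put $p'_2$ immediately above $p_2$). The main obstacle in the argument is the backward direction, since a reset is strictly weaker than an inhibitor and one must rule out runs that exploit it to clear a non-empty $p_2$; the invariant supplied by Lemmas~\ref{4.21}--\ref{4.22} is exactly what prevents such ``cheating'' runs from ever returning to a marking of the form $M^{ext}$.
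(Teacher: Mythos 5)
Your proposal is correct and follows essentially the same route as the paper: a forward simulation by induction on the run, and a backward direction that uses the invariants of Lemmas~\ref{4.21} and~\ref{4.22} (once $p_2 < p'_2$ strictly, it persists) to rule out any firing of $t_1$ with $p_2>0$ on a run ending in a marking of the form $M^{ext}$. The only cosmetic difference is that you phrase the backward direction as a direct contradiction argument rather than the paper's step-by-step induction showing every intermediate marking has the form $M_i^{ext}$, but the key idea is identical.
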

\begin{proof}
We shall prove that reachability of a marking $M$ in the original net is equivalent to the reachability of marking $M'=M^{ext}$.\\
Let the marking $M$ be reachable in the original net. Let the corresponding run be $\rho$. Consider the run in the constructed net. We claim that the run is valid and the marking reached is $M'$.\\
We shall show this by induction over length of the run.\\
If $|\rho|=0$, then $M=M_0$ and hence $M'=M^{ext}_0$. Then $\rho$ is a run from $M^{ext}_0$ to $M'$.\\
Assume the statement to be true for all runs $\mu$ of length $k$.\\
Consider any run $\rho$ of length $k+1$. Let $\rho=\mu t$, where $\mu$ is a run of length $k$ and $t\in T$. Also, let $M_0\Arrow{\mu}M_1\Arrow{t}M$.\\
By induction hypothesis, we have in the constructed net, $M^{ext}_0\Arrow{\mu}M^{ext}_1$. Given that transition $t$ fires at $M_1$. Thus, we have, $\forall p\in P$, if $F(p,t)\in \mathbb{N}$, then $M_1(p)>F(p,t)$, and if $F(p,t)=I$, then $M_1(p)=0$.\\
Consider any $p\in P\cup\{p'_2\}$.\\
If $t\neq t_1$, then $\forall p\in P\,F'(p,t)=F(p,t)$ by definition, and hence, $M_1^{ext}(p)=M_1(p)>F(p,t)=F'(p,t)$. Also, $F'(p'_2,t)\in \mathbb{N}$ by construction. If $F'(p'_2,t)>0$, then $F'(p'_2,t)=F(p_2,t)$, by construction. Now, by Lemma \ref{4.22}, $M_1^{ext}(p'_2)\geq M_1^{ext}(p_2)=M_1(p_2)\geq F(p_2,t)=F'(p'_2,t)$. Hence $t$ is firable at $M_1^{ext}$.\\
If $t=t_1$, then $F'(p_2,t)=R$,$F'(p'_2,t)=0$ and $\forall p\in P-\{p_2\}\,F'(p,t)=F(p,t)$. Again, $t$ is firable at $M_1^{ext}$.\\
Hence $t$ is firable at marking $M^{ext}_1$. Let $M^{ext}_1\Arrow{t}M'$.\\ \\
We already know that $M_1\Arrow{t}M$. We now need to prove that $M'=M^{ext}$.\\
If $t\neq t_1$:\\
\tab $\forall p\in P\,M'(p)=M_1^{ext}(p)-F'(p,t)+F'(t,p)=M_1(p)-F(p,t)+F(t,p)=M(p)$\\
\tab $M'(p'_2)=M_1^{ext}(p'_2)-F'(p'_2,t)+F'(t,p'_2)=M_1(p_2)-F(p_2,t)+F(t,p_2)=M(p_2)$\\
\tab Thus, $M'=M^{ext}$\\
If $t=t_1$:\\
\tab $\forall p\in P-\{p_2\}\,M'(p)=M_1^{ext}(p)-F'(p,t)+F'(t,p)=M_1(p)-F(p,t)+F(t,p)=M(p)$\\
\tab $M_1(p_2)=0$ for $t_1$ to be firable at $M_1$\\
\tab Then, $M'(p_2)=F'(t_1,p_2)=M_1(p_2)+F'(t_1,p_2)=M_1(p_2)+F(t_1,p_2)=M(p_2)$\\
\tab $M'(p_2)=M_1^{ext}(p'_2)+F'(t_1,p'_2)=M_1(p_2)+F(t_1,p_2)=M(p_2)$\\
\tab Thus, $M'=M^{ext}$\\
Hence proved.\\ \\
Let the marking $M'$ be reachable in the modified net. Thus, there exists a run $\rho$ from initial marking $M^{ext}_0$ to $M'$.\\
Claim, run $\rho$ is a valid run in the original net from $M_0$ to $M$.\\
We prove this by induction over length of $\rho$.\\
If $|\rho|=0$, then $M'=M^{ext}_0$, and then $M=M_0$ and hence $\rho$ is a run from $M_0$ to $M$.\\
Assume the statement to be true for all runs $\mu$ of length k.\\
Consider any run $\rho$ of length $k+1$. Let $\rho=\mu t$ for $\mu$ a run of length $k$, and $t\in T$. Also, let $M^{ext}_0\Arrow{\mu}M'_1\Arrow{t}M^{ext}$.\\
From Lemma \ref{4.22} and \ref{4.21}, we conclude that for all $p\in P$, $M'_1(p_2)=M'_1(p'_2)$. Thus, $M'_1=M^{ext}_1$ for some marking $M_1$. Also, if $t=t_1$, then $M_1(p_2)=0$. Else, the marking reached on firing $t$ cannot be $M^{ext}$ (Since $M^{ext}(p'_2)\neq M^{ext}(p_2)$ as we lose tokens from $p_2$ on firing $t_1$).\\
By induction hypothesis, $\mu$ is a run from $M_0$ to $M_1$.\\
Also, $F(p,t)=I$ implies $M_1(p)=0$ (If $(p,t)=(p_2,t_1)$, then discussed above, and otherwise, $M_1(p)=M_1^{ext}(p)=0$ since $F'(p,t)=F(p,t)=I$ and t is firable).\\
If $F(p,t)\in \mathbb{N}$, then $F'(p,t)\leq M^{ext}_1(p)$. Therefore, $F(p,t)\leq M_1(p)$.\\
Hence, $t$ can be fired at marking $M_1$. Let $M_1\Arrow{t}M^*$.\\
Now, we need only prove that $M^*=M$.
This is easy to see, as the flow relation is unchanged for the the set of places $P$. $M^*(p)=M_1(p)-F(p,t)+F(t,p)=M^{ext}_1(p)-F'(p,t)+F'(t,p)=M^{ext}(p)=M(p)$.\\
Hence, $M^*=M$.
\end{proof}
This completes the proof.
\begin{theorem}
Reachability in Petrinets with 1 inhibitor arc and 1 reset arc is undecidable
\end{theorem}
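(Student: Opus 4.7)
The plan is to appeal to the reduction established in Lemma~\ref{4.23}, which turns any Petri net with two inhibitor arcs into one with exactly one inhibitor arc and one reset arc while preserving reachability of a specified marking. Composing this with the classical undecidability of reachability in Petri nets with two inhibitor arcs---obtained by encoding a two-counter Minsky machine, each inhibitor arc implementing the zero-test of one counter---immediately yields undecidability in the target class.

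In more detail, given an instance $(N,M)$ of reachability in a two-inhibitor Petri net, I would fix one of the two inhibitor arcs, say the one from place $p_2$ to transition $t_1$, and apply the construction of Lemma~\ref{4.23}: introduce a shadow place $p'_2$ initialised to $M_0(p_2)$, duplicate on $p'_2$ every non-special arc touching $p_2$, replace the inhibitor arc $(p_2,t_1)$ by a reset arc acting on $p_2$ alone, and leave the second inhibitor arc untouched. The target marking $M^{ext}$ enforces $M^{ext}(p'_2)=M(p_2)$, which acts as a global witness that the replacement was faithful: reachability of $M^{ext}$ in the constructed net holds iff reachability of $M$ holds in $N$.

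The crux is the monotonicity invariant $M(p_2)\le M(p'_2)$ of Lemma~\ref{4.22}, together with its strict version in Lemma~\ref{4.21}: whenever the new reset transition fires at a marking with $p_2$ nonempty, a strict gap opens between $p_2$ and $p'_2$, and because every subsequent transition updates the two places identically, this gap can never be closed. Consequently any run reaching $M^{ext}$ must have fired $t_1$ only at markings where $p_2$ was empty, i.e., must have honestly simulated the original inhibitor arc; conversely, any legal run of $N$ lifts verbatim to a run of the constructed net. The main obstacle is therefore not this concluding appeal but the design of the invariant-preserving construction itself, which has already been carried out in Lemmas~\ref{4.21}--\ref{4.23}; once those are in hand, the theorem follows by a one-line composition with the undecidability of two-inhibitor reachability.
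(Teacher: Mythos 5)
Your proposal is correct and is essentially identical to the paper's own argument: the paper also reduces from reachability in Petri nets with two inhibitor arcs by introducing a copy place $p'_2$ that mirrors all ordinary arcs of $p_2$, replacing one inhibitor arc by a reset arc, and using the invariant $M(p_2)\le M(p'_2)$ (with a permanent strict gap after any ``cheating'' reset) so that the target marking $M^{ext}$ with $M^{ext}(p'_2)=M(p_2)$ certifies a faithful simulation. No substantive difference from the paper's proof.
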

This subsection proved that reachability in undecidable in Petri nets with 1 inhibitor arc and 1 reset arc. This fixes the decidability frontier perfectly. While reachability in Petri nets with one inhibitor arc and Petri nets with one reset arc are decidable, reachability in Petri nets with one inhibitor arc and one reset arc is undecidable.\\
We can further provide a reduction from Deadlockfreeness in Petri nets with two inhibitor arcs (from two different places) to deadlockfreeness in Petri nets with one reset arc and hierarchical inhibitor arcs. For this,  we have two copy places, $p'_2$ and $p''_2$, instead of one (flow relation is same for both copy places). To above construction (call it N), we add a place $p^*$, which has a pre-arc and post-arc to all transitions in N. We add another place $p^{**}$ and a transition $t^*$, with pre-place as $p^*$ and post-place as $p^{**}$ with weight of all arcs as 1. For every transition $t$ in N, we add a new transition $t'$, such that $F(t',p_2)=F(t',p'_2)=F(p_2,t')=F(p'_2,t')=0\wedge\forall p\not \in \{p_2,p'_2\}\,F(t',p)=F(p,t')=F(p,t)$ (if $F(p,t')\not \in \mathbb{N}$ then $F(t',p)=0$). We add two more transitions, say $s^*$ and $r^*$, where $s^*$ has a pre-arc from $p_2$, $p'_2$ and $p^{**}$ each of weight 1, and a post-arc of weight 1 to $p^{**}$. Thus, $s^*$ just empties one token each from $p_2$ and $p'_2$ when $p^{**}$ has a token. Transition $r^*$ has a pre-arc from $p'_2$ and $p^{**}$ and post-arcs to $p'_2$ and $p^{**}$.\\
 Clearly, if a deadlocked marking is reachable in original net, then in the constructed net, one can reach that marking, then fire $t^*$, and empty $p'_2$ using $s^*$. Now, no transition is firable, and we reach a deadlocked marking in the constructed net.\\
 Conversely, if a deadlocked marking is reached in constructed net, then we must have a token in $p^{**}$ and $t^*$ has fired in the run (say the run is $\mu t^* \mu'$) (else, there is a token in $p^*$ and $t^*$ is firable). This implies that $p'_2$ has no tokens, which means number of tokens in $p_2$ (which must be lesser than number of tokens in $p'_2$ by Lemma \ref{4.21} above), must have 0 tokens, implying that $t_1$ was never fired when $p_2$ had non-zero tokens (By Lemma \ref{4.22}) and number of tokens in $p''_2$ and in $p_2$ are equal before $t^*$ fired. Since none of the other added transitions is firable, though number of tokens in $p''_2$ is equal to number of tokens in $p_2$, the marking reached by $\mu$ in original net is in deadlock.
 
% A5 
\subsection{Coverability in \Zb{R}{I}}\label{2R1C}

First, we give proofs of the lemmas stated in section \ref{Cover}
\begin{lemma}
The place $q_{i}$ and only $q_{i}$ in the Petri net gets a token when the instruction numbered $i$ is being simulated.
\end{lemma}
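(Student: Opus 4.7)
\noindent\emph{Proof proposal.}
The plan is to prove an invariant by induction on the length of a run: at every reachable marking $M$ of the Petri net $P$, the total number of tokens lying in the set of \emph{control places} $Q = \{q_0, q_1, \ldots, q_n\} \cup \{q_{i1} : q_i \text{ is a JZDEC instruction}\}$ equals exactly $1$. Moreover, whenever this unique token sits in some $q_i$ (rather than in an intermediate control place $q_{i1}$), the simulation is in the stable configuration just before instruction $i$ is executed, which is precisely the sense in which ``instruction $i$ is being simulated''. The lemma then follows immediately from this invariant, since having a token in $q_i$ and only in $q_i$ among $\{q_0,\ldots,q_n\}$ is exactly what we want.

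The induction is carried out as follows. For the base case, the initial marking places a single token in $q_0$ and no tokens in any other control place, so the invariant holds. For the inductive step, I would do a case analysis on the next transition fired. If it is the increment transition $t_0$ for an instruction $\text{INC}(r,j)$, then its pre-arc forces the unique control token to be in $q_i$; firing $t_0$ removes the token from $q_i$ and adds one to $q_j$, and does not touch any other control place, so the invariant is preserved. If it is a transition from the decrement gadget of Figure~\ref{fig:decrement} for $\text{JZDEC}(r,j,l)$, I would walk through the four transitions $t_1$, $t_2$, $t_3$, $t_{11}$: $t_1$ moves the token from $q_i$ to $q_{i1}$ (plus some book-keeping in $S_{r1}$); $t_2$ consumes the token in $q_{i1}$ and produces one in $q_j$; $t_{11}$ followed by $t_3$ together consume the token in $q_{i1}$ and produce one in $q_l$. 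In every case exactly one token of the control set $Q$ is destroyed and exactly one is created, so the total is preserved at $1$.

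The key structural observation that makes the case analysis go through is that each gadget has \emph{local} intermediate control places (the $q_{i1}$ are indexed by $i$), and the only transitions having a pre-arc from $q_i$ or $q_{i1}$ are the transitions of the gadget for instruction $i$; hence there is no ``crosstalk'' between gadgets. I would verify this by direct inspection of the construction in Figures~\ref{fig:incr} and~\ref{fig:decrement}: every transition in the net has exactly one pre-arc from $Q$ (either from some $q_i$ or from some $q_{i1}$) and exactly one post-arc into $Q$ (into either a $q_j$, $q_l$, or $q_{i1}$).

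The main obstacle, such as it is, is being careful about the decrement gadget, because partial firings leave the token in the intermediate place $q_{i1}$. One must therefore distinguish ``stable'' markings (token in some $q_i$, corresponding to a fully completed simulation step) from ``transient'' markings (token in some $q_{i1}$, in the middle of executing one JZDEC step). The cleanest way to phrase the lemma is in terms of stable markings, and the invariant that $|M \cap Q| = 1$ together with the gadget-locality property makes the argument essentially a one-line verification per case.
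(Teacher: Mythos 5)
Your proposal is correct and follows essentially the same route as the paper: an induction over the run with a case analysis on INC versus JZDEC, showing that exactly one control place ($q_i$ or the intermediate $q_{i1}$) carries a token at any time; your explicit place-invariant $\sum_{p\in Q}M(p)=1$ is just a slightly more fine-grained phrasing of the paper's induction on the number of instructions simulated. One small correction: your blanket claim that \emph{every} transition has exactly one pre-arc from $Q$ and one post-arc into $Q$ fails for $t_{11}$, which touches only $S_{r1}$, $S_{r2}$ and $C_r$ and hence has no arcs to or from $Q$ at all --- but this preserves the invariant trivially, and your own case analysis (``$t_{11}$ followed by $t_3$'') already treats it correctly.
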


\begin{proof}
The proof of this property of the run in the Petri net is by induction on the number of instructions simulated so far.

\textbf{Base Case} We start the Petri net with one token in $q_{0}$ and no tokens in all other places. So, when the first instruction is simulated, we have a single token in $q_{0}$ and no tokens in any other $q_{i}$

\textbf{Induction Hypothesis} Assume that when we are simulating the $k$th instruction of two counter machine which corresponds to instruction $i$ and hence we have a token in $q_i$. 

\textbf{Inductive Proof}  Now, consider the instruction number $k+1$. The instruction can be of two types.
  \begin{enumerate}
  \item INC instruction - From our construction, only one transition can fire and when it fires, $q_i$ becomes empty and $q_{j}$ gets the token.
  \item JZDEC instruction - From our construction, when $q_i$ gets a token, the place $S_{r1}$ gets a token. Now, two transitions can fire. Both of them use $S_{r1}$. Hence, only one of them can fire. As both of them use $q_{i1}$, it gets empty after the next transition. Hence, all other places in the net stay empty. 
  \end{enumerate}
\end{proof}

\begin{lemma} In any reachable marking $M$, $M(S) \geq M(C_{1})+M(C_{2})$ and $M(S) = M(C_{1})+M(C_{2})$ iff there are no incorrect transitions fired in the run to $M$\end{lemma}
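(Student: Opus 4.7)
The plan is to introduce the surplus function $D(M) := M(S) - M(C_1) - M(C_2)$ and prove by induction on the length of the firing sequence reaching $M$ that $D(M) \geq 0$ always, with equality if and only if no incorrect transition has been fired along that run. Both conjuncts of the lemma follow immediately from this stronger statement.

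For the base case, the initial marking of $P$ puts no tokens in $S$, $C_1$, or $C_2$, so $D(M_0) = 0$ and vacuously no incorrect transition has fired. For the inductive step I would assume the claim holds for the marking $M$ reached by a run $\rho$, fire one additional transition $t$ producing $M'$, and compute $\Delta D := D(M') - D(M)$ by case analysis on $t$. The cases fall out directly from the construction:

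\begin{itemize}
  \item The increment transition $t_0$ of Figure~\ref{fig:incr} adds one token each to $S$ and to $C_r$, giving $\Delta D = 0$.
  \item The bookkeeping transitions $t_1$ and $t_3$ of Figure~\ref{fig:decrement} only touch the control places $q_i,q_{i1},S_{r1},S_{r2},q_j,q_l$, so $\Delta D = 0$.
  \item The correct decrement $t_2$ consumes one token from $S$ and one from $C_r$, yielding $\Delta D = 0$; note that $t_2$ requires $M(S) \geq 1$ and $M(C_r) \geq 1$, so the subtraction is legal.
  \item The zero-guess transition $t_{11}$ carries a reset arc on $C_r$ and no arc incident to $S$ or the other counter place, so $\Delta D = M(C_r)$. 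This is $0$ precisely when the guess is correct ($M(C_r)=0$) and strictly positive exactly when the guess is incorrect ($M(C_r)>0$).
\end{itemize}

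Thus $D$ is monotone non-decreasing along every run and strictly increases exactly at an incorrect firing of $t_{11}$. An easy induction on $|\rho|$ then yields $D(M) \geq 0$ for every reachable $M$, and $D(M)=0$ iff no incorrect transition has been fired, which is exactly the lemma.

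The proof is essentially routine accounting; the only subtlety to be pinned down carefully is the formal definition of \emph{incorrect transition}, namely a firing of $t_{11}$ at a marking with $M(C_r)>0$. Once this is fixed, the case analysis is mechanical because the gadgets of Figures~\ref{fig:incr} and~\ref{fig:decrement} were designed precisely so that every transition except a misfire of $t_{11}$ preserves the token-sum invariant between $S$ and $C_1 + C_2$.
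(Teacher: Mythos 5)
Your proposal is correct and follows essentially the same route as the paper's proof: the paper also introduces the quantity $f = M(S)-M(C_1)-M(C_2)$, observes that it is preserved by every correct transition and strictly increases exactly when the reset on $C_r$ fires with $C_r$ nonempty, and concludes by the same monotonicity/induction argument. Your version is if anything slightly more precise in attributing the increase to the firing of $t_{11}$ at a marking with $M(C_r)>0$ (where the reset actually occurs), whereas the paper informally labels the subsequent $t_3$ firing as the incorrect transition.
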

\begin{proof}
Consider f = $M(S)-M(C_{1})-M(C_{2})$. If the transition is not incorrect, then $f$ remains the same. Because, if $S$ is incremented, $C_{1}$ or $C_{2}$ is incremented. Same is the case for decrement operation. But, if the transition is incorrect, $S$ is unchanged while $C_{1}$ or $C_{2}$ decreases. So, $f$ increases. Initially, $S$, $C_{1}$ and $C_{2}$ are empty, $f$ is zero. We prove that $f \geq 0$ 
\begin{itemize}
\item \textbf{Case 1a.} $C_{r}$ is non empty and the token correctly goes to place $q_{j}$ 
In this case, both $S$ and $C_{r}$ are decremented by 1. So, $f$ remains the same. 
\item \textbf{Case 1b.} $C_{r}$ is non empty and the token incorrectly goes to place $q_{l}$ In this case, $C_{r}$ is emptied while $S$ remains the same. Thus, $f$ increases. 
\item \textbf{Case 2a.}$C_{r}$ is empty and the token goes to place $q_{j}$ This case is not possible as the transition on the left (Guessing $C_{r}$ is non empty) can not fire with $C_{r}$ being empty.
\item \textbf{Case 2b.} $C_{r}$ is empty and the token correctly goes to place $q_{l}$ In this case, first $S_{r1}$ gets the token, the reset arc transition fires. $S_{r2}$ gets the token. Now, the transition that uses it fires, thus emptying $q_{i1}$. The token is present only in place $q_{l}$. $f$ doesn't change as the number of tokens in $S$ and $C_{r}$ remain the same. 
\end{itemize}
Notice that $f$ remains same in all the cases except in 1b where it's increased. This proves the lemma.
\end{proof}

\subsubsection{Checking for Incorrect Transitions}
 To check if $M(S)=M(C_1)+M(C_2)$, we do three steps. As a first step, We divide $q_{n}$ into three places $q_{n1}$, $q_{n2}$ and $q_{n3}$. 
\begin{center}
\includegraphics[width=5cm, height=4cm]{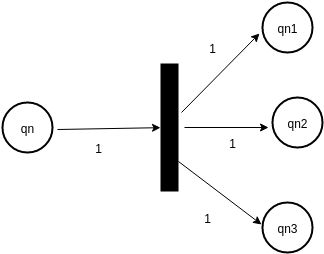} 
\end{center}
Step - 1 :
We transfer all the tokens from $C_{1}$ to $C_{2}$.  
\begin{center}
\includegraphics[width=5cm, height=4cm]{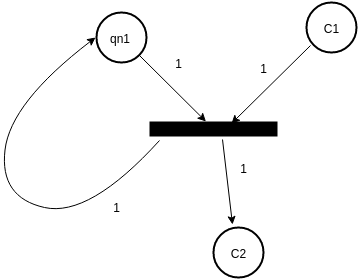} 
\end{center}
Step - 2 :
Remove $M(C_{2})$ number of tokens from $S$. 
\begin{center}
\includegraphics[width=5cm, height=4cm]{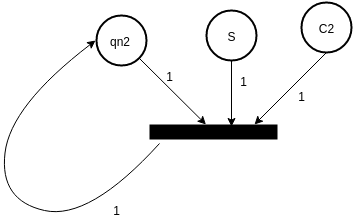} 
\end{center}
Step - 3 :
Now, all that remains to be checked is whether $S$ is empty or not. 
This can be done with an inhibitor arc.
\begin{center}
\includegraphics[width=5cm, height=4cm]{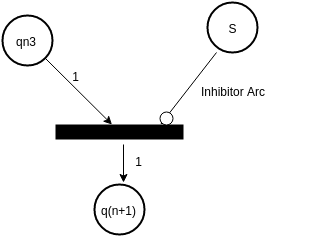} 
\end{center}

\subsubsection{Transfer Arcs and Place reachability}
The above proof works for 2 transfer arcs + 1 inhibitor arcs as well. We replace reset arc with transfer arc from $C_r$ to a dump place. Hence, Coverability in a Petri net with two transfer arcs and one inhibitor arc is undecidable. 
\begin{center}
\includegraphics[width=12cm, height=12cm]{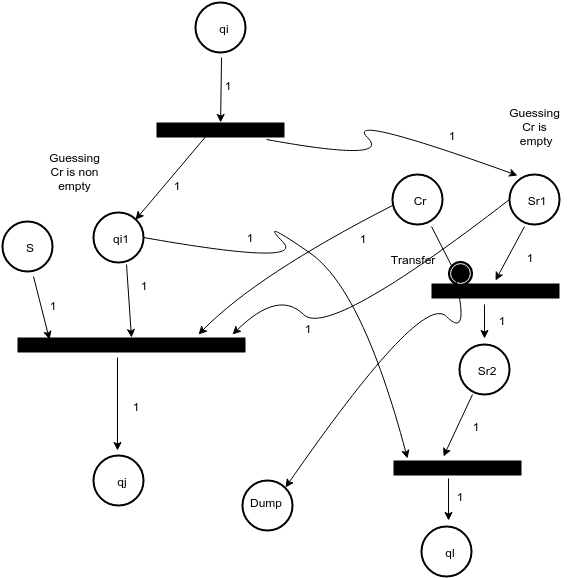} 
\end{center}
Decidability of Coverability in Petri  nets with 1 reset arc and 1 inhibitor arc is still open. Similarly, deadlockfreeness in Petri  nets with 1 reset arc and 1 inhibitor arc is open. In summary, when we add reset arcs without hierarchy to \Za{I}, termination remains decidable, while reachability, deadlockfreeness and coverability become undecidable. 

% A6
\subsection{Reachability in \Za{T}}\label{transfer_undec-appendix}
The construction discussed in \ref{transfer_undec} requires that there is no pre-arc from $p_1$ to $t_2$. Even if this is not the case, we can construct an equivalent petri net, with this property as shown in the following construction.

\begin{changemargin}{-1.5cm}{-1.5cm}
\begin{minipage}{.5\textwidth}
\begin{tikzpicture}[scale=0.7]
\draw (1,3) circle (0.3cm) node{$p_2$};
\draw (1,1) circle (0.3cm) node{$p_1$};
\draw (5,3) circle (0.3cm) node{$p_4$};
\draw (5,1) circle (0.3cm) node{$p_5$};
\draw[fill=black] (2.9,1.65) node[below]{$t_2$} rectangle (3.1,2.35);
\draw[-latex,thick] (1.3,1) -- (2.9,2);
\draw[-latex,thick] (3.1,2) -- (4.7,1);
\draw[-latex,thick] (1.3,3) .. controls (2.3,2) and (3.7,2) .. node[above]{tf} (4.7,3);
\draw[-,thick] (6,1) -- (6,3);
\draw (7,1) circle (0.3cm) node{$p_1$};
\draw[fill=black] (8.9,1.65) node[below]{$t_{2a}$} rectangle (9.1,2.35);
\draw (11,1) circle (0.3cm) node{$p_\#$};
\draw (7,3) circle (0.3cm) node{$p_{\$}$};
\draw (11,3) circle (0.3cm) node{$p_2$};
\draw (15,3) circle (0.3cm) node{$p_4$};
\draw (15,1) circle (0.3cm) node{$p_5$};
\draw[-latex,thick] (7.3,1) -- (8.9,2);
\draw[-latex,thick] (7.3,3) -- (8.9,2);
\draw[-latex,thick] (9.1,2) -- (10.7,1);
\draw[fill=black] (12.9,1.65) node[below]{$t_{2b}$} rectangle (13.1,2.35);
\draw[-latex,thick] (13.1,2) .. controls (23,4) and (8,4) .. (7.3,3);
\draw[-latex,thick] (11.3,1) -- (12.9,2);
\draw[-latex,thick] (13.1,2) -- (14.7,1);
\draw[-latex,thick] (11.3,3) .. controls (12.3,2) and (13.7,2) .. node[above]{tf} (14.7,3);
\end{tikzpicture}
\end{minipage}
\end{changemargin}\\ \\
Additionally, the place $p_{\$}$ has a pre-arc and post-arc to every transition other than $t_2$.
The above two nets are equivalent, and the right net has no newly introduced deadlocked reachable marking too.
\begin{lemma} \label{4.24}
For any reachable marking $M$, we have $M(p_*)=0\implies M(p_1)=0$. Similarly, $M(p'_*)=0\implies M(p'_1)=0$
\end{lemma}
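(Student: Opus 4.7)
The plan is to prove both implications simultaneously by induction on the length of a run $\rho : M'_0 \Arrow{\rho} M$, leaning on the structural symmetry of the construction between $p_1/p_*$ and $p'_1/p'_*$. Before the main induction I would first establish an auxiliary token-count invariant: $M(p_*) + M(p'_*) = 1$ for every reachable $M$. This follows directly by a short induction, since $M'_0(p_*) + M'_0(p'_*) = 1$, every transition in the upper dotted box other than $t_2$ has matching pre- and post-arcs on $p_*$ and does not touch $p'_*$, symmetrically for the lower box, and $t_2$ (resp.\ $t'_2$) moves exactly one token from $p_*$ to $p'_*$ (resp.\ the other way). This invariant lets me pass freely between ``$p_*=0$'' and ``$p'_*=1$'' in the case analysis.

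The base case of the main induction is immediate from the definition of $M'_0$. For the inductive step, consider $M'_0 \Arrow{\mu} M_1 \Arrow{t} M_2$ with $M_1$ already satisfying both implications, and split on $t$ into four categories. \textbf{(i)} If $t$ lies in the upper dotted box and $t \neq t_2$, then $t$ is firable only when $M_1(p_*) \geq 1$ and leaves $M_2(p_*) = M_1(p_*) \geq 1$, so the first implication is vacuous for $M_2$; by the auxiliary invariant $M_1(p'_*) = 0$, so by the inductive hypothesis $M_1(p'_1) = 0$, and since by inspection no such upper-box transition has a post-arc or an inbound transfer arc into $p'_1$ (the only interactions with $p'_1$ are outbound drains such as $p'_1 \to p_3$ via $t_1$), we conclude $M_2(p'_1) = 0$. \textbf{(ii)} If $t$ lies in the lower box and $t \neq t'_2$, the argument is symmetric. \textbf{(iii)} If $t = t_2$, then $M_1(p_*) = 1$ and $M_1(p'_*) = 0$, so by IH $M_1(p'_1) = 0$; firing $t_2$ sets $M_2(p_*) = 0$, $M_2(p'_*) = 1$, and the paired transfer arcs $p_1 \to p'_1$ and $p'_1 \to p_1$ fire simultaneously, moving $M_1(p_1)$ tokens to $p'_1$ and $0$ tokens back, so $M_2(p_1) = 0$ (first implication satisfied) and the second is vacuous. \textbf{(iv)} The case $t = t'_2$ is symmetric.

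The main obstacle is case \textbf{(i)} (and its mirror): it requires carefully verifying that every transition in the upper dotted box either ignores $p'_1$ entirely or only removes tokens from it via a transfer arc. This in turn forces one to enumerate every added arc in the construction, in particular the six extra transfer arcs listed after the main figure, and confirm that none of them ever deposits tokens into $p'_1$ when triggered by an upper-box transition. Once this enumeration is done, the inductive step is mechanical and the lemma follows. The symmetry of the construction ensures that the same enumeration, with primes swapped, takes care of the lower-box half of the invariant.
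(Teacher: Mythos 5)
Your proof is correct and follows the same strategy the paper uses, namely induction on the length of the run from $M'_0$; the paper merely asserts this induction without detail, whereas you supply the necessary auxiliary invariant $M(p_*)+M(p'_*)=1$ and the case analysis on the fired transition that makes the induction actually close. No gaps.
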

\begin{proof}
This can be shown by induction over the length of run $\rho$ to $M$ from $M'_0$.
\end{proof}
\begin{lemma} \label{4.25}
If marking $M$ is reachable from $M_0$ in the original net, then $A_{M}$ or $B_{M}$ is reachable from $M'_0$.
\end{lemma}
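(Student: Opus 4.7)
The plan is to prove the claim by induction on the length of the run $\rho$ from $M_0$ to $M$ in the original net, showing that in the constructed net one can reach either $A_M$ or $B_M$. For the base case $|\rho| = 0$, we have $M = M_0$, and by definition the initial marking of the constructed net equals $A_{M_0}$, so we are done. For the inductive step, write $M_0 \Arrow{\rho'} M_1 \Arrow{t} M$ and invoke the induction hypothesis to obtain that either $A_{M_1}$ or $B_{M_1}$ is reachable in the constructed net; I then case-split on the transition $t$ and on which of the two markings is reachable.

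For every transition $t \neq t_2$, we fire the corresponding transition in the constructed net, namely $t$ itself from $A_{M_1}$ or its primed copy $t'$ from $B_{M_1}$. The guard arcs from $p_*$ (resp.\ $p'_*$) to every transition of the upper (resp.\ lower) dotted box enforce that only the ``active'' copy of the sub-net is firable, so the choice between $t$ and $t'$ is forced by the inductive state. The extra transfer arcs $p'_1 \to p_3$ through $t_1$ and $p_1 \to p_3$ through $t'_1$ contribute nothing because the donor place is empty in the corresponding state (this is where the form $A_{M_1}$ or $B_{M_1}$ is crucial). A direct check, place by place, using the firing rule and the definitions of $A_{M_1}$ and $B_{M_1}$, shows that the resulting marking is $A_M$ or $B_M$ respectively.

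The main case, and the principal obstacle, is the transition $t_2$, which in the original net fires the transfer from $p_2$ to $p_4$ (and, by our assumption, has no pre-arc from $p_1$). In the constructed net, firing $t_2$ from $A_{M_1}$ does three things simultaneously: it swaps the guard token from $p_*$ to $p'_*$; it transfers all tokens of $p_1$ to $p'_1$ via the extra transfer arc, while the symmetric arc from $p'_1$ to $p_1$ is vacuous since $A_{M_1}(p'_1) = 0$; and it fires the transfer from $p_2$ to $p_4$ exactly as in the original net. The resulting marking agrees place by place with $B_M$. Symmetrically, firing $t'_2$ from $B_{M_1}$ produces $A_M$. The non-interference of the two opposing transfer arcs between $p_1$ and $p'_1$ at $t_2$ relies on the invariant, maintained by the induction itself, that every marking reached in the constructed net has all tokens in exactly one of $p_1$ or $p'_1$; this invariant is what makes the hierarchy-respecting simulation of a non-hierarchical transfer arc work.
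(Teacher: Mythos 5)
Your proposal is correct and follows essentially the same route as the paper's proof: induction on the length of the run, a case split on $t=t_2$ versus $t\neq t_2$ (and on whether $A_{M_1}$ or $B_{M_1}$ is the reachable representative), and a place-by-place verification that the fired transition yields $A_M$ or $B_M$, using the fact that the donor place of the vacuous transfer arc is empty. The invariant you invoke about tokens residing in exactly one of $p_1,p'_1$ is the paper's Lemma~\ref{4.24}, and your observation that it is maintained by the induction itself is consistent with the paper's argument.
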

\begin{proof}
Let $\rho$ be a run from $M_0$ to $M$ in the original net.\\
We shall prove by induction over the length of $\rho$.
If $|\rho|=0$, then $\rho$ is a run in the constructed net from $M'_0$ to $A_M$.\\
Assume that the statement is true for all $|\rho|=k$.\\
Consider any run $\rho$ of length $k+1$. Let $\rho=\mu t$ where $\mu$ is a run of length $k$ and $t\in T$. Also, let $M_0\Arrow{\mu}M_1\Arrow{t}M$.\\
By induction hypothesis, either $A_{M_1}$ or $B_{M_1}$ is reachable from $M'_0$.\\
Without loss of generality, assume $A_{M_1}$ is reachable.\\
If $t\neq t_2$, then, firstly, transition $t$ is firable at $A_{M_1}$, by definition of $A_{M_1}$, since the pre-places of $t$ in have equal number of tokens as in the marking $M_1$, where it is firable.\\
Let $A_{M_1}\Arrow{t}M'$. Now, notice that $M'(p_*)=1$, $M'(p'_1)=M'(p'_*)=0$ and for all other places, we have $M'(p)=A_{M_1}(p)-F(p,t)+F(t,p)=M_1(p)-F(p,t)+F(t,p)=M(p)$. Hence, $M'=A_M$.\\
If $t=t_2$, then, again, transition $t$ is firable at $A_{M_1}$, by definition of $A_{M_1}$.\\
Let $A_{M_1}\Arrow{t}M'$. Now, notice that $M'(p'_*)=1$, $M'(p_1)=M'(p_*)=0$ and for all other places except $p'_1$, we have $M'(p)=A_{M_1}(p)-F(p,t)+F(t,p)=M_1(p)-F(p,t)+F(t,p)=M(p)$. Also, $M'(p'_1)=A_{M_1}(p_1)=M_1(p_1)=M(p_1)$. Hence, $M'=B_M$.\\
Hence proved.
\end{proof}
\begin{lemma} \label{4.26}
If marking $A_M$ or $B_M$ is reachable from $M'_0$ in the constructed net, then marking $M$ is reachable from $M_0$
\end{lemma}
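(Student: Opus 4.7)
\noindent\emph{Proof proposal for Lemma~\ref{4.26}:}
My plan is to mirror the forward direction (Lemma~\ref{4.25}) by induction on the length of a run $\rho'$ in the constructed net reaching $A_M$ or $B_M$, producing a corresponding run $\rho$ in the original net reaching $M$. The essential bookkeeping comes from a structural invariant: every marking $M''$ reachable from $M'_0$ has the form $A_{M_1}$ or $B_{M_1}$ for some marking $M_1$ of the original net, with $p_*$ and $p'_*$ acting as a mutually exclusive ``mode flag.'' Combined with Lemma~\ref{4.24} and Lemma~\ref{sum_p_p_t}-style token conservation (easily checked because every transition either preserves tokens in $\{p_*,p'_*\}$ individually or swaps them), this invariant forces every intermediate marking to be of the form $A_{M_1}$ or $B_{M_1}$.

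The base case $|\rho'|=0$ is immediate since $M'_0 = A_{M_0}$ and the empty run works. For the inductive step, write $\rho' = \mu' t'$ with $\mu'$ of length $k$, and suppose $M'_0 \Arrow{{\mu'}} M'' \Arrow{t'} X$ where $X \in \{A_M, B_M\}$. By the invariant, $M'' = A_{M_1}$ or $M'' = B_{M_1}$ for some $M_1$; by the induction hypothesis, there is a run $\rho$ in the original net with $M_0 \Arrow{\rho} M_1$. Now I case-split on $t'$. If $t' \in \{t_i : i\neq 2\}$ then $M''$ must be $A_{M_1}$ (since these transitions require a token on $p_*$), and the six extra transfer arcs ensure that firing $t'$ from $A_{M_1}$ behaves on $\{p_1\}\cup(P\setminus\{p_1,p'_1,p_*,p'_*\})$ exactly as $t_i$ behaves from $M_1$ in the original net, leaving the mode flag unchanged; hence $X = A_M$ with $M_1 \Arrow{t_i} M$ in the original net. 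The symmetric case for $t' \in \{t'_i : i\neq 2\}$ yields $X = B_M$ with $M_1 \Arrow{t_i} M$. For $t' = t_2$, the marking $M''$ must be $A_{M_1}$, and firing $t_2$ triggers the transfer arc $p_1 \to p'_1$ and moves the flag from $p_*$ to $p'_*$, producing $B_M$ with $M_1 \Arrow{t_2} M$; and symmetrically for $t' = t'_2$, which produces $A_M$ with $M_1 \Arrow{t_2} M$. In each case I append the corresponding original-net transition to $\rho$.

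The main obstacle is the careful case analysis for the swap transitions $t_2$ and $t'_2$, where one must verify that all six of the unshown transfer arcs conspire to exactly realize the single transfer arc $p_1 \to p_3$ (or $p_2 \to p_4$) of the original net after accounting for the current representative place. Specifically, I need to check that starting from $A_{M_1}$ or $B_{M_1}$, the transitions $t_1/t'_1$ never move tokens into $p'_1/p_1$ respectively (handled by the transfer arcs from the ``empty'' copy of $p_1$ to $p_3$), and that $t_2/t'_2$ both empty whichever of $\{p_1, p'_1\}$ currently represents $p_1$ and write its contents into the other copy (handled by the transfer arcs $p_1 \to p'_1$ via $t_2, t'_2$ and $p'_1 \to p_1$ via $t_2, t'_2$, combined with our assumption that the original net has no arc from $p_1$ to $t_2$). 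Once this transition-by-transition correspondence is established, combining Lemma~\ref{4.25} and this lemma yields the claim, and Theorem~\ref{thm:htpn-reach-dlf} follows since reachability and deadlock-freeness of marking $M$ in $N$ is reduced to reachability of $A_M$ or coverability/deadlockedness of the pair $\{A_M, B_M\}$ in the \Za{T} net (extended by the standard folklore argument of Theorem~\ref{thm:folklore} for deadlock-freeness).
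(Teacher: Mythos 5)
Your proposal is correct and follows essentially the same route as the paper: induction on the length of the run in the constructed net, with a case split on the last transition fired and the observation that the $p_*/p'_*$ flag forces every reachable marking to have the form $A_{M_1}$ or $B_{M_1}$ (the paper derives this backward from the target marking rather than stating it as a forward invariant, but the bookkeeping is the same). The remaining work you flag—checking that the six extra transfer arcs make each $t_i/t'_i$ act on the current representative of $p_1$ exactly as $t_i$ acts in the original net—is precisely the verification the paper carries out.
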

\begin{proof}
We shall prove this by induction over length of the run $\rho$ from $M'_0$ to $A_M$ or $B_M$.\\
If $|\rho|=0$, then $A_M=M'_0$ is reached from $M'_0$ by $\rho$. Trivially, $M$ is reachable from $M_0$. \\
Assume that the statement is true for all runs $\mu$ of length equal to $k$.\\
Consider a run $\rho$ of length $k+1$. Let $\rho=\mu t$ where $t\in T'$ and $\mu$ is a run of length $k$. Also, without loss of generality, let $M'_0\Arrow{\mu}M'_1\Arrow{t}A_M$.\\
Then, $t$ can be either $t'_2$ or $t_i$, where $i\neq 2$.\\
If $t=t'_2$, then $M'_1(p_*)=M'_1(p_1)=0$, $M'_1(p'_*)=1$ and $M'_1(p'_1)=A_M(p_1)=M(p_1)$. Hence, $M'_1=B_{M_1}$ for some marking $M_1$.\\
By induction hypothesis, the marking $M_1$ is reachable from $M_0$ in the original net.\\
Consider firing $t_2$ at $M_1$. Clearly, transition $t_2$ is firable at $M_1$ as $t'_2$ is firable at $B_{M_1}$. Let $M_1\Arrow{t_2}M^*$. Then, $\forall p\neq p_1 M^*(p)=A_M(p)=M(p)$. Also, $M^*(p_1)=M_1(p_1)=B_{M_1}(p'_1)=A_M(p_1)=M(p_1)$. Hence, $M^*=M$. Hence, marking $M$ is reachable from $M_0$.
\end{proof}
This reduction proves that reachability is undecidable in general \Za{T}. However, as seen in Section 4.1, if the transfer arcs satisfy a structural condition, reachability is decidable.\\
Also, Deadlockfreeness in transfer petri nets (which is undecidable) is reducible to deadlockfreeness in \Za{T} by the same construction. Thus, deadlockfreeness in \Za{T} is also undecidable.\\
NOTE: The above construction also shows that coverability in \Za{IT} is undecidable, since Coverability in nets with 2 transfer arcs and 1 inhibitor arc is undecidable as shown in Section \ref{Cover}

% A7
\subsection{Hardness of Termination in \Za{IT}}
\label{appendix-62}
We use the same notations as in subsection \ref{sec:skolem}. We prove a series of statements which are finally used to prove lemma \ref{nonterminating}.

\begin{lemma} \label{lemma:nont-run}
There cannot be a non-terminating run in the forward phase. 
\end{lemma}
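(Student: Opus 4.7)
The plan is to exhibit a structural reason why only finitely many transitions of the forward phase can ever fire, regardless of the initial marking. The forward phase consists exactly of the transitions $t_1,\dots,t_n$ and $t_{ij}$ for $1\le i,j\le n$; the only other transition of $N$ is $t_R$, which belongs to the backward phase and is excluded from consideration here. The key observation is that the flow in the forward phase is strictly feed-forward: tokens travel $u_i \to u_{ij} \to u_j'$, and no forward-phase transition ever deposits a token back into any $u_i$ or any $u_{ij}$.

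First I would argue that each $t_i$ can fire only finitely many times. Since $t_i$ has a pre-arc of weight $1$ from $u_i$ and no forward-phase transition has a post-arc to $u_i$, the number of firings of $t_i$ starting from a marking $M$ is bounded by $M(u_i)$. In particular, summing over $i$ bounds the total number of $t_i$-firings by $\sum_{i} M(u_i)$.

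Next I would bound the firings of each $t_{ij}$. The only forward-phase transition with a post-arc to $u_{ij}$ is $t_i$, which deposits exactly $|M_{ji}|$ tokens per firing. Since $t_{ij}$ has a pre-arc of weight $1$ from $u_{ij}$, and $u_{ij}$ receives at most $M(u_i)\cdot|M_{ji}|$ tokens throughout the forward phase (in addition to its initial content $M(u_{ij})$), the total number of firings of $t_{ij}$ is bounded by $M(u_{ij}) + M(u_i)\cdot|M_{ji}|$. Summing these bounds gives a finite upper bound on the total length of any run consisting solely of forward-phase transitions.

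I do not expect any serious obstacle: the argument is a straightforward token-counting on a DAG-like sub-net with no cycles among the forward-phase places. The only subtlety to be careful about is that although $t_{ij}$ may additionally interact with $u_j'$ (a pre-arc when $M_{ji}<0$, a post-arc when $M_{ji}\ge 0$), these interactions neither enable nor create additional firings, since the bottleneck for firing $t_{ij}$ is the token availability in $u_{ij}$; the role of $u_j'$ is purely to accumulate the computed coordinate. Hence the forward phase always terminates.
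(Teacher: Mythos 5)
Your proof is correct and rests on the same observation as the paper's: the forward-phase sub-net is strictly feed-forward (tokens only move from the $u_i$'s to the $u_{ij}$'s to the $u_j'$'s and are never returned), so the number of firings of each forward-phase transition is finitely bounded. The paper states this via a partition of the places into three ordered groups with tokens only flowing toward higher groups, whereas you make the same idea explicit with concrete counting bounds; the content is identical.
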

\begin{proof}
All the places in the net have been divided among three groups, namely $G_1$,$G_2$ and $G_3$.
We notice that, there is no transition, with a pre-place in a larger index group and post-place in a smaller or same index group. Thus, the tokens always move towards the larger index groups, and hence there can never be a non-terminating run.
\end{proof}
In any run of $N$, we call the set of transitions between $(i-1)$st firing of $t_R$ and $i$th firing of $t_R$ as $i$th forward phase, $i>1$. The first forward phase is defined as set of transitions before first firing of $t_R$. As a result of the above lemma, every run starting in $i$th forward phase either terminates or goes to $(i+1)$st forward phase.

We call place $u_{ji}$ (and transition $t_{ji}$) as incrementing, if $M_{ij}\geq 0$[i.e. post-arc from $t_{ji}$ to $u'_i$] and decrementing if $M_{ij}<0$[i.e. pre-arc from $u'_i$ to $t_{ji}$]. 
\begin{lemma} \label{lemma:invariant}
Suppose that the number of tokens in places $\{u_1,..,u_n\}$ corresponds to vector $\bar u$ before starting of $k$th forward phase and the number of tokens in all other places except $G,G'$ is zero. Then, for any marking reachable in the $k$th forward phase, $\forall 1\leq i\leq n$, we have, $$u'_i-\sum_{j}^{u_{ji}\in Decrementing} u_{ji}+\sum_{j}^{u_{ji}\in Incrementing} u_{ji}+\sum_{j}M_{ij}u_j=(M\bar u)_i$$ 
\end{lemma}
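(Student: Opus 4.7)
The plan is to fix an index $i \in \{1,\ldots,n\}$ and prove the displayed equation as an invariant of the $k$th forward phase by induction on the number of transitions fired since its start.

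For the base case (zero transitions fired), the hypothesis of the lemma gives $u_j = \bar u_j$ for every $j$, together with $u'_i = 0$ and $u_{ji} = 0$ for every $j$. The left-hand side therefore reduces to $\sum_j M_{ij}\,\bar u_j$, which is exactly $(M\bar u)_i$.

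For the inductive step, I would verify that every transition fireable during the $k$th forward phase leaves the left-hand side invariant. The only candidates are $t_j$ ($1\le j\le n$) and $t_{j'i'}$ ($1\le j',i'\le n$); by Lemma~\ref{lemma:nont-run} and the construction, $t_R$ is blocked by the inhibitor arc from $G$ until the forward phase ends, and no other transition exists. Firing $t_j$ removes one token from $u_j$, which shifts $\sum_k M_{ik}u_k$ by $-M_{ij}$, and deposits $|M_{ij}|$ tokens into $u_{ji}$. When $u_{ji}$ is incrementing, $M_{ij}\ge 0$, so $|M_{ij}|=M_{ij}$ enters the $+\sum_{\text{inc}}$ term as $+M_{ij}$; when $u_{ji}$ is decrementing, $M_{ij}<0$ and $|M_{ij}|=-M_{ij}$ sits inside $-\sum_{\text{dec}}$, contributing $-(-M_{ij})=M_{ij}$. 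Either way, the change cancels $-M_{ij}$. Firing $t_{j'i'}$ with $i'\neq i$ touches only $u_{j'i'}$ and $u'_{i'}$, neither of which appears in the formula for $u'_i$, so nothing changes. Firing $t_{j'i}$ removes one token from $u_{j'i}$; in the incrementing case it also deposits one token in $u'_i$ (change $+1$ in $u'_i$, change $-1$ in $+\sum_{\text{inc}}$), and in the decrementing case it removes one from $u'_i$ (change $-1$ in $u'_i$, change $+1$ in $-\sum_{\text{dec}}$), so both subcases yield net change zero.

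Since the invariant holds at the start and is preserved by every fireable transition of the phase, it holds at every reachable marking within the phase. The main obstacle is purely bookkeeping: keeping straight the two sign conventions in play --- the arc weight $|M_{ij}|$ against the signed entry $M_{ij}$ --- while enumerating the incrementing/decrementing subcases, and ensuring that the explicit signs in the two sums precisely absorb them. Once the four subcases are written out as above, the cancellations are immediate and give the invariant for free.
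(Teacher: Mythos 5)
Your proposal is correct and follows essentially the same route as the paper: an induction over the transitions fired in the forward phase, with the base case reducing to $\sum_j M_{ij}\bar u_j=(M\bar u)_i$ and case analyses for $t_j$ and $t_{ji}$ split into the incrementing/decrementing subcases, where the sign conventions in the two sums absorb the $|M_{ij}|$ arc weights. Your additional observations --- that $t_R$ is inhibited throughout the phase and that transitions $t_{j'i'}$ with $i'\neq i$ leave the $i$th invariant untouched --- are slightly more explicit than the paper's write-up but do not change the argument.
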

\begin{proof} 
Clearly, the equation is satisfied in the initial marking.\\
Assume that the equation is satisfied in marking $M_1$, and $M_1$\ce{->[t]}$M_2$.\\
Case 1: t is $t_j$.\\
In this case, the number of tokens in $u_j$ decreases by 1, and the number of tokens in $u_{ji}$ increases by $|M_{ij}|$. This would then have two sub cases:
\begin{itemize}
    \item $M_{ij}\geq 0$: Then, $u_{ji}$ is an incrementing place. Hence, $\sum_{j}M_{ij}u_j$ decreases by $M_{ij}$, and $\sum_{j}^{u_{ji}\in Incrementing} u_{ji}$ increases by $M_{ij}$, while all other terms remain same. Thus the summation remains constant. Hence, $LHS(M_1)=LHS(M_2)$.
    \item $M_{ij}<0$: Then, $u_{ji}$ is a decrementing place. Hence, $\sum_{j}M_{ij}u_j$ decreases by $-M_{ij}$, and $\sum_{j}^{u_{ji}\in Decrementing} u_{ji}$ decreases by $-M_{ij}$, while all other terms remain same. Thus the summation(since the terms changing have different signs in LHS) remains constant. Hence, $LHS(M_1)=LHS(M_2)$.
\end{itemize}
Case 2: t is $t_{ji}$\\
We consider two cases again :
\begin{itemize}
    \item $u_{ji}$ is an incrementing place: Then, $\sum_{j}^{u_{ji}\in Incrementing} u_{ji}$ decreases by 1 and $u'_i$ increases by 1, with all other terms remaining constant. Hence, the summation remains constant. Thus, $LHS(M_1)=LHS(M_2)$.
    \item $u_{ji}$ is a decrementing place: Then, $\sum_{j}^{u_{ji}\in Decrementing} u_{ji}$ decreases by 1 and $u'_i$ decreases by 1, with all other terms remaining constant. Hence, the summation(since the terms changing have different signs in LHS) remains constant. Thus, $LHS(M_1)=LHS(M_2)$.
\end{itemize}
Thus, in all cases, LHS remains same for both $M_1$ and $M_2$. Thus, by induction on the length of path from initial marking to any reachable marking, the given invariant holds for all reachable markings in $N$.
\end{proof}
\begin{lemma}
\label{lemma:multiplication-run}
Suppose that the number of tokens in places $\{u_1,..,u_n\}$ of marking $M_1$ corresponds to vector $\bar u$ before starting of $k$th forward phase, the number of tokens in $G = \sum_{1 \leq i\leq n} (\sum_{1\leq j\leq n}|M_{ji}|)(\bar u)_{i}$ and the number of tokens in all other places is zero. If $M\bar u\geq 0$, then there exists marking $M_2$ reachable from $M_1$ such that $M_2$ is starting of $(k+1)$st forward phase, the number of tokens in places $\{u_1,..,u_n\}$ of marking $M_2$ corresponds to vector $M \bar u$, the number of tokens in $G$ is $\sum_{1 \leq i\leq n} (\sum_{1\leq j\leq n}|M_{ji}|)(Mv_0)_{i}$ and all other places have 0 tokens. 
\end{lemma}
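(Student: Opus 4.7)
\noindent\emph{Proof proposal for Lemma~\ref{lemma:multiplication-run}:}
The plan is to exhibit an explicit firing sequence from $M_1$ that carries out one complete iteration of the forward phase followed by one firing of $t_R$. Concretely, I would fire the transitions in three blocks: first, for each $1\le i\le n$, I fire $t_i$ exactly $\bar u_i$ times; next, I fire every $t_{ij}$ the appropriate number of times, but arranged so that all \emph{incrementing} $t_{ij}$ fire before any \emph{decrementing} $t_{ij}$ for the same column $i$; finally, I fire $t_R$ once. The bookkeeping on $G$ is the same as that on the sum $\sum_{j,i}u_{ji}$: each firing of $t_i$ puts $\sum_j|M_{ji}|\cdot\bar u_i$ tokens into $u_{ji}$'s, and each subsequent firing of a $t_{ij}$ consumes exactly one token from $G$. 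So after block two, $G$ is empty, which is precisely the condition under which the inhibitor arc allows $t_R$ to fire.

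For block one, firability of each $t_i$ is immediate: its only pre-place is $u_i$ and we have $\bar u_i$ tokens there. For block two, firability of an \emph{incrementing} $t_{ij}$ requires a token in $u_{ji}$ and in $G$; both are abundant after block one. Firability of a \emph{decrementing} $t_{ij}$ additionally demands a token in $u_j'$, which is where the assumption $M\bar u\ge\zvec$ gets used. Lemma~\ref{lemma:invariant} guarantees that at every reachable marking of the forward phase,
\[
u_i' \;=\; (M\bar u)_i \;+\!\!\sum_{j:\,u_{ji}\in\text{Decr}}\!\!u_{ji} \;-\!\!\sum_{j:\,u_{ji}\in\text{Incr}}\!\!u_{ji} \;-\;\sum_j M_{ij}\,u_j,
\]
so after firing block one and all incrementing $t_{ji}$'s (so every $u_j=0$ and every incrementing $u_{ji}=0$), we have $u_i' = (M\bar u)_i + \sum_{j:\,u_{ji}\in\text{Decr}} u_{ji}\ge(M\bar u)_i\ge 0$. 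Each subsequent decrement merely shifts one unit from $u_{ji}$ to the deficit of $u_i'$, and the final value is exactly $(M\bar u)_i\ge 0$, so $u_i'$ never becomes negative.

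The main obstacle is exactly this firability argument for the decrementing $t_{ij}$: without the non-negativity hypothesis $M\bar u\ge\zvec$, some $u_i'$ would be driven negative mid-iteration and the run would get stuck. Ordering the firings (incrementing before decrementing, per column $i$) is what lets the hypothesis $(M\bar u)_i\ge 0$ be used as a \emph{global} lower bound on $u_i'$ throughout block two, rather than an after-the-fact statement about the end marking only.

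After block two, the net marking places $(M\bar u)_i$ tokens in each $u_i'$, zero tokens in every $u_i$, every $u_{ji}$, and in $G$, and, by the analogous invariant for $G'$ (incrementing/decrementing $t_{ij}$ add/remove $\sum_k|M_{kj}|$ tokens from $G'$), exactly $\sum_i\bigl(\sum_k|M_{ki}|\bigr)(M\bar u)_i$ tokens in $G'$. Now the inhibitor arc from $G$ permits $t_R$ to fire, and the transfer arcs of $t_R$ move all tokens from each $u_i'$ to the corresponding $u_i$ and all tokens from $G'$ to $G$, producing exactly the marking $M_2$ required for the beginning of the $(k+1)$st forward phase. This completes the sketch.
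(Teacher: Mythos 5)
Your proposal is correct and follows essentially the same route as the paper's own proof: fire all $t_i$'s to exhaustion, then all incrementing $t_{ij}$'s, then all decrementing ones (using the invariant of Lemma~\ref{lemma:invariant} and $M\bar u\ge 0$ to guarantee firability), empty $G$, and fire $t_R$. You spell out the firability of the decrementing transitions in more detail than the paper does, but the argument is the same.
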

\begin{proof}
We obtain the marking $M_2$ from $M_1$ as follows - first, we run the transitions $t_j,1\leq j \leq n$. Next, we run the incrementing transitions among $t_{ij}$ followed by decrementing ones. As $M\bar u\geq 0$, from lemma \ref{lemma:invariant}, after the incrementing transitions are fired, we have enough transitions in $u_j', 1\leq j\leq n$ so as to finish the decrementing ones. For each firing of transition $t_{ij}$, number of tokens in $G$ is decremented by 1. So after all the incrementing and decrementing transitions $t_{ij}$ are fired, $G$ becomes empty, and hence we can fire $t_R$. The marking reached after firing $t_R$ is $M_2$. Also note that before $t_R$ is fired, from lemma \ref{lemma:invariant}, the number of tokens in $u_i'=(M\bar u)_i$. Thus, after $t_R$ is fired, number of tokens in $u_i'$ are transfered to $u_i$, which gives the fact that number of tokens in places $\{u_1,..,u_n\}$ of marking $M_2$ corresponds to vector $M \bar u$. From the construction of $G'$, before firing of $t_R$, number of tokens in $G' =\sum_{1 \leq i\leq n} (\sum_{1\leq j\leq n}|M_{ji}|)u_i'$ which are transfered to $G$ in $M_2$. Rest of the places have no tokens before firing of $t_R$ and continue to do so in $M_2$. 
\end{proof}
\begin{lemma}
\label{lemma:multiplication}
Suppose that the number of tokens in places $\{u_1,..,u_n\}$ corresponds to vector $\bar u$ before starting of $k$th forward phase, the number of tokens in $G = \sum_{1 \leq i\leq n} (\sum_{1\leq j\leq n}|M_{ji}|)(\bar u)_{i}$ and the number of tokens in all other places is zero. If $M\bar u\geq 0$, then at the end of $k$th forward phase, the number of tokens in places $\{u_1',..,u_n'\}$ corresponds to vector $M \bar u$. 
\end{lemma}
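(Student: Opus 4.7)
The plan is to derive Lemma \ref{lemma:multiplication} as a direct consequence of the conservation invariant of Lemma \ref{lemma:invariant}, after first pinning down the structure of any marking at the end of the $k$th forward phase. The $k$th forward phase terminates by definition with a firing of $t_R$, so I would look at the marking $M_{\mathrm{end}}$ reached just before $t_R$ fires and read off the values of $u_i'$ from the invariant, once all other terms have been shown to vanish.

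First I would observe that since $t_R$ is about to fire and there is an inhibitor arc from $G$ to $t_R$, we must have $M_{\mathrm{end}}(G) = 0$. Next, a token-counting argument shows that all the intermediate places are empty at $M_{\mathrm{end}}$ too. The place $G$ begins the phase with exactly $\sum_{i,j}|M_{ji}|\bar u_i$ tokens, and each firing of $t_{ij}$ consumes one token from $G$; hence exhausting $G$ forces the total number of $t_{ij}$ firings over the phase to equal $\sum_{i,j}|M_{ji}|\bar u_i$. Tokens arrive in $u_{ji}$ only via firings of $t_i$, at rate $|M_{ji}|$ per firing, and $t_i$ can fire at most $\bar u_i$ times since $u_i$ starts with $\bar u_i$ tokens and gets no incoming flow during the phase. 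Matching supply to consumption, each $t_i$ must have fired exactly $\bar u_i$ times and each $t_{ij}$ exactly $|M_{ji}|\bar u_i$ times, leaving $M_{\mathrm{end}}(u_i) = 0$ and $M_{\mathrm{end}}(u_{ji}) = 0$ for all $i,j$.

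Substituting $u_j = 0$ and $u_{ji} = 0$ into the invariant of Lemma \ref{lemma:invariant} collapses it to $u_i'(M_{\mathrm{end}}) = (M\bar u)_i$ for every $i$, which is precisely the desired conclusion. The hypothesis $M\bar u \geq 0$ is used indirectly via Lemma \ref{lemma:multiplication-run}, which ensures that the forward phase can actually be completed and hence that $M_{\mathrm{end}}$ is reachable; without this hypothesis a decrementing transition $t_{ij}$ could get blocked by an empty $u_j'$ before $G$ were exhausted.

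The main obstacle is making the token-counting step fully rigorous in the face of the non-determinism in firing order within the forward phase. I would address this by appealing to the order-independence of the invariant from Lemma \ref{lemma:invariant}, so that one needs only to track net totals of tokens produced and consumed across the phase rather than any specific interleaving; the counting then reduces to elementary bookkeeping.
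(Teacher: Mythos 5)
Your proof is correct and follows essentially the same route as the paper's: conclude $G=0$ from the inhibitor arc at the end of the phase, use the token count of $G$ to force every $t_{ij}$ to fire its full quota (hence $u_i$ and $u_{ij}$ all empty), and then read off $u_i'=(M\bar u)_i$ from the invariant of Lemma~\ref{lemma:invariant}; your version merely spells out the supply/consumption bookkeeping more explicitly. (Only a cosmetic index transposition: tokens arrive in $u_{ij}$ via $t_i$ at rate $|M_{ji}|$, not in $u_{ji}$ via $t_i$.)
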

\begin{proof}
In the marking at the end of $k$th forward phase, number of tokens in $G$ is 0. As the number of tokens in $G$ at the starting of $k$ th forward phase is equal to the number of times $t_{ij}$s can fire. Thus in order to empty $G$, all the $t_{ij}$ necessarily have to fire. Thus, the number of tokens in $u_i, 1\leq i\leq n$ and $u_{ij}, 1\leq i,j\leq n$ become 0. From lemma \ref{lemma:invariant}, the current statement follows.
\end{proof}
Now we prove the original lemma.  
\begin{lemma}
There exists a non-terminating run in $N$ iff $M^k v_0 \geq 0$ for all $k \in \mathbb{N}$. 
\end{lemma}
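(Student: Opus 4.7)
The plan is to prove both directions by induction on the number of completed forward phases, leveraging Lemma~\ref{lemma:nont-run}, the invariant of Lemma~\ref{lemma:invariant}, and its two consequences Lemma~\ref{lemma:multiplication-run} and Lemma~\ref{lemma:multiplication}.

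For the ``if'' direction I would assume $M^k v_0 \geq 0$ for every $k \in \mathbb{N}$ and construct an explicit non-terminating run. The initial marking places $(v_0)_i$ tokens in $u_i$ and $\sum_i (\sum_j |M_{ji}|)(v_0)_i$ tokens in $G$, with all other places empty, matching the hypothesis of Lemma~\ref{lemma:multiplication-run} with $\bar u = v_0$. Proceeding by induction on $k$, the hypothesis is that the marking at the start of the $(k+1)$-st forward phase holds $(M^k v_0)_i$ tokens in each $u_i$ and the matching quantity in $G$, with all other places empty. Since $M(M^k v_0) = M^{k+1} v_0 \geq 0$, Lemma~\ref{lemma:multiplication-run} supplies a sub-run from this marking to the start of the $(k+2)$-nd forward phase in precisely the required form. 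Concatenating these sub-runs yields the desired infinite run of $N$.

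For the ``only if'' direction I would take the contrapositive: if some $M^j v_0$ has a negative coordinate, then every run of $N$ is finite. By Lemma~\ref{lemma:nont-run} each forward phase contains only finitely many transitions, so a non-terminating run would have to fire $t_R$ infinitely often. I would then induct on the number of $t_R$ firings: immediately after the $k$-th firing of $t_R$, the marking must have $(M^k v_0)_i$ tokens in each $u_i$, because Lemma~\ref{lemma:multiplication} forces $u_i' = (M \cdot M^{k-1} v_0)_i = (M^k v_0)_i$ at the end of the previous phase, and $t_R$ transfers $u_i'$ into $u_i$ and $G'$ into $G$. Since token counts are non-negative, every completed phase entails $M^k v_0 \geq 0$, and an infinite run thus forces $M^k v_0 \geq 0$ for all $k$, as required.

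The main subtlety will be showing that a forward phase forced to violate $u_i' \geq 0$ truly prevents any infinite continuation of the run, rather than merely failing to fire $t_R$. The invariant of Lemma~\ref{lemma:invariant} implies that emptying $G$ (the only way to enable $t_R$, because of the inhibitor arc) would force $u_i' = (M \bar u)_i$ at that moment, which is impossible when some component of $M \bar u$ is negative. Combined with Lemma~\ref{lemma:nont-run}, which rules out any run staying inside a single phase forever, this means the net must deadlock inside that phase, completing the justification of the inductive step of the reverse direction. Together the two directions establish the claimed equivalence, and hence that positivity reduces to termination in \Za{IT}.
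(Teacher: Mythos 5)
Your proposal is correct and follows essentially the same route as the paper: the forward construction iterates Lemma~\ref{lemma:multiplication-run} phase by phase, and the converse uses Lemma~\ref{lemma:nont-run} to force infinitely many $t_R$ firings together with the invariant of Lemma~\ref{lemma:invariant} to show that once some $(M^k v_0)_i<0$ the place $G$ can never be emptied, so $t_R$ is blocked and the run must terminate. The only cosmetic difference is that you phrase the converse as an induction on $t_R$ firings via the contrapositive, whereas the paper picks the minimal bad index $k_0$ directly; the underlying argument is identical.
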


\begin{proof}
First, we prove the reverse direction. Suppose that $M^k v_0 \geq 0$ for all $k \in \mathbb{N}$. We apply lemma \ref{lemma:multiplication-run} to prove existence of a non-terminating run. Intially at the start of 1st forward phase, our initialisation of $N$ ensures that in the current marking $M_1$, $G$ has $\sum_{1 \leq i\leq n} (\sum_{1\leq j\leq n}|M_{ji}|)(v_0)_{i}$ tokens and number of tokens in places $\{u_1,..,u_n\}$ corresponds to vector $v_0$. Thus, we reach marking $M_2$ in which $G$ has $\sum_{1 \leq i\leq n} (\sum_{1\leq j\leq n}|M_{ji}|)(Mv_0)_{i}$ tokens and number of tokens in places $\{u_1,..,u_n\}$ corresponds to vector $Mv_0$. Now, we use the lemma again to reach $M_3$ and so on. 

To prove the forward direction, assume that indeed there is a nonterminating run in $N$. From lemma \ref{lemma:nont-run}, in the nonterminating run, $t_R$ has to be fired infinitely often. 
Let the smallest $k$ such that $M^k v_0 \ngeq 0$ is $k_0$. From lemma \ref{lemma:multiplication}, before start of $k_0$th forward phase, the number of tokens in places $\{u_1,..,u_n\}$ corresponds to vector $M^{k_0-1}v_0=\bar u$ and number of tokens in $G$ is $\sum_{1 \leq i\leq n} (\sum_{1\leq j\leq n}|M_{ji}|)(\bar u)_{i}$. As $M\bar u\ngeq 0$, there exists $i$ such that $(M\bar u)_i <0$. From lemma \ref{lemma:invariant}, in any marking reachable in the current forward phase, $\sum_{j}^{u_{ji}\in Decrementing} u_{ji}>0$. As we cannot reach a marking in the current forward phase in which all of $u_{ij}$ are 0, number of tokens in $G$ is nonzero in all the markings reachable in the current forward phase. Thus, $t_R$ cannot fire, contradicting the fact that $t_R$ should be fired after $k_0$th forward phase as well, in order to have infinite firings of $t_R$. 
\end{proof}

% A8- Summary

\subsection{Summary}
\label{sec:app-summary}
A comprehensive extension of table \ref{summary} is presented here. 

\begin{centering}
\begin{tabular}{|c|c|}
\hline \Za{I} & Refer to table \ref{summary} \\
\hline \Za{T} & Refer to table \ref{summary} \\
\hline \Za{R} & Decidable - subsumed by \Za{IR} \\

\hline \Za{IR} & Refer to table \ref{summary} \\
\hline \Za{IT} & Refer to table \ref{summary} \\
\hline \Za{RT} & \term,\cover : Decidable (\cite{finkel}). \reach,\dlf : Undecidable - subsumes \Za{T} \\

\hline \Zb{R}{I} & Refer to table \ref{summary} \\
\hline \Zb{R}{T} & \term,\cover : Decidable (\cite{finkel}). \reach,\dlf : Undecidable - subsumes \Za{T} \\
\hline \Zb{R}{R} & Equivalent to \Zc{R} \\

\hline \Zb{I}{I} & Undecidable : equivalent to \Zc{I} \\
\hline \Zb{I}{T} & Undecidable : subsumes \Zc{I} \\
\hline \Zb{I}{R} & Undecidable : subsumes \Zc{I} \\

\hline \Zb{T}{I} &  Refer to table \ref{summary}\\
\hline \Zb{T}{T} & Equivalent to \Zc{T}\\
\hline \Zb{T}{R} & \term,\cover : Decidable (\cite{finkel}). \reach,\dlf : Undecidable - subsumes \Zc{T} \\

\hline \Zb{R}{IR} & Refer to table \ref{summary} \\
\hline \Zb{R}{IT} & \term : Positivity hard (Thm.\ref{thm:skolem-hard}), Others : Undecidable - from \Zb{R}{I}\\
\hline \Zb{R}{RT} & \term,\cover : Decidable (\cite{finkel}). \reach,\dlf : Undecidable - subsumes \Zc{R}\\

\hline \Zb{I}{IR} & Undecidable : subsumes \Zc{I}\\
\hline \Zb{I}{IT} & Undecidable : subsumes \Zc{I}\\
\hline \Zb{I}{RT} & Undecidable : subsumes \Zc{I}\\

\hline \Zb{T}{IR} &  \term : Positivity hard (Thm.\ref{thm:skolem-hard}), Others : Undecidable - from \Zb{T}{I}\\
\hline \Zb{T}{IT} & \term : Positivity hard (Thm.\ref{thm:skolem-hard}), Others : Undecidable - from \Zb{T}{I}\\
\hline \Zb{T}{RT} & \term,\cover : Decidable (\cite{finkel}). \reach,\dlf : Undecidable - subsumes \Zc{T} \\

\hline \Za{IRcT} & Refer to table \ref{summary} \\

\hline 
\end{tabular}
\end{centering}

The current status of the relative expressiveness and decidability of termination, coverability, reachability and deadlockfreeness can be visualised for various classes of nets as follows -

\begin{center}
\begin{tikzpicture}
\draw (4,9.5) rectangle node{\Zb{R}{IR}} (6,10.5);
\draw (6,7.5) rectangle node{\Zb{R}{I}} (8,8.5);
\draw (2,7.5) rectangle node{\Za{IR}} (4,8.5);
\draw (6,5.5) rectangle node{\Zb{1R}{I}} (8,6.5);
\draw (2,3.5) rectangle node{\Za{I}} (4,4.5);
\draw[-latex,thick] (5,9.5) -- (7,8.5);
\draw[-latex,thick] (5,9.5) -- (3,8.5);
\draw[-latex,thick] (7,7.5) -- (7,6.5);
\draw[latex-latex,thick] (3,7.5) -- (3,4.5);
\draw[-latex,thick] (7,5.5) -- (3,4.5);
\draw[dotted] (1.5,9) node[above, xshift=0.5cm]{C,R,D} rectangle (4.5,3);
\draw[dotted] (1,11) node[above,  xshift=0.3cm]{T} rectangle (9,2.5);
\node at (7.5,6.65) {C?,$\neg$R};
\node at (5.6,8.65) {$\neg$C,$\neg$R};
\node at (5.6,10.65) {$\neg$C,$\neg$R};
\draw (10,6) rectangle (16,2.5);
\node at (12.2,5.5) {X: Problem X is decidable};
\node at (12.4,4.8) {$\neg$X: Problem X is undecidable};
\node at (12,4.1) {X?: Problem X is open};
\draw[-] (10,3.7) -- (16,3.7);
\node at (13,3.4) {C: Coverability D: Deadlockfreeness};
\node at (12.7,2.7) {T: Termination R: Reachability};
\end{tikzpicture}
\begin{tikzpicture}
\draw (1.5,9.5) rectangle node{\Zb{RcT}{IRcT}} (4.5,10.5);
\draw (4,7.5) rectangle node{\Zb{RcT}{I}} (6,8.5);
\draw (0,7.5) rectangle node{\Za{IcT}} (2,8.5);
\draw (0,3.5) rectangle node{\Za{I}} (2,4.5);
\draw (7,7.5) rectangle node{\Za{IT}} (9,8.5);
\draw (7,3.5) rectangle node{\Za{T}} (9,4.5);
\draw[-latex,thick] (3,9.5) -- (5,8.5);
\draw[-latex,thick] (3,9.5) -- (1,8.5);
\draw[-latex,thick] (1,7.5) -- (1,4.5);
\draw[-latex,thick] (8,7.5) -- (1,4.5);
\draw[-latex,thick] (8,7.5) -- (8,4.5);
\draw[dotted] (-0.5,9) node[above, xshift=0.5cm]{C,R,D} rectangle (2.5,3);
\draw[dotted] (-1,11) node[above,  xshift=0.3cm]{T} rectangle (6.5,2.5);
\node at (3.6,8.65) {$\neg$C,$\neg$R};
\node at (3.6,10.65) {$\neg$C,$\neg$R};
\node at (8.6,8.65) {T:S};
\node at (8.8,4.65) {T,C,$\neg$R};
\draw (10,6) rectangle (16,2.5);
\node at (12.2,5.5) {X: Problem X is decidable};
\node at (12.4,4.8) {$\neg$X: Problem X is undecidable};
\node at (12.8,4.1) {X:S: Problem X is Positivity Hard};
\draw[-] (10,3.7) -- (16,3.7);
\node at (13,3.4) {C: Coverability D: Deadlockfreeness};
\node at (12.7,2.7) {T: Termination R: Reachability};
\end{tikzpicture}
\end{center}

\end{document}